\newtheorem{thm}{Theorem}
\newtheorem{cor}[thm]{Corollary}
\newtheorem{lem}{Lemma}
\newtheorem{df}{Definition}
\newtheorem{rem}{Remark}
\newtheorem{prop}{Proposition}
\newcommand{\GFq}{\mathrm{GF}(q)}
\newcommand{\hA}{\widehat{A}}
\newcommand{\hg}{\widehat{g}}
\newcommand{\hcA}{\widehat{\mathcal{A}}}
\newcommand{\A}{\mathcal{A}}
\newcommand{\B}{\mathcal{B}}
\newcommand{\bhcA}{\boldsymbol{\widehat{\mathcal{A}}}}
\newcommand{\C}{\mathcal{C}}
\newcommand{\I}{\mathcal{I}}
\newcommand{\J}{\mathcal{J}}
\newcommand{\K}{\mathcal{K}}
\newcommand{\tK}{\widetilde{\mathcal K}}
\newcommand{\tk}{\widetilde{k}}
\newcommand{\M}{\mathcal{M}}
\newcommand{\R}{\mathcal{R}}
\newcommand{\cS}{\mathcal{S}}
\newcommand{\T}{\mathcal{T}}
\newcommand{\U}{\mathcal{U}}
\newcommand{\bU}{\overline{\mathcal{U}}}
\newcommand{\V}{\mathcal{V}}
\newcommand{\X}{\mathcal{X}}
\newcommand{\tcX}{\widetilde{\mathcal{X}}}
\newcommand{\tX}{\widetilde{X}}
\newcommand{\Y}{\mathcal{Y}}
\newcommand{\G}{\mathcal{G}}
\newcommand{\aalpha}{\boldsymbol{\alpha}}
\newcommand{\bbeta}{\boldsymbol{\beta}}
\newcommand{\kkappa}{\boldsymbol{\kappa}}
\newcommand{\haa}{\boldsymbol{\widehat a}}
\newcommand{\sfhaa}{\boldsymbol{\widehat{\mathsf a}}}
\newcommand{\ba}{\boldsymbol{a}}
\newcommand{\cc}{\boldsymbol{c}}
\newcommand{\ff}{\boldsymbol{f}}
\newcommand{\mm}{\boldsymbol{m}}
\newcommand{\bp}{\boldsymbol{p}}
\newcommand{\uu}{\boldsymbol{u}}
\newcommand{\vv}{\boldsymbol{v}}
\newcommand{\xx}{\boldsymbol{x}}
\newcommand{\yy}{\boldsymbol{y}}
\newcommand{\txx}{\widetilde{\boldsymbol{x}}}
\newcommand{\tx}{\widetilde{x}}
\newcommand{\XX}{\boldsymbol{X}}
\newcommand{\e}{\varepsilon}
\newcommand{\sfA}{\mathsf{A}}
\newcommand{\sfhA}{\widehat{\mathsf{A}}}
\newcommand{\sfaa}{\boldsymbol{\mathsf{a}}}
\newcommand{\sfcc}{\boldsymbol{\mathsf{c}}}
\newcommand{\Prod}{\operatornamewithlimits{\text{\Large $\times$}}}
\newcommand{\lrB}[1]{\left[{#1}\right]}
\newcommand{\lrb}[1]{\left\{{#1}\right\}}
\newcommand{\lrsb}[1]{\left({#1}\right)}
\newcommand{\lrbar}[1]{\left|{#1}\right|}
\newcommand{\co}{\mathrm{co}}
\newcommand{\cl}{\mathrm{cl}}
\newcommand{\Error}{\mathrm{Error}}
\newcommand{\zero}{\boldsymbol{0}}
\newcommand{\one}{\boldsymbol{1}}
\newcommand{\limn}{\lim_{n\to\infty}}
\newcommand{\Encoder}{\varphi}
\newcommand{\Decoder}{\psi}
\newcommand{\im}{\mathrm{Im}}
\newcommand{\bcA}{\boldsymbol{\mathcal{A}}}
\newcommand{\bchA}{\boldsymbol{\widehat{\mathcal{A}}}}
\newcommand{\bcAp}{\boldsymbol{\mathcal{A}}'}
\newcommand{\bpA}[1]{\bp_{\sfA_{#1}}}
\newcommand{\bpAp}[1]{\bp_{\sfA'_{#1}}}
\newcommand{\bphA}[1]{\bp_{\sfhA_{#1}}}
\newcommand{\pA}{p_{\sfA}}
\newcommand{\pAp}[1]{p_{\sfA'_{#1}}}
\newcommand{\phA}{p_{\sfhA}}
\newcommand{\alphaA}[1]{\alpha_{\sfA_{#1}}}
\newcommand{\betaA}[1]{\beta_{\sfA_{#1}}}
\newcommand{\alphaAp}[1]{\alpha_{\sfA'_{#1}}}
\newcommand{\betaAp}[1]{\beta_{\sfA'_{#1}}}
\newcommand{\alphahA}[1]{\alpha_{\sfhA_{#1}}}
\newcommand{\betahA}[1]{\beta_{\sfhA_{#1}}}
\newcommand{\aalphahA}[1]{\aalpha_{\sfhA_{#1}}}
\newcommand{\bbetahA}[1]{\bbeta_{\sfhA_{#1}}}
\newcommand{\aalphaA}{\aalpha_{\sfA}}
\newcommand{\bbetaA}{\bbeta_{\sfA}}
\newcommand{\aalphaAp}{\aalpha_{\sfA'}}
\newcommand{\bbetaAp}{\bbeta_{\sfA'}}
\title{
  Construction of Multiple Access Channel Codes Based on Hash Property
}
\author{
  Jun~Muramatsu
  and~Shigeki Miyake
  \thanks{J.~Muramatsu is with
    NTT Communication Science Laboratories, NTT Corporation,
    2-4, Hikaridai, Seika-cho, Soraku-gun, Kyoto 619-0237, Japan
    (E-mail: muramatsu.jun@lab.ntt.co.jp).
    S.~Miyake is with
    NTT Network Innovation Laboratories, NTT Corporation,
    Hikarinooka 1-1, Yokosuka-shi, Kanagawa 239-0847, Japan
    (E-mail: miyake.shigeki@lab.ntt.co.jp).
    This paper has been presented in part at \cite{ISIT2011b}
    and submitted to IEEE Transactions on Information Theory.
}}
\date{September 21, 2012}
\begin{document}
\maketitle

\begin{abstract}
  The aim of this paper is to introduce the construction of codes for a
  general discrete stationary memoryless multiple access channel based
  on the the notion of the hash property. Since an ensemble of sparse
  matrices has a hash property, we can use sparse matrices for code
  construction. Our approach has a potential advantage compared to the
  conventional random coding because it is expected that we can use some
  approximation algorithms by using the sparse structure of codes.
\end{abstract}
\begin{keywords}
  Shannon theory, hash property, linear codes, LDPC codes,
  sparse matrix, minimum-divergence encoding/decoding,
  multiple access channel.
\end{keywords}

\section{Introduction}
This paper describes the construction of multiple access channel codes.
In a multiple access channel, two or more senders send messages to a
common receiver. The capacity region has been derived
in~\cite{A71}\cite{L72} for a scenario where two senders have different
private messages but no common message to be sent. This work has been
extended in \cite{SW73c} to a scenario where two senders have different
private messages and a common message to be sent. The capacity region
for two or more senders has been described
in~\cite[Section 15.3.5]{CT}\cite[Chapter 4]{EK10} in which there is no
common message. In \cite{H79}, the capacity region has been derived for
a general multiple access channel in which two or more senders have
messages common to some users. Applications of Low Density Parity Check
(LDPC) codes to a multiple access channel have been introduced
in~\cite{BD02}\cite{McE01}\cite{ITW03}. Furthermore, there are many
theoretical/experimental studies regarding the construction of multiple
access channel codes by using LDPC codes, e.g.~\cite{ADU02}\cite{SPS05}.
It should be noted that they assumed channel noises to be additive.

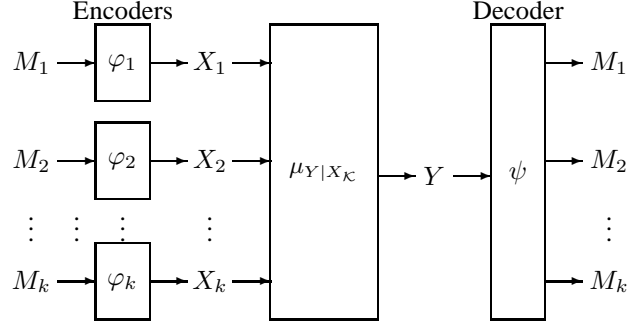
\begin{figure}[h]
  \begin{center}
    \unitlength 0.50mm
    \begin{picture}(157,90)(0,10)
      \put(3,78){\makebox(0,0){$M_1$}}
      \put(3,52){\makebox(0,0){$M_2$}}
      \put(3,36){\makebox(0,0){$\vdots$}}
      \put(3,20){\makebox(0,0){$M_{k}$}}
      \put(10,78){\vector(1,0){10}}
      \put(10,52){\vector(1,0){10}}
      \put(15,36){\makebox(0,0){$\vdots$}}
      \put(10,20){\vector(1,0){10}}
      \put(27,92){\makebox(0,0){Encoders}}
      \put(20,68){\framebox(14,20){$\Encoder_1$}}
      \put(20,42){\framebox(14,20){$\Encoder_2$}}
      \put(27,36){\makebox(0,0){$\vdots$}}
      \put(20,10){\framebox(14,20){$\Encoder_k$}}
      \put(34,78){\vector(1,0){10}}
      \put(50,78){\makebox(0,0){$X_1$}}
      \put(56,78){\vector(1,0){10}}
      \put(34,52){\vector(1,0){10}}
      \put(50,52){\makebox(0,0){$X_2$}}
      \put(56,52){\vector(1,0){10}}
      \put(50,36){\makebox(0,0){$\vdots$}}
      \put(34,20){\vector(1,0){10}}
      \put(50,20){\makebox(0,0){$X_k$}}
      \put(56,20){\vector(1,0){10}}
      \put(66,10){\framebox(28,78){\small $\mu_{Y|X_{\K}}$}}
      \put(94,48){\vector(1,0){10}}
      \put(109,48){\makebox(0,0){$Y$}}
      \put(114,48){\vector(1,0){10}}
      \put(131,92){\makebox(0,0){Decoder}}
      \put(124,10){\framebox(14,78){$\Decoder$}}
      \put(138,78){\vector(1,0){10}}
      \put(155,78){\makebox(0,0){$M_1$}}
      \put(138,52){\vector(1,0){10}}
      \put(155,52){\makebox(0,0){$M_2$}}
      \put(155,36){\makebox(0,0){$\vdots$}}
      \put(138,20){\vector(1,0){10}}
      \put(155,20){\makebox(0,0){$M_k$}}
    \end{picture}
  \end{center}
  \caption{Multiple Access Channel Coding: Private Messages}
  \label{fig:mac}
\end{figure}

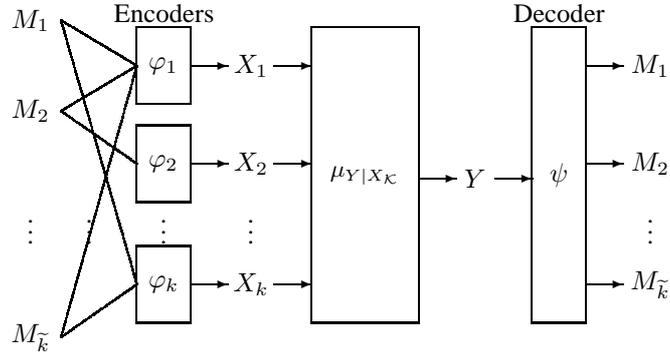
\begin{figure}[h]
  \begin{center}
    \unitlength 0.50mm
    \begin{picture}(157,95)(-5,5)
      \put(-8,90){\makebox(0,0){$M_1$}}
      \put(-8,66){\makebox(0,0){$M_2$}}
      \put(-8,36){\makebox(0,0){$\vdots$}}
      \put(-8,6){\makebox(0,0){$M_{\tk}$}}
      \qbezier(-0,90)(10,84)(20,78)
      \qbezier(-0,90)(10,55)(20,20)
      \qbezier(-0,66)(10,72)(20,78)
      \qbezier(-0,66)(10,59)(20,52)
      \qbezier(-0,6)(10,13)(20,20)
      \qbezier(-0,6)(10,42)(20,78)
      \put(7.5,36){\makebox(0,0){$\vdots$}}
      \put(27,92){\makebox(0,0){Encoders}}
      \put(20,68){\framebox(14,20){$\Encoder_1$}}
      \put(20,42){\framebox(14,20){$\Encoder_2$}}
      \put(27,36){\makebox(0,0){$\vdots$}}
      \put(20,10){\framebox(14,20){$\Encoder_k$}}
      \put(34,78){\vector(1,0){10}}
      \put(50,78){\makebox(0,0){$X_1$}}
      \put(56,78){\vector(1,0){10}}
      \put(34,52){\vector(1,0){10}}
      \put(50,52){\makebox(0,0){$X_2$}}
      \put(56,52){\vector(1,0){10}}
      \put(50,36){\makebox(0,0){$\vdots$}}
      \put(34,20){\vector(1,0){10}}
      \put(50,20){\makebox(0,0){$X_k$}}
      \put(56,20){\vector(1,0){10}}
      \put(66,10){\framebox(28,78){\small $\mu_{Y|X_{\K}}$}}
      \put(94,48){\vector(1,0){10}}
      \put(109,48){\makebox(0,0){$Y$}}
      \put(114,48){\vector(1,0){10}}
      \put(131,92){\makebox(0,0){Decoder}}
      \put(124,10){\framebox(14,78){$\Decoder$}}
      \put(138,78){\vector(1,0){10}}
      \put(155,78){\makebox(0,0){$M_1$}}
      \put(138,52){\vector(1,0){10}}
      \put(155,52){\makebox(0,0){$M_2$}}
      \put(155,36){\makebox(0,0){$\vdots$}}
      \put(138,20){\vector(1,0){10}}
      \put(155,20){\makebox(0,0){$M_{\tk}$}}
    \end{picture}
  \end{center}
  \caption{Multiple Access Channel Coding: Multiple Common Messages}
  \label{fig:han}
\end{figure}

\begin{figure}[h]
  \begin{center}
    \unitlength 0.50mm
    \begin{picture}(152,55)(0,10)
      \put(3,48){\makebox(0,0){$M_1$}}
      \put(8,48){\vector(1,0){12}}
      \put(3,16){\makebox(0,0){$M_2$}}
      \put(8,16){\vector(1,0){12}}
      \put(3,32){\makebox(0,0){$M_0$}}
      \put(13,40){\vector(1,0){7}}
      \put(13,24){\vector(1,0){7}}
      \put(8,32){\line(1,0){5}}
      \put(13,24){\line(0,1){16}}
      \put(27,60){\makebox(0,0){Encoders}}
      \put(20,34){\framebox(14,20){$\Encoder_1$}}
      \put(20,10){\framebox(14,20){$\Encoder_2$}}
      \put(34,44){\vector(1,0){10}}
      \put(49,44){\makebox(0,0){$X_1$}}
      \put(54,44){\vector(1,0){10}}
      \put(34,20){\vector(1,0){10}}
      \put(49,20){\makebox(0,0){$X_2$}}
      \put(54,20){\vector(1,0){10}}
      \put(64,10){\framebox(26,44){\small $\mu_{Y|X_1X_2}$}}
      \put(90,32){\vector(1,0){10}}
      \put(105,32){\makebox(0,0){$Y$}}
      \put(110,32){\vector(1,0){10}}
      \put(127,60){\makebox(0,0){Decoder}}
      \put(120,10){\framebox(14,44){$\Decoder$}}
      \put(134,48){\vector(1,0){10}}
      \put(149,48){\makebox(0,0){$M_1$}}
      \put(134,16){\vector(1,0){10}}
      \put(149,16){\makebox(0,0){$M_2$}}
      \put(134,32){\vector(1,0){10}}
      \put(149,32){\makebox(0,0){$M_0$}}
    \end{picture}
  \end{center}
  \caption{Two-user Multiple Access Channel Coding: Private and Common Messages}
  \label{fig:sw}
\end{figure}
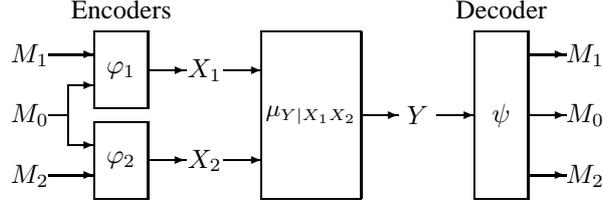

A contribution of this paper is to construct codes based on the notion
of the hash property~\cite{ISIT2010}\cite{HASH-BC}, which is a stronger
version of that introduced in \cite{HASH}\cite{HASH-UNIV}. Another
contribution is to construct codes by using sparse matrices for a
general discrete memoryless multiple access channel including asymmetric
one. We construct codes for the following scenarios:
\begin{itemize}
  \item Two or more senders have different private messages
  (Fig.\ref{fig:mac})
  \cite[Theorem 5 in Chapter 4]{EK10}\cite[Section 15.3.5]{CT},
  \item Two or more senders have messages common to senders
  (Fig.\ref{fig:han}) \cite{H79}, and
  \item Two senders have different private messages
  and a common message (Fig.\ref{fig:sw}) \cite{SW73c},
\end{itemize}
where additive channel noises are not assumed. It should be noted that
the first scenario includes two-sender scenario studied
in~\cite{A71}\cite{L72}. The last scenario is included in the second
scenario but we will discuss it separately because their code
constructions are different. The proof of all the theorems is based on
the notion of the hash property. It is an extension of the ensemble of
the random bin coding~\cite{C75}, the ensembles of linear
matrices~\cite{CSI82}, the universal class of hash functions~\cite{CW},
and the ensemble of sparse matrices~\cite{SWLDPC}. We use two lemmas,
one related to `saturation property\footnote{In~\cite{HASH}, they were
called `saturating property' and `collision-resistant property,'
respectively. We changed these terms following the suggestion of
Prof. T.S.~Han.}' (if the number of items is greater than the number of
bins then there is an assignment such that every bin contains at least
one item) and the other related to `collision-resistance
property$^{\text{1}}$' (if the number of bins is greater than the number
of items then there is an assignment such that every bin contains at
most one item) proved in \cite{HASH}\cite{HASH-BC}, where the lemma
related to the `collision-resistance property' is extended from a single
domain to multiple domains. They are reviewed in Section \ref{sec:hash}.
The saturation property is used to analyze the average encoding error
and the extended collision-resistance property is used to analyze the
average decoding error. It should be noted that the functions need not
be linear for the hash property but it is expected that the space and
time complexity of codes can be reduced compared with conventional
constructions by assuming the linearity of functions. In fact, we can
construct codes by using sparse matrices (with logarithmic column
degree) because an ensemble of sparse matrices has a hash
property~\cite{HASH-BC}. Although the implementation of encoding and
decoding operations of our approach is still intractable, our approach
has a potential advantage compared to the conventional random coding
presented in~\cite{CT}\cite{CSI82}\cite{EK10}\cite{H79}\cite{SW73c}
because it is expected\footnote{In fact, the direct application
of~\cite{FWK05}\cite{KFL01} does not provide good
performance. Implementation of our approach is left for a future
challenge.} that we can use some approximation algorithms such as the
sum-product algorithm~\cite{KFL01} and the linear programing
algorithm~\cite{FWK05} for encoding and decoding operations in the
proposed code with sparse matrix.

\section{Definitions and Notations}
\label{sec:def}

Throughout this paper, we use the following definitions and notations.
The cardinality of a set $\U$ is denoted by $|\U|$, $\U^c$ denotes the
complement of $\U$, and $\U\setminus\V\equiv\U\cap\V^c$ denotes the set
difference.

Column vectors and sequences are denoted in boldface. Let $A\uu$ denote
a value taken by a function $A:\U^n\to\bU$ at $\uu\in\U^n$, where $\U^n$
is the domain of the function and $\bU$ is the range of function. It
should be noted that $A$ may be nonlinear. When $A$ is a linear function
expressed by an $l\times n$ matrix, we assume that $\U\equiv\GFq$ is a
finite field and the range of functions is $\U^{l}$. For a set $\A$ of
functions, let $\im \A$ be defined as
\begin{align*}
  \im\A &\equiv \bigcup_{A\in\A}\{A\uu: \uu\in\U^n\}.
\end{align*}
We define a set $\C_A(\ba)$ as
\begin{align*}
  \C_A(\ba) &\equiv\{\uu: A\uu = \ba\}
\end{align*}
In the context of linear codes, $\C_A(\ba)$ is called a coset determined
by $\ba$. The random variables of a function $A$ and a vector
$\ba\in\im\A$ are denoted by the sans serif letters $\sfA$ and $\sfaa$,
respectively. It should be noted that some random variables are denoted
by the Roman letter (e.g. $M$, $U$, $V$, $X$, $Y$) which does not
represent a function.

For random variables $U$ and $V$, let $\mu_{U}$ be the probability
distribution and $\mu_{U|V}$ be the conditional probability
distribution. Then the entropy $H(U)$, the conditional entropy $H(U|V)$,
and the mutual information $I(U;V)$ are defined as
\begin{align*}
  H(U)
  &\equiv
  \sum_{u}\mu_U(u)\log\frac 1{\mu_U(u)}
  \\
  H(U|V)
  &\equiv
  \sum_{u,v}\mu_{U|V}(u|v)\mu_V(v)\log\frac 1{\mu_{U|V}(u|v)}
  \\
  I(U;V)
  &\equiv
  H(U)-H(U|V)
\end{align*}
where we assume a logarithm with base $2$ when the subscript of $\log$
is omitted. For random variables $U$, $V$, and $W$, let $I(U;V|W)$ be
the conditional mutual information defined as
\[
  I(U;V|W)\equiv H(U|W)-H(U|V,W).
\]

For $\uu\in\U^n$ and $\vv\in\V^n$, let $\nu_{\uu}$ and $\nu_{\uu|\vv}$
be the empirical distributions defined as
\begin{align}
  \nu_{\uu}(u)
  &\equiv
  \frac {|\{i\in\{1,\ldots,n\}: u_{i}=u\}|}n
  \label{eq:nu-uu}
  \\
  \nu_{\uu|\vv}(u|v)
  &\equiv \frac{\nu_{\uu\vv}(u,v)}{\nu_{\vv}(v)}
  \quad\text{for $v\in\V$ s.t. $\nu_{\vv}(v)>0$},
  \label{eq:nu-uu-vv}
\end{align}
where we use the relation
$\nu_{\uu|\vv}(u|v)\nu_{\vv}(v)=\nu_{\uu\vv}(u,v)$
even when $\nu_{\vv}(v)=0$. Let $p$ and $p'$ be probability
distributions on the same set $\U$ and let $q$ and $q'$ be conditional
probability distributions on the same set $\V$. Then divergence
$D(p\|p')$ and conditional divergence $D(q\|q'|p)$ are defined as
\begin{align*}
  D(p\parallel p')
  &\equiv
  \sum_{u\in\U}p(u)\log\frac{p(u)}{p'(u)}
  \\
  D(q\parallel q' | p)
  &\equiv
  \sum_{u\in\U} p(v)\sum_{v\in\V}q(v|u)\log\frac{q(v|u)}{q'(v|u)}.
\end{align*}

For the proof of theorems, we use the method of type developed
in~\cite{CK}, where we use the definition of a typical set introduced
in~\cite{HASH}\cite{UYE}. A set of typical sequences $\T_{U,\gamma}$ and
a set of conditionally typical sequences $\T_{U|V,\gamma}(\vv)$ are
defined as
\begin{align*}
  \T_{U,\gamma}
  &\equiv
  \lrb{\uu:
    D(\nu_{\uu}\|\mu_{U})<\gamma
  }
  \\
  \T_{U|V,\gamma}(\vv)
  &\equiv
  \lrb{\uu:
    D(\nu_{\uu|\vv}\|\mu_{U|V}|\nu_{\vv})<\gamma
  },
\end{align*}
respectively.
For $\uu\in\X^n$, $\vv\in\V^n$, the entropy $H(\uu)$, and the
conditional entropy $H(\uu|\vv)$ are defined as
\begin{align*}
  H(\uu)
  &\equiv
  \sum_{u}\nu_{\uu}(u)\log\frac 1{\nu_{\uu}(u)}
  \\
  H(\uu|\vv)
  &\equiv
  \sum_{u,v}\nu_{\uu|\vv}(u|v)\nu_{\vv}(v)\log\frac 1{\nu_{\uu|\vv}(u|v)},
\end{align*}
where $\nu_{\uu}$ and $\nu_{\uu|\vv}$ are defined as (\ref{eq:nu-uu})
and (\ref{eq:nu-uu-vv}), respectively. For $\gamma,\gamma'>0$, we define
\begin{align}
  \lambda_{\U}
  &\equiv \frac{|\U|\log(n+1)}n
  \label{eq:lambda}
  \\
  \iota_{\U}(\gamma)
  &\equiv
  -\sqrt{2\gamma}\log\frac{\sqrt{2\gamma}}{|\U|}
  \label{eq:iota}
  \\
  \iota_{\U|\V}(\gamma'|\gamma)
  &\equiv
  -\sqrt{2\gamma'}\log\frac{\sqrt{2\gamma'}}{|\U||\V|}
  +\sqrt{2\gamma}\log|\U|
  \label{eq:iotac}
  \\
  \eta_{\U}(\gamma)
  &\equiv
  -\sqrt{2\gamma}\log\frac{\sqrt{2\gamma}}{|\U|}
  +\frac{|\U|\log(n+1)}n
  \label{eq:def-eta}
  \\
  \eta_{\U|\V}(\gamma'|\gamma)
  &\equiv
  -\sqrt{2\gamma'}\log\frac{\sqrt{2\gamma'}}{|\U||\V|}
  +\sqrt{2\gamma}\log|\U|
  +\frac{|\U||\V|\log(n+1)}n,
  \label{eq:def-etac}
\end{align}
which comes from lemmas in Appendix~\ref{sec:type-theory} regarding the
method of types. It should be noted here that the product set
$\U\times\V$ is denoted by $\U\V$ when it appears in the subscript of
these functions.

For a mathematical statement $S$, we define $\chi(S)$ as
\begin{align*}
  \chi(S)
  &\equiv
  \begin{cases}
    1,&\text{if $S$ is true}
    \\
    0,&\text{if $S$ is false}.
  \end{cases}
\end{align*}

\section{Formal Description of Problems and Known Results}
\label{sec:review}

In this section, we review the problems of multiple access channel coding
and results regarding achievable regions. 

A multiple access channel has $k$ inputs and $1$ output. Let $\K$ be an
index set of the channel inputs, where $k\equiv|\K|$. Then the channel
is characterized by the conditional probability distribution
$\mu_{Y|X_{\K}}$, where $X_{\K}\equiv\{X_j\}_{j\in\K}$ is a $k$-tuple of
random variables corresponding to the inputs and $Y$ is a random
variable corresponding to the output. Let $\X_j$ be the alphabet of the
$j$-th channel input and $\Y$ be the alphabet of the channel output.

In the following, we review some coding scenarios that will be
discussed in subsequent sections. Let $\tK$ be an index set of messages
and $\tk\equiv|\tK|$. For each $i\in\tK$, let $\M_i$ be the alphabet of
the $i$-th message and $M_i$ be the random variable corresponding to the
$i$-th message, where we assume that the probability distribution of
$M_i$ is uniform on $\M_i$ for all $i\in\tK$. We also assume that random
variables $\{M_i\}_{i\in\tK}$ are mutually independent. Let
$p_{M_{\tK}}$ be the uniform distribution on $\M_{\tK}$. We use the
following notations:
\begin{align*}
  \Encoder_{\K}
  &\equiv
  \{\Encoder_j\}_{j\in\K}
  \\
  \M_{\tK}
  &\equiv\Prod_{i\in\tK}\M_i
  \\
  \mm_{\tK}
  &\equiv\lrb{\mm_i}_{i\in\tK},
  \quad\text{for given $\mm_i\in\M_i$, $i\in\tK$}
  \\
  R_{\tK}
  &\equiv\{R_i\}_{i\in\tK}.
\end{align*}
Let $\cl(\cdot)$ denote the closure of a region and  $\co(\cdot)$ denote
the closure of the convex hull of a region.

\subsection{Private Messages}

In this scenario, we assume that $\tK=\K$ and there are $k$ senders and
$k$ independent messages $M_{\K}$, where the $j$-th sender has access to
the $j$-th message $M_j$ and there is no common message. 

For a given block length $n$, a multiple access channel code
$(\Encoder_{\K},\Decoder)$ (Fig.\ref{fig:mac}) is defined by $k$
encoders $\Encoder_{\K}$ and one decoder $\Decoder$, where
\begin{align*}
  \Encoder_j:
  &\M_j\to\X_j^n\quad\text{for each}\ j\in\K
  \\
  \Decoder:
  & \Y^n\to\M_{\K}.
\end{align*}
Then the error probability of the code is defined as
\[
  \Error(\Encoder_{\K},\Decoder)
  \equiv
  \sum_{\substack{
      \mm_{\K}\in\M_{\K}\\
      \yy\in\Y^n
  }}
  \mu_{Y|X_{\K}}(\yy|\Encoder_{\K}(\mm_{\K}))
  p_{M_{\K}}(\mm_{\K})\chi(\Decoder(\yy)\neq\mm_{\K}).
\]
The rate $R_j$ of the $j$-th message is defined as
\[
  R_j\equiv\frac{\log|\M_j|}n
  \quad\text{for each}\ j\in\K.
\]
We call the rate vector $R_{\K}$ {\em achievable} if for all $\delta>0$
and all sufficiently large $n$, there is a code
$(\Encoder_{\K},\Decoder)$ with a rate vector $R_{\K}$ such that
\[
  \Error(\Encoder_{\K},\Decoder)<\delta.
\]

For a given $\{\mu_{X_j}\}_{j\in\K}$, let $\R(\{\mu_{X_j}\}_{j\in\K})$
be the set of all $k$-dimensional vectors $R_{\K}$ satisfying
\begin{equation}
  0\leq \sum_{j\in\J} R_j< I(X_{\J};Y|X_{\J^c})
  \quad\text{for all}\ \J\subset\K,
  \label{eq:sumRi}
\end{equation}
where the joint distribution $\mu_{X_{\K}Y}$ of random variable
$(X_{\K},Y)$ is given by
\begin{equation}
  \mu_{X_{\K}Y}(x_{\K},y)
  \equiv
  \mu_{Y|X_{\K}}(y|x_{\K})
  \lrB{\prod_{j\in\K}\mu_{X_j}(x_j)}.
  \label{eq:markov}
\end{equation}
For given $\mu_U$ and $\{\mu_{X_j|U}\}_{j\in\K}$, let
$\R(\mu_U,\{\mu_{X_j|U}\}_{j\in\K})$ be the set of all $k$-dimensional
vectors $R_{\K}$ satisfying
\begin{equation}
  0\leq \sum_{j\in\J} R_j< I(X_{\J};Y|U,X_{\J^c})
  \quad\text{for all}\ \J\subset\K,
  \label{eq:sumRi-ts}
\end{equation}
where the joint distribution $\mu_{UX_{\K}Y}$ of random variable
$(U,X_{\K},Y)$ is given by
\begin{equation}
  \mu_{UX_{\K}Y}(u,x_{\K},y)
  \equiv
  \mu_{Y|X_{\K}}(y|x_{\K})
  \lrB{\prod_{j\in\K}\mu_{X_j|U}(x_j|u)}\mu_{U}(u).
  \label{eq:markov-ts}
\end{equation}
Then the achievable region for this scenario is given as described below.
\begin{prop}[{\cite[Theorem 15.3.6]{CT}\cite[Theorem 4.5]{EK10}}]
\label{prop:region}
The achievable region for this scenario is given as
\begin{equation}
  \co\lrsb{\bigcup_{\{\mu_{X_j}\}_{j\in\K}}\R(\{\mu_{X_j}\}_{j\in\K})},
  \label{eq:region}
\end{equation}
which is equivalent to
\begin{equation}
  \bigcup_{\mu_U,\{\mu_{X_j}\}_{j\in\K}}
  \cl\lrsb{\R(\mu_U,\{\mu_{X_j|U}\}_{j\in\K})},
  \label{eq:region-ts}
\end{equation}
where $|\U|\leq k$.
\end{prop}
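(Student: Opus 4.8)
The plan is to take the achievability and converse for the region (\ref{eq:region}) as given from the cited references \cite{CT}\cite{EK10} (going back to Ahlswede and Liao), and to devote the proof to the two remaining assertions: that the convex‑hull form (\ref{eq:region}) coincides with the time‑sharing form (\ref{eq:region-ts}), and that in the latter it suffices to take $|\U|\leq k$.

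The starting point is that, for a fixed product distribution $p\equiv\{\mu_{X_j}\}_{j\in\K}$ inserted into (\ref{eq:markov}), the set function $f_p(\J)\equiv I(X_{\J};Y|X_{\J^c})$ is normalized, monotone nondecreasing, and submodular in $\J$; submodularity is the one place where the mutual independence of $X_{\K}$ enters, and it reduces to $I(X_{\A};Y|X_{\B})\leq I(X_{\A};Y|X_{\B},X_{\C})$ for disjoint $\A,\B,\C\subseteq\K$, which holds because $I(X_{\A};Y|X_{\B},X_{\C})-I(X_{\A};Y|X_{\B})=I(X_{\A};X_{\C}|Y,X_{\B})\geq 0$. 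Hence $\cl(\R(p))$ equals the polymatroid $P_{f_p}\equiv\lrb{R_{\K}\geq 0:\sum_{j\in\J}R_j\leq f_p(\J)\ \text{for all}\ \J\subseteq\K}$. For a general $(\mu_U,\{\mu_{X_j|U}\}_{j\in\K})$, writing $p_u\equiv\{\mu_{X_j|U}(\cdot|u)\}_{j\in\K}$, one has $I(X_{\J};Y|U,X_{\J^c})=\sum_u\mu_U(u)f_{p_u}(\J)$, so $\cl(\R(\mu_U,\{\mu_{X_j|U}\}_{j\in\K}))=P_{\sum_u\mu_U(u)f_{p_u}}$, and the Minkowski‑sum identity $P_{g+h}=P_g+P_h$ for polymatroid rank functions yields
\[
  \cl(\R(\mu_U,\{\mu_{X_j|U}\}_{j\in\K}))
  =\sum_u\mu_U(u)\,P_{f_{p_u}}.
\]
Each $P_{f_{p_u}}$ lies in $\co\lrsb{\bigcup_p\R(p)}$, which is convex and closed, so the $\mu_U$‑weighted sum does as well; this gives (\ref{eq:region-ts})$\subseteq$(\ref{eq:region}). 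Conversely, any point of $\co\lrsb{\bigcup_p\R(p)}$ is a limit of finite convex combinations $\sum_i\lambda_i R_{\K}^{(i)}$ with $R_{\K}^{(i)}\in\R(p^{(i)})$, $p^{(i)}\equiv\{\mu_{X_j}^{(i)}\}_{j\in\K}$; taking $U$ with $\mu_U(i)\equiv\lambda_i$ and $\mu_{X_j|U}(\cdot|i)\equiv\mu_{X_j}^{(i)}$ and using $\sum_i\lambda_i f_{p^{(i)}}(\J)=I(X_{\J};Y|U,X_{\J^c})$ shows each such combination lies in $\R(\mu_U,\{\mu_{X_j|U}\}_{j\in\K})$, so the point lies in the closure on the right‑hand side of (\ref{eq:region-ts}).

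To bound $|\U|$, note that the zero rate vector belongs to $\cl(\R(p))$ for every $p$ (the constraints in (\ref{eq:sumRi}) are all upper bounds), so $\bigcup_p\cl(\R(p))$ is a union of convex sets through a common point, hence connected, and $\co\lrsb{\bigcup_p\R(p)}=\co\lrsb{\bigcup_p\cl(\R(p))}$. By the Fenchel--Eggleston strengthening of Carath\'eodory's theorem, every point of the convex hull of a connected subset of the $k$‑dimensional rate space is a convex combination of at most $k$ of its points; running the time‑sharing construction above over these $\leq k$ points gives the claimed representation. Since the set of $(\mu_U,\{\mu_{X_j|U}\}_{j\in\K})$ with $|\U|\leq k$ is compact and the rank functions $f_{p_u}$ depend continuously on it, the corresponding union of polymatroids is closed, so the closure operations in (\ref{eq:region-ts}) do not force $|\U|$ beyond $k$.

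The step I expect to be the main obstacle is the polymatroid argument for (\ref{eq:region-ts})$\subseteq$(\ref{eq:region}): establishing submodularity of $\J\mapsto I(X_{\J};Y|X_{\J^c})$ under independent inputs (so that $\cl(\R(p))$ is genuinely a polymatroid) and then invoking the Minkowski‑sum identity. One can sidestep polymatroid theory by instead proving directly, by induction on $|\K|$ together with an ordering of the coordinates, that every $R_{\K}\in\cl(\R(\mu_U,\{\mu_{X_j|U}\}_{j\in\K}))$ admits a decomposition $R_{\K}=\sum_u\mu_U(u)R_{\K}^{(u)}$ with $R_{\K}^{(u)}\in\cl(\R(p_u))$, but this is not obviously shorter. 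A secondary point requiring care is making the connectedness and compactness claims precise enough to license both Fenchel--Eggleston and the closedness of the cardinality‑bounded region.
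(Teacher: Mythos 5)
The paper does not actually prove Proposition~\ref{prop:region}: it is imported from \cite{CT}\cite{EK10}, and the remark following it disposes of the equivalence of (\ref{eq:region}) and (\ref{eq:region-ts}) indirectly, by observing that both expressions are known single-letter characterizations of the same operational capacity region. Your proposal is therefore a genuinely different route: you take achievability and converse for (\ref{eq:region}) from the references and prove the equivalence with (\ref{eq:region-ts}) directly as a statement about subsets of $\mathbb{R}^k$, via the polymatroid structure of $\cl(\R(p))$ under independent inputs (your submodularity verification via $I(X_{\A};Y|X_{\B},X_{\C})-I(X_{\A};Y|X_{\B})=I(X_{\A};X_{\C}|Y,X_{\B})\geq 0$ is correct), the averaging identity $I(X_{\J};Y|U,X_{\J^c})=\sum_u\mu_U(u)f_{p_u}(\J)$, the Minkowski-sum theorem $P_{g+h}=P_g+P_h$, and Fenchel--Eggleston for $|\U|\leq k$. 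This is the standard direct argument (the polymatroid viewpoint going back to Han, combined with the usual cardinality-bound technique); it buys a self-contained geometric proof where the paper offers only a citation, at the cost of importing polymatroid facts the paper never needs. The step you flag as the main obstacle, the inclusion of (\ref{eq:region-ts}) in (\ref{eq:region}), is in fact fine.

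The genuine trouble spot is elsewhere and is twofold. First, the pivot $\cl(\R(p))=P_{f_p}$ (and likewise $\cl(\R(\mu_U,\{\mu_{X_j|U}\}_{j\in\K}))=P_g$) holds only when all the conditional mutual informations involved are strictly positive; with the paper's strict inequalities in (\ref{eq:sumRi}) and (\ref{eq:sumRi-ts}), $\R$ becomes empty as soon as one of them vanishes, and then its closure is empty, not the polymatroid. This is not cosmetic for the direction (\ref{eq:region})$\subseteq$(\ref{eq:region-ts}): for the deterministic two-user channel $Y=(X_1,\,X_2\chi(X_1{=}1))$, the point $(R_1,R_2)=(0,1)$ lies in (\ref{eq:region}) (take $\mu_{X_1}(1)\to1$ with $X_2$ uniform), but it lies in no $\cl(\R(\mu_U,\{\mu_{X_j|U}\}))$, because $I(X_2;Y|U,X_1)=1$ forces $X_1\equiv1$ and hence $I(X_1;Y|U,X_2)=0$, which empties $\R(\mu_U,\{\mu_{X_j|U}\})$. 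So with the definitions exactly as written the literal equivalence fails at such degenerate extreme points, and your final step (Fenchel--Eggleston decomposition followed by ``the point lies in $P_g=\cl(\R(\mu_U,\cdot))$'') is precisely where this surfaces; the clean fix is to prove the proposition in the non-strict form of the cited references (replace $<$ by $\leq$), where your argument does go through, or to add explicit handling of the vanishing-information cases. Second, in that same direction you cannot pass from ``each finite convex combination lies in some $\R(\mu_U,\cdot)$'' to ``the limit lies in the closure on the right-hand side'': the right-hand side of (\ref{eq:region-ts}) is a union of closures, not the closure of a union. Your closing paragraph (cardinality reduction to $|\U|\leq k$, compactness of that parameter set, continuity of the rank functions, closedness of the resulting union of polymatroids) is the correct repair and should be made to carry the whole limiting argument explicitly, again subject to the nondegeneracy caveat above.
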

\begin{rem}
It should be noted that this proposition includes the result
of~\cite{A71}\cite{L72} corresponding to the case of two encoders.
The equivalence  of the two regions (\ref{eq:region}) and
(\ref{eq:region-ts}) can be shown from \cite[Theorem 15.3.6]{CT}
and~\cite[Theorem 4.5]{EK10} by considering the operational definition
of capacity region.
\end{rem}

In Section~\ref{sec:mac-ts}, for a given
$R_{\K}\in\R(\mu_U,\{\mu_{X_j|U}\}_{j\in\K})$, we construct a code with
the rate vector $R_{\K}$ based on the coded time sharing technique.
It should be noted that we can construct a code with a rate vector
$R_{\K}\in\R(\{\mu_{X_j}\}_{j\in\K})$ by letting $U$ be a constant, that
is, $|\U|=1$. In fact,
$\R(\{\mu_{X_j}\}_{j\in\K})=\R(\mu_U,\{\mu_{X_j|U}\}_{j\in\K})$
when $U$ is a constant. The achievability of the region
(\ref{eq:region}) with a proposed code can be proved by using the
time-sharing argument.

\subsection{Multiple Common Messages}

In this scenario, we assume that there are $\tk$ independent messages
$M_{\tK}$ and $k$ encoders, where the $j$-th encoder has access to the
messages $M_{\tK_j}\equiv\{M_i\}_{i\in\tK_j}$ specified by
$\tK_j\subset\tK$ for each $j\in\K$.

For a given block length $n$, a multiple access channel code
$(\Encoder_{\K},\Decoder)$ (Fig.\ref{fig:han}) is defined by
$k$ encoders $\Encoder_{\K}\equiv\{\Encoder_j\}_{j\in\K}$ and one
decoder $\Decoder$, where
\begin{align*}
  \Encoder_j: &\M_{\tK_j}\to\X_j^n\quad\text{for each}\ j\in\K
  \\
  \Decoder: &\Y^n\to\M_{\K}.
\end{align*}
Then the error probability of the code is defined by
\[
  \Error(\Encoder_{\K},\Decoder)
  \equiv
  \sum_{\substack{
      \mm_{\tK}\in\M_{\tK}\\
      \yy\in\Y^n
  }}
  \mu_{Y|X_{\K}}(\yy|\Encoder_{\K}(\mm_{\tK}))
  p_{M_{\tk}}(\mm_{\tK})
  \chi(\Decoder(\yy)\neq\mm_{\tK}).
\]
For each $i\in\tK$, the rate $R_i$ of the $i$-th message is defined by
\[
  R_i\equiv\frac{\log|\M_i|}n.
\]
We call the rate vector $R_{\tK}$ {\em achievable} if for all $\delta>0$
and all sufficiently large $n$, there is a code
$(\Encoder_{\K},\Decoder)$ with a rate vector $R_{\tK}$ such that
\[
  \Error(\Encoder_{\K},\Decoder)<\delta.
\]

For each $i\in\tK$, let $\tX_i$ be an auxiliary random variable and
$\tcX_i$ is the alphabet of $\tX_i$. For a given
$\{\mu_{\tX_i}\}_{i\in\tK}$ and a set $\{f_j\}_{j\in\K}$ of functions
\[
  f_j:\tcX_{\tK_j}\to\X_j
  \quad\text{for each}\ j\in\K,
\]
let\footnote{The subscript `H' comes from the author Han of~\cite{H79}.}
$\R_{\mathrm H}(\{\mu_{\tX_i}\}_{i\in\tK},\{f_j\}_{j\in\K})$ be the set
of all $s$-dimensional vectors $R_{\tK}$ satisfying
\begin{equation}
  0\leq \sum_{i\in\I} R_i
  < I(\tX_{\I};Y|\tX_{\I^c})
  \quad\text{for all}\ \I\subset\tK,
  \label{eq:sumRi-han}
\end{equation}
where the joint distribution  $\mu_{\tX_{\tK}X_{\K}Y}$ of random
variable $(\tX_{\tK},X_{\K},Y)$ is  given by
\[
  \mu_{\tX_{\tK}X_{\K}Y}(\tx_{\tK},x_{\K},y)
  \equiv
  \mu_{Y|X_{\K}}(y|x_{\K})
  \lrB{\prod_{j\in\K}\chi(f_j(\tx_{\tK_j})=x_j)}
  \lrB{\prod_{i\in\tK}\mu_{\tX_i}(\tx_i)}.
\]
Then the achievable region for this scenario is given as described below.
\begin{prop}[{\cite[Theorem 4.1]{H79}}]
The achievable region for this scenario is given as
\begin{equation}
  \co\lrsb{
    \bigcup_{\{\mu_{\tX_i}\}_{i\in\tK},\{f_j\}_{j\in\K}}
    \R_{\mathrm H}(\{\mu_{\tX_i}\}_{i\in\tK},\{f_j\}_{j\in\K})
  },
  \label{eq:region-han}
\end{equation}
where 
\[
  |\tcX_i|\leq
  |\tK|+\prod_{\substack{
      j\in\K:\\
      i\in\tK_j
  }}
  |\X_j|
  \quad\text{for all}\ i\in\tK.
\]
\end{prop}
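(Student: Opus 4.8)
The converse inclusion and the cardinality bounds $|\tcX_i|\le|\tK|+\prod_{j\in\K:\,i\in\tK_j}|\X_j|$ are exactly those of \cite[Theorem 4.1]{H79}; what I would reprove is the achievability -- that a code built from functions with the hash property attains the region $(\ref{eq:region-han})$. Fix $\{\mu_{\tX_i}\}_{i\in\tK}$, $\{f_j\}_{j\in\K}$, and a rate vector $R_{\tK}$ in the interior of $\R_{\mathrm H}(\{\mu_{\tX_i}\}_{i\in\tK},\{f_j\}_{j\in\K})$, so that $\sum_{i\in\I}R_i<I(\tX_{\I};Y|\tX_{\I^c})$ holds strictly for every nonempty $\I\subset\tK$. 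The plan is: for each $i\in\tK$ pick a function $\sfA_i$ from an ensemble with the hash property whose range has size $\approx 2^{nR_i}$, identify $\mm_i\in\M_i$ with the coset $\C_{\sfA_i}(\mm_i)\subset\tcX_i^n$, and take the codeword $\txx_i(\mm_i)\in\arg\min_{\txx\in\C_{\sfA_i}(\mm_i)}D(\nu_{\txx}\|\mu_{\tX_i})$ to be a minimum-divergence representative of that coset. Writing $\txx_{\J}(\mm_{\J})\equiv(\txx_i(\mm_i))_{i\in\J}$, encoder $j$ forms $\txx_{\tK_j}(\mm_{\tK_j})$ and transmits $\xx_j=f_j(\txx_{\tK_j}(\mm_{\tK_j}))$ coordinatewise; since the per-message codebook $\txx_i(\cdot)$ is shared by every encoder that has access to message $i$, the senders are automatically consistent. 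The decoder outputs the $\widehat\mm_{\tK}$ minimizing $D(\nu_{\txx_{\tK}(\widehat\mm_{\tK})|\yy}\|\mu_{\tX_{\tK}|Y}|\nu_{\yy})$, breaking ties arbitrarily.

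For the encoding error I would use the saturation lemma of Section~\ref{sec:hash}. Taking $\I=\{i\}$ in $(\ref{eq:sumRi-han})$ gives $R_i<I(\tX_{\{i\}};Y|\tX_{\tK\setminus\{i\}})\le H(\tX_i)$, using that the $\tX_i$ are independent; hence for all but a vanishing fraction of the ensemble every coset $\C_{\sfA_i}(\mm_i)$ meets $\T_{\tX_i,\gamma}$, so every codeword $\txx_i(\mm_i)$ is typical. The hash property moreover forces any subcollection of codewords indexed by distinct messages to be jointly typical -- the analogue of the independence of random codewords -- so for the transmitted tuple $\nu_{\txx_{\tK}(\mm_{\tK})}$ is close to $\prod_{i\in\tK}\mu_{\tX_i}$; combined with the standard conditional-type (channel-LLN) estimate for the memoryless channel, this gives $(\txx_{\tK}(\mm_{\tK}),\yy)\in\T_{\tX_{\tK}Y,\gamma}$ with probability at least $1-\delta/2$.

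For the decoding error, a mistake forces a nonempty pattern $\I=\{i:\widehat\mm_i\neq\mm_i\}$; on $\I^c$ one has $\widehat\mm_i=\mm_i$, and minimality of the decoding metric together with the chain rule for conditional divergence forces $\txx_{\I}(\widehat\mm_{\I})\in\T_{\tX_{\I}|\tX_{\I^c}Y,\gamma'}(\txx_{\I^c}(\mm_{\I^c}),\yy)$. I would bound the probability of this by a union over $\I$ and over the wrong $\widehat\mm_{\I}$; the multiple-domain collision-resistance lemma -- the single-domain lemma of \cite{HASH} extended in \cite{HASH-BC} to a product of $|\I|$ factors -- then replaces the sum over the $\prod_{i\in\I}2^{nR_i}$ cosets by the number of jointly typical target tuples, which is at most $2^{n[H(\tX_{\I}|\tX_{\I^c}Y)+o(1)]}$, while each coset is selected with probability at most $2^{-n[H(\tX_{\I})-o(1)]}$, where $H(\tX_{\I})=\sum_{i\in\I}H(\tX_i)$ by independence. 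Since $H(\tX_{\I})-H(\tX_{\I}|\tX_{\I^c}Y)=I(\tX_{\I};Y|\tX_{\I^c})$, the decoding error is at most $\sum_{\emptyset\neq\I\subset\tK}2^{n[\sum_{i\in\I}R_i-I(\tX_{\I};Y|\tX_{\I^c})+o(1)]}$, the $o(1)$ absorbing the type-counting slacks $\iota_{\cdot}(\gamma')$, $\eta_{\cdot}(\cdot)$, $\lambda_{\cdot}$ of Section~\ref{sec:def}; choosing $\gamma,\gamma'$ small and $n$ large makes every exponent negative, so this term is below $\delta/2$.

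Adding the two estimates and using once more that all but a vanishing fraction of tuples $(\sfA_1,\dots,\sfA_{\tk})$ simultaneously satisfy the $\tk$ encoding conditions and the decoding conditions indexed by all nonempty $\I\subset\tK$, there is a deterministic code of rate $R_{\tK}$ with $\Error(\Encoder_{\K},\Decoder)<\delta$; hence every interior point of every $\R_{\mathrm H}(\{\mu_{\tX_i}\}_{i\in\tK},\{f_j\}_{j\in\K})$ is achievable. Concatenating codes for two parameter choices on a $\lambda n$ versus $(1-\lambda)n$ split of the block (splitting each $\M_i$ accordingly) and then taking closures upgrades this to the convex hull $(\ref{eq:region-han})$. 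The step I expect to be hardest is the decoding bound: the collision-resistance estimate has to be run uniformly over all $2^{\tk}-1$ error patterns while the correct coordinates $\txx_{\I^c}(\mm_{\I^c})$ and the received word $\yy$ are held fixed -- which is exactly why the single-domain collision-resistance lemma has to be promoted to a product of $|\I|$ domains -- and the type-theoretic slacks must be kept uniformly small across all of them; establishing the joint typicality of the independently selected codewords in the encoding step is the secondary technical hurdle.
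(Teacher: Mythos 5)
Your outline defers the converse and the cardinality bound to \cite{H79}, which is consistent with the paper: the proposition itself is cited, and the paper's own contribution for this scenario (Section~\ref{sec:han}) is only the hash-property-based achievability, obtained by reducing the common-message problem to the private-message problem for the induced channel $\mu_{Y|\tX_{\tK}}$ and invoking Corollary~\ref{thm:mac}. Your encoder architecture (one codeword $\txx_i$ per message index $i\in\tK$, encoder $j$ applying $f_j$ coordinatewise to $\{\txx_i\}_{i\in\tK_j}$) matches the paper's reduction. The per-message code, however, does not: you use a \emph{single} function $\sfA_i$ with $|\im\A_i|\approx 2^{nR_i}$ and take the minimum-divergence representative of the whole bin $\C_{\sfA_i}(\mm_i)$, whereas the paper (via Corollary~\ref{thm:mac}, i.e.\ Theorem~\ref{thm:mac-ts} with $U$ constant) uses a \emph{pair} $(A_i,A'_i)$, with $A'_i$ carrying the message at rate $R_i$, $A_i$ carrying a shared random vector $\ba_i$ at the auxiliary rate $r_i=H(\tX_i)-R_i-\e_i$, so that the joint bin rate $r_i+R_i$ is essentially $H(\tX_i)$ and the decoder searches only the coset $\C_{A_{\tK}}(\ba_{\tK})$, recovering messages afterwards by applying $A'_{\tK}$.

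This difference is not cosmetic: the two steps you yourself flag as hardest genuinely fail in the single-hash construction. First, the claim that ``the hash property forces any subcollection of codewords indexed by distinct messages to be jointly typical'' is precisely what the paper warns is \emph{not} automatic (see the remark following (\ref{eq:error-sum})); the paper enforces it through condition (\ref{eq:MACk+1}), and the probability bound for its violation compares the size of the low-entropy set $\G(\uu)$, about $2^{n[\sum_j(r_j+R_j)-\gamma]}$, with the total number of joint bins $\prod_j|\im\hcA_j|=2^{n\sum_j[r_j+R_j]}$ --- a bound that only closes because $r_j+R_j\approx H(X_j)$. With bins only at rates $R_i$ the same union bound gives roughly $2^{n[\sum_i H(\tX_i)-\gamma-\sum_i R_i]}$, which diverges. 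Second, your decoding estimate rests on ``each coset is selected with probability at most $2^{-n[H(\tX_{\I})-o(1)]}$,'' and there is no source for this: in your construction the wrong codeword $\txx_i(\widehat\mm_i)$ is the minimum-divergence element of $\C_{\sfA_i}(\widehat\mm_i)$, a deterministic functional of $\sfA_i$, and neither (\ref{eq:hash}), (\ref{eq:whash}), nor Lemmas~\ref{lem:CRP} and~\ref{lem:multi-CRP} control where the argmin of a bin lands --- they bound collision probabilities of fixed sequences only. The only hashing randomness available per message has rate $R_i\ll H(\tX_i)$, so the factor $2^{-nH(\tX_i)}$ cannot be extracted; a bin of rate $R_i$ contains about $2^{n[H(\tX_i)-R_i]}$ typical sequences, and nothing in the hash property makes its minimum-divergence representative behave like a uniform sample from the typical set. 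The paper's construction avoids both obstacles: the transmitted $\xx_{\tK}$ is fixed, the decoder's ambiguity set is the coset $\C_{A_{\tK}}(\ba_{\tK})$ at rates $r_i$ satisfying $\sum_{i\in\I}r_i>H(\tX_{\I}|\tX_{\I^c},Y)+\e$, and Lemma~\ref{lem:multi-CRP} applies verbatim. To repair your argument you would need to reintroduce the auxiliary functions $A_i$ and shared vectors $\ba_i$ --- at which point you are back to the paper's reduction to Corollary~\ref{thm:mac}.
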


In Section \ref{sec:han}, for a given
$\R_{\mathrm H}(\{\mu_{\tX_i}\}_{i\in\tK},\{f_j\}_{j\in\K})$, we
construct a code with the rate vector $R_{\tK}$. The achievability of
region (\ref{eq:region-han}) with a proposed code can be proved by using
the time-sharing argument.

In the following, let us consider a scenario (Fig.\ref{fig:sw}) in which
one of two senders has access to messages $M_0$ and $M_1$ and another
sender has access to messages $M_0$ and $M_2$, where $M_0$ denotes a
common message. It is a special case of the above scenario, where
$\K\equiv\{1,2\}$, $\tK\equiv\{0,1,2\}$, $\tK_1\equiv\{0,1\}$, and
$\tK_2\equiv\{0,2\}$.

Let $R_0$ be the encoding rate of the common message and $R_1$ and $R_2$
be the encoding rate of the private message of the respective encoders.
Let\footnote{The subscript `SW' comes from the authors Slepian and Wolf
of~\cite{SW73c}.} $\R_{\mathrm SW}(\mu_{X_0},\mu_{X_1|X_0},\mu_{X_2|X_0})$
be the set of all $(R_0,R_1,R_2)$ satisfying
\begin{gather}
  R_0\geq 0
  \label{eq:R0-sw}
  \\
  0\leq R_1 < I(X_1;Y|X_0,X_2)
  \label{eq:R1-sw}
  \\
  0\leq R_2 < I(X_2;Y|X_0,X_1)
  \label{eq:R2-sw}
  \\
  R_1+R_2 < I(X_1,X_2;Y|X_0)
  \label{eq:R1R2-sw}
  \\
  R_0+R_1+R_2 < I(X_1,X_2;Y)
  \label{eq:R0R1R2-sw}
\end{gather}
where the joint distribution  $\mu_{X_0X_1X_2Y}$
of random variables $(X_0,X_1,X_2,Y)$ is  given by
\begin{align}
  \mu_{X_0X_1X_2Y}(x_0,x_1,x_2,y)
  &\equiv
  \mu_{Y|X_1X_2}(y|x_1,x_2)\mu_{X_1|X_0}(x_1|x_0)
  \mu_{X_2|X_0}(x_2|x_0)
  \mu_{X_0}(x_0).
  \label{eq:markov-sw}
\end{align}
It should be noted that (\ref{eq:markov-sw}) implies the fact that
the right hand side of (\ref{eq:R0R1R2-sw}) is equal to
$I(X_1X_2X_0;Y)$. Then, the rate region is given as described below.
\begin{prop}[\cite{SW73c}]
For the scenario in which two receivers have access to their private
message and a common message, the achievable region is given as
\begin{equation}
  \co\lrsb{
    \bigcup_{\mu_{X_0},\mu_{X_1|X_0},\mu_{X_2|X_0}}
    \R_{\mathrm SW}(\mu_{X_0},\mu_{X_1|X_0},\mu_{X_2|X_0})
  },
  \label{eq:region-sw}
\end{equation}
where 
\[
  |\X_0|\leq\min\{|\Y|+3,|\X_1||\X_2|+2\}.
\]
\end{prop}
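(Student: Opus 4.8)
\medskip
\noindent\emph{Proof idea.}
Since this is the two-user special case of the multiple-common-message scenario ($\K\equiv\{1,2\}$, $\tK\equiv\{0,1,2\}$, $\tK_1\equiv\{0,1\}$, $\tK_2\equiv\{0,2\}$), the region is in principle contained in the proposition containing~(\ref{eq:region-han}); I would nonetheless prove it directly, in the usual three steps. For achievability I would use superposition random coding: fix $\mu_{X_0},\mu_{X_1|X_0},\mu_{X_2|X_0}$ and a rate triple satisfying~(\ref{eq:R0-sw})--(\ref{eq:R0R1R2-sw}) strictly; generate $|\M_0|$ independent cloud centres $\xx_0(\mm_0)$ with i.i.d.\ components drawn from $\mu_{X_0}$; for each $\mm_0$ generate $|\M_1|$ satellites $\xx_1(\mm_0,\mm_1)$ whose $i$-th component is drawn from $\mu_{X_1|X_0}(\cdot|x_{0,i}(\mm_0))$ and $|\M_2|$ satellites $\xx_2(\mm_0,\mm_2)$ likewise, conditionally independent given the cloud centre; let encoder $j$ send $\xx_j(\mm_0,\mm_j)$ and let the decoder return the unique message triple whose codewords are jointly typical with $\yy$ for~(\ref{eq:markov-sw}). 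Union-bounding the ensemble-average error over competing triples, grouped according to which of $\mm_0,\mm_1,\mm_2$ are wrong and exploiting the conditional independence of $X_1$ and $X_2$ given $X_0$ to collapse the satellite terms, one is left with four nontrivial events that vanish exactly when $R_1<I(X_1;Y|X_0,X_2)$ (only $\mm_1$ wrong), $R_2<I(X_2;Y|X_0,X_1)$ (only $\mm_2$ wrong), $R_1+R_2<I(X_1,X_2;Y|X_0)$ ($\mm_1,\mm_2$ wrong, $\mm_0$ right), and $R_0+R_1+R_2<I(X_0,X_1,X_2;Y)=I(X_1,X_2;Y)$ ($\mm_0$ wrong, which forces a fresh cloud centre and fresh satellites; the final equality is the Markov relation $X_0-(X_1,X_2)-Y$ noted after~(\ref{eq:markov-sw})). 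This yields a code for every point of $\R_{\mathrm SW}(\mu_{X_0},\mu_{X_1|X_0},\mu_{X_2|X_0})$; a time-sharing argument then gives the convex hull and the $\delta>0$ slack in the definition of achievability gives the closure, so all of~(\ref{eq:region-sw}) is achievable. (This achievability is re-derived later in the paper from the hash property.)

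For the converse I would begin from Fano's inequality, $H(M_0,M_1,M_2|Y^n)\le n\e_n$ with $\e_n\to0$ for any code sequence with $\Error\to0$, and then single-letterize. The key choice is the auxiliary: with $Q$ uniform on $\{1,\dots,n\}$ and independent of the messages, set $X_0\equiv(M_0,Q)$, $X_1\equiv X_{1,Q}$, $X_2\equiv X_{2,Q}$, $Y\equiv Y_Q$; since $X_{1,i}$ depends only on $(M_0,M_1)$, $X_{2,i}$ only on $(M_0,M_2)$, $M_1$ and $M_2$ are independent given $M_0$, and the channel is memoryless, the induced joint law is exactly of the product form~(\ref{eq:markov-sw}). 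The standard manipulations --- discarding redundant conditioning and using that $Y_i$ is conditionally independent of everything else given $(X_{1,i},X_{2,i})$ --- then give
\[
  nR_1=H(M_1|M_0,M_2)\le I(M_1;Y^n|M_0,M_2)+n\e_n\le\sum_i I(X_{1,i};Y_i|M_0,X_{2,i})+n\e_n=nI(X_1;Y|X_0,X_2)+n\e_n,
\]
and analogously~(\ref{eq:R2-sw}) and~(\ref{eq:R1R2-sw}); for the sum rate one arrives at
\[
  n(R_0+R_1+R_2)\le\sum_i I(X_{1,i},X_{2,i};Y_i)+n\e_n=nI(X_1,X_2;Y|Q)+n\e_n\le nI(X_1,X_2;Y)+n\e_n,
\]
the last step using the Markov chain $Q-(X_1,X_2)-Y$ so that conditioning on $Q$ cannot increase $I(X_1,X_2;Y)$. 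Hence every achievable triple lies in the closure of $\bigcup_{\mu_{X_0},\mu_{X_1|X_0},\mu_{X_2|X_0}}\R_{\mathrm SW}(\mu_{X_0},\mu_{X_1|X_0},\mu_{X_2|X_0})$; together with achievability this identifies the region with~(\ref{eq:region-sw}) and shows, incidentally, that the union is already convex up to closure.

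For the cardinality bound I would invoke the support lemma applied to the union over $\mu_{X_0}$ with the conditionals $\mu_{X_1|X_0},\mu_{X_2|X_0}$ fixed: $\R_{\mathrm SW}$ depends on $\mu_{X_0}$ only through the four right-hand sides of~(\ref{eq:R1-sw})--(\ref{eq:R0R1R2-sw}), and these are recoverable either from $\mu_{X_1X_2}$ together with $H(Y|X_0,X_1)$, $H(Y|X_0,X_2)$, $H(Y|X_0)$ --- a total of $|\X_1||\X_2|+2$ real functionals of $\mu_{X_0}$ --- or from $\mu_Y$ together with $H(Y|X_1,X_2)$, $H(Y|X_0,X_1)$, $H(Y|X_0,X_2)$, $H(Y|X_0)$ --- a total of $|\Y|+3$ functionals; the support lemma then permits restricting $\mu_{X_0}$ to a distribution supported on at most $\min\{|\Y|+3,|\X_1||\X_2|+2\}$ points without changing the region. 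I expect the main obstacle to be the sum-rate step of the converse: extracting $I(X_1,X_2;Y)$ there (rather than the conditional $I(X_1,X_2;Y|X_0)$ that the naive bound produces) is precisely the place where the choice of auxiliary --- folding both the common message $M_0$ and the time index $Q$ into $X_0$ --- and the Markov chain $Q-(X_1,X_2)-Y$ are needed; the typicality error analysis in achievability and the Carath\'eodory bookkeeping for the cardinality bound are otherwise routine.
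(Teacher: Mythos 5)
Your outline is sound, but be aware that the paper itself contains no proof of this proposition: it is quoted as a known result from \cite{SW73c} (with the equivalence to Han's region (\ref{eq:region-han}) established in \cite{H79}), and the only part the paper re-derives is the direct part, in Section \ref{sec:sw} and Theorem \ref{thm:sw}, by a genuinely different technique. There, achievability of points in $\R_{\mathrm SW}$ is obtained not by drawing an i.i.d.\ superposition codebook but by partitioning $\X_0^n,\X_1^n,\X_2^n$ with function pairs $(A_j,A'_j)$ taken from ensembles with the hash property: the cloud centre and satellites are produced by minimum-divergence selection inside the cosets $\C_{A_jA'_j}(\ba_j,\mm_j)$, the encoding error is controlled by the saturation property (Lemma \ref{lem:SP}), the minimum-divergence decoder's error by the multi-domain collision-resistance property (Lemma \ref{lem:multi-CRP}), and the auxiliary constraints (\ref{eq:R0-sw-hash})--(\ref{eq:R0R2-sw-hash}) that this construction requires are removed afterwards by rate splitting and Fourier--Motzkin elimination, with time sharing supplying the convex hull in (\ref{eq:region-sw}). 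What the hash-property route buys is realizability with sparse matrices, which is the point of the paper; what your classical route buys is a complete proof of the cited proposition, since you also supply the converse (Fano with the auxiliary $X_0\equiv(M_0,Q)$, which indeed induces the product form (\ref{eq:markov-sw}) and yields the unconditioned sum-rate bound through the Markov chain $Q-(X_1,X_2)-Y$) and the cardinality bound (your functional counts $|\X_1||\X_2|-1+3$ and $|\Y|-1+4$ match the stated $\min\{|\Y|+3,|\X_1||\X_2|+2\}$), neither of which appears anywhere in this paper. So, judged as a blind proof of the statement, your superposition-coding/Fano/support-lemma argument is correct at the level of a sketch and is the standard one; it simply is not the argument used here for the one direction the paper does prove.
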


\begin{rem}
It should be noted that region (\ref{eq:region-sw}) is equivalent to
the region obtained from (\ref{eq:region-han}). This has been proven
in~\cite{H79}.
\end{rem}

In Section \ref{sec:sw}, for given
$(R_0,R_1,R_2)\in\R_{\mathrm SW}(\mu_{X_0},\mu_{X_1|X_0},\mu_{X_2|X_0})$,
we construct a code with  the rate vector $(R_0,R_1,R_2)$.
The construction is a typical example of the superposition coding introduced
in~\cite{SW73c} based on the hash property, and it is different from the
construction presented in Section \ref{sec:han}. The achievability of
region (\ref{eq:region-sw}) with a proposed code can be proved by using
the time-sharing argument.

\section{$(\aalpha,\bbeta)$-hash property}
\label{sec:hash}

In this section, we introduce the hash property first introduced
in~\cite{ISIT2010}\cite{HASH-BC} and its implications. This notion is
used for the proof of theorems.

\subsection{Formal Definition}
Here, we introduce the hash property for an ensemble of functions. It
has been introduced in \cite{ISIT2010}\cite{HASH-BC} and requires
stronger conditions than those introduced in \cite{HASH}.
\begin{df}[\cite{HASH-BC}\cite{ISIT2010}]
Let $\bcA\equiv\{\A^{(n)}\}_{n=1}^{\infty}$ be a sequence of sets such
that $\A^{(n)}$ is a set of functions $A:\U^n\to\im\A^{(n)}$. For a
probability distribution $p_{\sfA,n}$ on $\A^{(n)}$, we call a sequence
$(\bcA,\bpA{})\equiv\{(\A^{(n)},p_{\sfA,n})\}_{n=1}^{\infty}$
an {\em ensemble}. Then, $(\bcA,\bpA{})$ has a
$(\aalphaA{},\bbetaA{})$-{\em hash
 property}\footnote{In~\cite{HASH-BC}\cite{ISIT2010}\cite{ISIT2011a}\cite{ISIT2011b},
it is called the `strong hash property.' Throughout this paper, we call
it simply the `hash property.'} (or simply {\em hash property}) if
there are two sequences $\aalphaA\equiv\{\alphaA{}(n)\}_{n=1}^{\infty}$
and $\bbetaA\equiv\{\betaA{}(n)\}_{n=1}^{\infty}$, which depend on
$\{p_{\sfA,n}\}_{n=1}^{\infty}$, such that
\begin{align}
  &\limn \alphaA{}(n)=1
  \tag{H1}
  \label{eq:alpha}
  \\
  &\limn \betaA{}(n)=0
  \tag{H2}
  \label{eq:beta}
\end{align}
and
\begin{align}
  \sum_{\substack{
      \uu'\in\U^n\setminus\{\uu\}:
      \\
      p_{\sfA,n}(\{A: A\uu = A\uu'\})>\frac{\alphaA{}(n)}{|\im\A_n|}
  }}
  p_{\sfA,n}\lrsb{\lrb{A: A\uu = A\uu'}}
  \leq
  \betaA{}(n)
  \tag{H3}
  \label{eq:hash}
\end{align}
for any $n$ and $\uu\in\U^n$. Throughout this paper, we omit the
dependence of $\A$, $\pA$, $\alphaA{}$ and $\betaA{}$ on $n$.
\end{df}
\begin{rem}
In~\cite{HASH}\cite{ISIT2010}, an ensemble is required to satisfy the
condition
\[
  \limn \frac1n\log\frac{|\bU^{(n)}|}{|\im\A^{(n)}|}=0,
\]
where $\bU^{(n)}$ is the range of functions. This condition is omitted
because it is unnecessary for the results reported in this paper.
\end{rem}

Let us remark on the condition (\ref{eq:hash}). This condition requires
the sum of the collision probabilities
$p_{\sfA}\lrsb{\lrb{A: A\uu = A\uu'}}$, which is greater than
$\alphaA{}/|\im\A|$, to be bounded by $\betaA{}$, where the sum is taken
over all $\uu'$ except $\uu$. For an ensemble of sparse matrices,
$\alphaA{}$ represents the difference between $(\bcA,\bpA{})$ and the
ensemble of all linear matrices with uniform distribution, and
$\betaA{}$ represents the upper bound of the probability that the set
$\{\uu\in\U^n:A\uu=0\}$, which is called a code in the context of linear
codes, has low weight vectors. It should be noted that this condition
implies
\begin{equation}
  \sum_{\substack{
      \uu\in\T
      \\
      \uu'\in\T'
  }}
  \pA\lrsb{\lrb{A: A\uu = A\uu'}}
  \leq
  |\T\cap\T'|
  +
  \frac{|\T||\T'|\alphaA{}}{|\im\A|}
  +
  \min\{|\T|,|\T'|\}\betaA{}
  \tag{H3'}
  \label{eq:whash}
\end{equation}
for any $\T,\T'\subset\U^n$, which is introduced in \cite{HASH}.
The stronger condition (\ref{eq:hash}) is required for
Lemmas~\ref{lem:hash-AB} and~\ref{lem:multi-CRP}, which will appear
later.

It should be noted that when $\A$ is a two-universal class of hash
functions \cite{CW} and  $\pA$ is the uniform distribution on $\A$,
then $(\bcA,\bpA{})$ has a $(\one,\zero)$-hash property, where random
bin coding \cite{C75} and the set of all linear functions \cite{CSI82}
are examples of the two-universal class of hash functions.
It is proved in \cite[Section III-B]{HASH-BC} that an ensemble of sparse
matrices has a hash property. From this fact, this ensemble of sparse
matrices can be applied to all results in this paper. This implies that
all proposed codes can be constructed by using sparse matrices.

We have the following lemma, where it is unnecessary to assume the
linearity of functions assumed in \cite{HASH}\cite{HASH-UNIV}. It is one
of the advantages of introducing a stronger version of the hash property.

\begin{lem}[{\cite[Lemma 4]{HASH-BC}}]
\label{lem:hash-AB}
Let $(\bcA,\bpA{})$ and $(\bcAp,\bpAp{})$ be ensembles satisfying a
$(\aalphaA,\bbetaA)$-hash property and a $(\aalphaAp{},\bbetaAp{})$-hash
property, respectively. Let $\A\in\bcA$ (resp. $\A'\in\bcAp$) be a set
of functions $A:\U^n\to\im\A$ (resp. $A':\U^n\to\im\A'$). Let
$\hcA\equiv\A\times\A'$  and $\hA\equiv(A,A')\in\hcA$ defined as
\[
  \hA\uu\equiv(A\uu,A'\uu)
  \quad\text{for each $\hA\in\hcA$, $\uu\in\U^n$}.
\]
Let $\phA$  be a joint distribution on $\hcA$ defined as
\[
  p_{\sfhA}(A,A')\equiv \pA(A)\pAp{}(A').
\]
Then the ensemble $(\bchA,\bphA{})$ has a
$(\aalphahA{},\bbetahA{})$-hash property, where
$(\alphahA{},\betahA{})$ is defined as
\begin{align*}
  \alphahA{}&\equiv \alphaA{}\alphaAp{}
  \\
  \betahA{}&\equiv \betaA{}+\betaAp{}.
\end{align*}
\end{lem}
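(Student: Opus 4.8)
The plan is to verify the three hash-property conditions \eqref{eq:alpha}--\eqref{eq:hash} for the product ensemble $(\bchA,\bphA{})$ with the claimed parameters $\alphahA{}=\alphaA{}\alphaAp{}$ and $\betahA{}=\betaA{}+\betaAp{}$. Conditions \eqref{eq:alpha} and \eqref{eq:beta} are immediate: since $\limn\alphaA{}(n)=\limn\alphaAp{}(n)=1$ we get $\limn\alphaA{}(n)\alphaAp{}(n)=1$, and since $\limn\betaA{}(n)=\limn\betaAp{}(n)=0$ we get $\limn[\betaA{}(n)+\betaAp{}(n)]=0$. So the real content is \eqref{eq:hash}.

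First I would unpack the collision event for $\hA=(A,A')$. Because $\hA\uu=\hA\uu'$ holds if and only if $A\uu=A\uu'$ \emph{and} $A'\uu=A'\uu'$, and because $\phA$ is the product distribution $\pA\times\pAp{}$, the collision probability factorizes:
\[
  \phA\lrsb{\lrb{\hA:\hA\uu=\hA\uu'}}
  =
  \pA\lrsb{\lrb{A:A\uu=A\uu'}}\cdot\pAp{}\lrsb{\lrb{A':A'\uu=A'\uu'}}.
\]
Also note $|\im\hcA|=|\im\A|\cdot|\im\A'|$. Fix $n$ and $\uu$. Define, for each $\uu'\neq\uu$, the shorthand $a(\uu')\equiv\pA(\{A:A\uu=A\uu'\})$ and $b(\uu')\equiv\pAp{}(\{A':A'\uu=A'\uu'\})$, so the summand in \eqref{eq:hash} for the product ensemble is $a(\uu')b(\uu')$, and the summation is over those $\uu'$ with $a(\uu')b(\uu')>\alphaA{}\alphaAp{}/(|\im\A||\im\A'|)$. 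The key observation is that if $a(\uu')b(\uu')$ exceeds this threshold, then at least one of $a(\uu')>\alphaA{}/|\im\A|$ or $b(\uu')>\alphaAp{}/|\im\A'|$ must hold — otherwise the product would be at most $\alphaA{}\alphaAp{}/(|\im\A||\im\A'|)$. Hence the index set in the product-ensemble sum is contained in the union of the two index sets $S\equiv\{\uu'\neq\uu: a(\uu')>\alphaA{}/|\im\A|\}$ and $S'\equiv\{\uu'\neq\uu: b(\uu')>\alphaAp{}/|\im\A'|\}$.

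It then remains to bound $\sum_{\uu'\in S\cup S'}a(\uu')b(\uu')$. Splitting the union, $\sum_{\uu'\in S\cup S'}a(\uu')b(\uu')\leq\sum_{\uu'\in S}a(\uu')b(\uu')+\sum_{\uu'\in S'}a(\uu')b(\uu')$. Using $b(\uu')\leq 1$ and $a(\uu')\leq 1$ (these are probabilities), the first sum is at most $\sum_{\uu'\in S}a(\uu')\leq\betaA{}$ by condition \eqref{eq:hash} for $(\bcA,\bpA{})$, and the second is at most $\sum_{\uu'\in S'}b(\uu')\leq\betaAp{}$ by condition \eqref{eq:hash} for $(\bcAp,\bpAp{})$. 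Therefore the product-ensemble sum is at most $\betaA{}+\betaAp{}=\betahA{}$, as required. There is no serious obstacle here; the one step that needs care is the counting/thresholding argument that lets the two given sums absorb the product sum, namely realizing that the product exceeding the product threshold forces one factor to exceed its own threshold, together with using the trivial bound that the other factor is at most $1$. Everything else is bookkeeping.
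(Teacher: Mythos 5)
Your proof is correct and follows essentially the same route as the cited proof of this lemma: factor the collision probability via independence, observe that exceeding the product threshold $\alphaA{}\alphaAp{}/(|\im\A||\im\A'|)$ forces at least one factor to exceed its own threshold, and then absorb the two resulting sums into $\betaA{}$ and $\betaAp{}$ using the trivial bound that the other factor is at most $1$. The only nitpick is your claim $|\im\hcA|=|\im\A|\cdot|\im\A'|$, which in general is only an inclusion $\im\hcA\subset\im\A\times\im\A'$; this works in your favor, since it can only shrink the index set appearing in condition (\ref{eq:hash}) for the product ensemble, so the bound stands.
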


\subsection{Two Implications of Hash Property}

We review two implications of the hash property, which is introduced in
\cite{HASH}. These two implications connect the number of bins and
messages (items) and are derived from the hash property by adjusting the
number of bins taking account of the number of sequences.

In the following, let $\A$ be a set of functions $A:\U^n\to\im\A$,
where an item is a member of $\U^n$. A function $A$ assigns a label
$A\uu$ of bin to an item $\uu\in\U^n$.

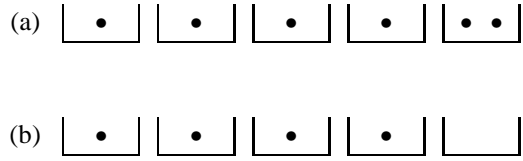
\begin{figure}[t]
  \begin{center}
    \unitlength = 0.5mm
    \begin{picture}(140,30)(-20,5)
      \put(-10,15){\makebox(0,0){(a)}}
      \put(0,10){\line(1,0){20}}
      \put(0,10){\line(0,1){10}}
      \put(20,10){\line(0,1){10}}
      \put(25,10){\line(1,0){20}}
      \put(25,10){\line(0,1){10}}
      \put(45,10){\line(0,1){10}}
      \put(50,10){\line(1,0){20}}
      \put(50,10){\line(0,1){10}}
      \put(70,10){\line(0,1){10}}
      \put(75,10){\line(1,0){20}}
      \put(75,10){\line(0,1){10}}
      \put(95,10){\line(0,1){10}}
      \put(100,10){\line(1,0){20}}
      \put(100,10){\line(0,1){10}}
      \put(120,10){\line(0,1){10}}
      \put(10,15){\makebox(0,0){{\small $\bullet$}}}
      \put(35,15){\makebox(0,0){{\small $\bullet$}}}
      \put(60,15){\makebox(0,0){{\small $\bullet$}}}
      \put(85,15){\makebox(0,0){{\small $\bullet$}}}
      \put(106,15){\makebox(0,0){{\small $\bullet$}}}
      \put(114,15){\makebox(0,0){{\small $\bullet$}}}
    \end{picture}
    \\
    \begin{picture}(140,30)(-20,5)
      \put(-10,15){\makebox(0,0){(b)}}
      \put(0,10){\line(1,0){20}}
      \put(0,10){\line(0,1){10}}
      \put(20,10){\line(0,1){10}}
      \put(25,10){\line(1,0){20}}
      \put(25,10){\line(0,1){10}}
      \put(45,10){\line(0,1){10}}
      \put(50,10){\line(1,0){20}}
      \put(50,10){\line(0,1){10}}
      \put(70,10){\line(0,1){10}}
      \put(75,10){\line(1,0){20}}
      \put(75,10){\line(0,1){10}}
      \put(95,10){\line(0,1){10}}
      \put(100,10){\line(1,0){20}}
      \put(100,10){\line(0,1){10}}
      \put(120,10){\line(0,1){10}}
      \put(10,15){\makebox(0,0){{\small $\bullet$}}}
      \put(35,15){\makebox(0,0){{\small $\bullet$}}}
      \put(60,15){\makebox(0,0){{\small $\bullet$}}}
      \put(85,15){\makebox(0,0){{\small $\bullet$}}}
    \end{picture}
  \end{center}
  \caption{
    Properties connecting the number of bins and items (black dots, messages).
    (a) Saturation property: every bin contains at least one item.
    (b) Collision-resistance property: every bin contains at most one item.
  }
  \label{fig:principle}
\end{figure}

\noindent{\bf Saturation property:}
We prepare a method that finds a typical sequence for each bin.
The saturation property is a characteristic of the hash property.
Figure \ref{fig:principle} (a) represents the ideal situation of this
property. When the number of bins is smaller than the number of black
dots, we can find a suitable function whereby every bin has at least one
black dot. This is because the hash property tends to avoid collisions.
It should be noted that it is sufficient for coding problems to satisfy
this property for `almost all (close to probability one)' bins by
letting the ratio
$[\text{the number of bins}]/[\text{the number of black dots}]$
be close to zero. To find a typical sequence from each bin, we use the
minimum-divergence operation introduced in the construction of codes,
where this operation finds a typical sequence when there is. In this
situation, the black dots correspond to typical sequences.

We have the following lemma, which is related to the saturation property.
\begin{lem}[{\cite[Lemma 2]{HASH}}]
\label{lem:SP}
Assume that the distribution of a random variable $\sfaa$ is uniform on
$\im\A$ and $\sfaa$ and $\sfA$ are mutually independent. If $(\A,\pA)$
satisfies (\ref{eq:whash}), then
\[
  p_{\sfA\sfaa}\lrsb{\lrb{(A,\ba):
      \T\cap\C_{A}(\ba)=\emptyset
  }}
  \leq
  \alphaA{}-1+\frac{|\im\A|\lrB{\betaA{}+1}}{|\T|}
\]
for any $\T\subset\U^n$.
\end{lem}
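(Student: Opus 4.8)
The plan is to bound the probability of the bad event $\{(A,\ba):\T\cap\C_A(\ba)=\emptyset\}$ via a second-moment / Chebyshev-type argument, using (\ref{eq:whash}) to control the second moment. First I would introduce, for a fixed function $A$ and fixed $\ba\in\im\A$, the indicator count $N_A(\ba)\equiv|\T\cap\C_A(\ba)|=\sum_{\uu\in\T}\chi(A\uu=\ba)$, so that the bad event is exactly $\{N_{\sfA}(\sfaa)=0\}$. Since $\sfaa$ is uniform on $\im\A$ and independent of $\sfA$, the mean is easy: for each $A$, $\sum_{\ba}N_A(\ba)=|\T|$, hence $\mathbb{E}[N_{\sfA}(\sfaa)\mid \sfA=A]=|\T|/|\im\A|$, and so $\mathbb{E}[N_{\sfA}(\sfaa)]=|\T|/|\im\A|=:\mu$.

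The key step is the second-moment bound. I would write
\[
  \mathbb{E}\!\left[N_{\sfA}(\sfaa)^2\right]
  =\frac{1}{|\im\A|}\sum_{A}\pA(A)\sum_{\ba\in\im\A}\sum_{\uu,\uu'\in\T}\chi(A\uu=\ba)\chi(A\uu'=\ba)
  =\frac{1}{|\im\A|}\sum_{\uu,\uu'\in\T}\pA(\{A:A\uu=A\uu'\}),
\]
and then invoke (\ref{eq:whash}) with $\T'=\T$ to get
$\sum_{\uu,\uu'\in\T}\pA(\{A:A\uu=A\uu'\})\le|\T|+|\T|^2\alphaA{}/|\im\A|+|\T|\betaA{}$,
so that $\mathbb{E}[N_{\sfA}(\sfaa)^2]\le \mu^2\alphaA{}+\mu(1+\betaA{})$, i.e.\ the variance satisfies $\mathrm{Var}(N_{\sfA}(\sfaa))\le \mu^2(\alphaA{}-1)+\mu(1+\betaA{})$. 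The last step is the one-sided Chebyshev (or simply Paley–Zygmund-style) inequality: the event $\{N_{\sfA}(\sfaa)=0\}$ implies $|N_{\sfA}(\sfaa)-\mu|\ge\mu$, so
\[
  p_{\sfA\sfaa}(\{N_{\sfA}(\sfaa)=0\})\le\frac{\mathrm{Var}(N_{\sfA}(\sfaa))}{\mu^2}
  \le(\alphaA{}-1)+\frac{1+\betaA{}}{\mu}
  =\alphaA{}-1+\frac{|\im\A|[\betaA{}+1]}{|\T|},
\]
which is exactly the claimed bound.

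The main obstacle — really the only subtle point — is getting the second-moment computation to land precisely on the form of (\ref{eq:whash}): one must be careful that $\sum_{\ba}\chi(A\uu=\ba)\chi(A\uu'=\ba)=\chi(A\uu=A\uu')$ (each pair $\uu,\uu'$ contributes to exactly the single bin $A\uu$ when $A\uu=A\uu'$, and to none otherwise), so the $\ba$-sum collapses cleanly and the independence/uniformity of $\sfaa$ is used only to produce the $1/|\im\A|$ factor and the mean $\mu$. After that, plugging in (\ref{eq:whash}) and dividing by $\mu^2$ is routine algebra; note also that only the weaker consequence (\ref{eq:whash}) of the hash property is needed here, not (\ref{eq:hash}) itself. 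One should double-check the degenerate case $\T=\emptyset$ (bound is vacuous/trivially $\ge 1$) and that $|\im\A|\ge 1$ so no division-by-zero issues arise.
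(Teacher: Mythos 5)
Your proof is correct: the second-moment computation via (\ref{eq:whash}) with $\T'=\T$ and the one-sided Chebyshev bound $p(\{N=0\})\leq \mathrm{Var}(N)/\mu^2$ with $\mu=|\T|/|\im\A|$ yields exactly the stated inequality. The paper itself does not reprove this lemma (it imports it as \cite[Lemma 2]{HASH}), and your argument is essentially the same Chebyshev/second-moment proof given in that cited reference, so there is nothing to flag.
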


We prove the saturation property from Lemma~\ref{lem:SP}. We have
\begin{align}
  E_{\sfA}\lrB{
    p_{\sfcc}\lrsb{\lrb{\cc:  \T\cap\C_{\sfA}(\cc)=\emptyset}}
  }
  &=
  p_{\sfA\sfcc}\lrsb{\lrb{(A,\cc): \T\cap\C_A(\cc)=\emptyset}}
  \notag
  \\
  &\leq
  \alphaA{}-1+\frac{|\im\A|\lrB{\betaA{}+1}}{|\T|}.
\end{align}
By assuming that $|\im\A|/|\T|$ vanishes as $n\to\infty$, we have the
fact that there is a function $A$ such that
\[
  p_{\sfcc}\lrsb{\lrb{\cc: \T\cap\C_A(\cc)=\emptyset}}
  <\delta
\]
for any $\delta>0$ and sufficiently large $n$. Since the relation
$\T\cap\C_A(\cc)=\emptyset$ corresponds to an event where there is no
$\uu\in\T$ in bin $\C_A(\cc)$, we have the fact that we can find a
member of $\T$ in a randomly selected bin with probability close to one.

\noindent{\bf Collision-resistance property:}
A good code assigns a message to a codeword that is different from the
codewords of other messages, where the error probability is as small as
possible. The collision-resistance property is another characteristic of
the hash property. Figure \ref{fig:principle} (b) shows the ideal
situation as regards this property, where the black dots represent
messages we want to distinguish. When the number of bins is greater than
the number of black dots, we can find a good function that allocates the
black dots to the different bins. This is because the hash property
tends to avoid the collision. It should be noted that it is sufficient
for coding problems to satisfy this property for `almost all
(close to probability one)' black dots by letting the ratio
$[\text{the number of black dots}]/[\text{the number of bins}]$
be close to zero. This property is used to estimate the decoding error
probability. In this situation, the black dots correspond to typical
sequences.

We have the following lemma, which is related to to the
collision-resistance property.
\begin{lem}[{\cite[Lemma 1]{HASH}}]
\label{lem:CRP}
If $(\A,\pA)$ satisfies (\ref{eq:whash}), then
\[
  \pA\lrsb{\lrb{
      A: \lrB{\G\setminus\{\uu\}}\cap\C_A(A\uu)\neq \emptyset
  }}
  \leq 
  \frac{|\G|\alphaA{}}{|\im\A|} + \betaA{}.
\]
for all $\G\subset\U^n$  and $\uu\in\U^n$.
\end{lem}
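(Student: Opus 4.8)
The plan is to read the event $\lrB{\G\setminus\{\uu\}}\cap\C_A(A\uu)\neq\emptyset$ as a collision event and then combine a union bound with the averaged hash inequality (\ref{eq:whash}). Since $\C_A(A\uu)=\{\vv: A\vv=A\uu\}$, this event holds if and only if there is some $\uu'\in\G\setminus\{\uu\}$ with $A\uu'=A\uu$; hence
\[
  \lrb{A:\lrB{\G\setminus\{\uu\}}\cap\C_A(A\uu)\neq\emptyset}
  =\bigcup_{\uu'\in\G\setminus\{\uu\}}\lrb{A: A\uu'=A\uu},
\]
and the union bound gives
\[
  \pA\lrsb{\lrb{A:\lrB{\G\setminus\{\uu\}}\cap\C_A(A\uu)\neq\emptyset}}
  \leq\sum_{\uu'\in\G\setminus\{\uu\}}\pA\lrsb{\lrb{A: A\uu'=A\uu}}.
\]

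Next I would bound this sum by invoking (\ref{eq:whash}) with $\T=\{\uu\}$ and $\T'=\G$; noting that the case $\G=\emptyset$ is trivial so that $\min\{1,|\G|\}=1$, this yields
\[
  \sum_{\uu'\in\G}\pA\lrsb{\lrb{A: A\uu=A\uu'}}
  \leq |\{\uu\}\cap\G|+\frac{|\G|\alphaA{}}{|\im\A|}+\betaA{}.
\]
The summand for $\uu'=\uu$ equals $\pA(\A)=1$, and it appears on the left-hand side precisely when $\uu\in\G$, i.e.\ precisely when $|\{\uu\}\cap\G|=1$; subtracting this diagonal term from both sides therefore leaves
\[
  \sum_{\uu'\in\G\setminus\{\uu\}}\pA\lrsb{\lrb{A: A\uu=A\uu'}}
  \leq\frac{|\G|\alphaA{}}{|\im\A|}+\betaA{}
\]
in every case. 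Chaining this with the union bound above gives the stated inequality; when $\G\subseteq\{\uu\}$ the left-hand side of the statement is $0$ and there is nothing to prove.

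The argument is essentially a first-moment estimate, so I do not anticipate a real obstacle. The only point that needs care is the bookkeeping of the diagonal contribution ($\uu'=\uu$): one must match it against the $|\T\cap\T'|$ term in (\ref{eq:whash}) so that it cancels uniformly, regardless of whether $\uu\in\G$.
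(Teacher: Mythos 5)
Your proof is correct, and it is essentially the argument behind this lemma (which the paper itself does not reprove but imports from \cite{HASH}): a union bound over $\uu'\in\G\setminus\{\uu\}$ followed by an application of (\ref{eq:whash}) with $\T=\{\uu\}$ a singleton. The only cosmetic difference is that you take $\T'=\G$ and then cancel the diagonal term against $|\T\cap\T'|$, whereas choosing $\T'=\G\setminus\{\uu\}$ directly makes the intersection term vanish and removes the case bookkeeping, using $|\G\setminus\{\uu\}|\leq|\G|$ at the end.
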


We prove the collision-resistance property from Lemma~\ref{lem:CRP}.
Let $\mu_U$ be the probability distribution on $\G\subset\U^n$. We have
\begin{align}
  E_{\sfA}\lrB{
    \mu_U\lrsb{\lrb{\uu:
	\lrB{\G\setminus\{\uu\}}\cap\C_{\sfA}(\sfA\uu)\neq\emptyset}
    }
  }
  &
  \leq
  \sum_{\uu\in\G}\mu_U(\uu)
  p_{\sfA}\lrsb{\lrb{A: \lrB{\G\setminus\{\uu\}}\cap\C_A(A\uu)\neq\emptyset}}
  \notag
  \\
  &
  \leq
  \sum_{\uu\in\G}\mu_U(\uu)
  \lrB{\frac{|\G|\alphaA{}}{|\im\A|} + \betaA{}}
  \notag
  \\
  &
  \leq
  \frac{|\G|\alphaA{}}{|\im\A|} + \beta_A{}.
\end{align}
By assuming that $|\G|/|\im\A|$ vanishes as $n\to\infty$, we have the
fact that there is a function $A$ such that
\[
  \mu_U\lrsb{\lrb{\uu:
      \lrB{\G\setminus\{\uu\}}\cap\C_A(A\uu)\neq\emptyset}}
  <\delta
\]
for any $\delta>0$ and sufficiently large $n$. Since the relation
$\lrB{\G\setminus\{\uu\}}\cap\C_A(A\uu)\neq\emptyset$ corresponds to an
event where there is $\uu'\in\G$ such that $\uu$ and $\uu'$ are
different members of the same bin (they have the same codeword
determined by $A$), we have the fact that the members of $\G$ are
located in different bins (the members of $\G$ can be decoded correctly)
with probability close to one.

\subsection{Channel Coding Based on Hash Property}
Now, we explain an intuitive construction of a channel code in terms of
the saturation property and the collision-resistance property, where the
construction is introduced in~\cite{HASH}.

\begin{figure}[t]
  \begin{center}
    \unitlength 0.4mm
    \begin{picture}(176,70)(0,0)
      \put(82,60){\makebox(0,0){Encoder}}
      \put(65,35){\makebox(0,0){$\cc$}}
      \put(70,35){\vector(1,0){10}}
      \put(30,17){\makebox(0,0){$\mm$}}
      \put(45,17){\vector(1,0){35}}
      \put(80,10){\framebox(18,32){$\hg_{AB}$}}
      \put(98,26){\vector(1,0){20}}
      \put(128,26){\makebox(0,0){$\xx$}}
      \put(55,0){\framebox(54,52){}}
    \end{picture}
    \\
    \begin{picture}(176,70)(0,0)
      \put(82,60){\makebox(0,0){Decoder}}
      \put(30,35){\makebox(0,0){$\cc$}}
      \put(35,35){\vector(1,0){10}}
      \put(0,17){\makebox(0,0){$\yy$}}
      \put(10,17){\vector(1,0){35}}
      \put(45,10){\framebox(18,32){$\hg_A$}}
      \put(63,26){\vector(1,0){10}}
      \put(83,26){\makebox(0,0){$\xx$}}
      \put(93,26){\vector(1,0){10}}
      \put(103,19){\framebox(30,14){$B$}}
      \put(133,26){\vector(1,0){20}}
      \put(167,26){\makebox(0,0){$\mm$}}
      \put(20,0){\framebox(124,52){}}
    \end{picture}
  \end{center}
  \caption{Construction of Channel Code}
  \label{fig:channel-code}
\end{figure}
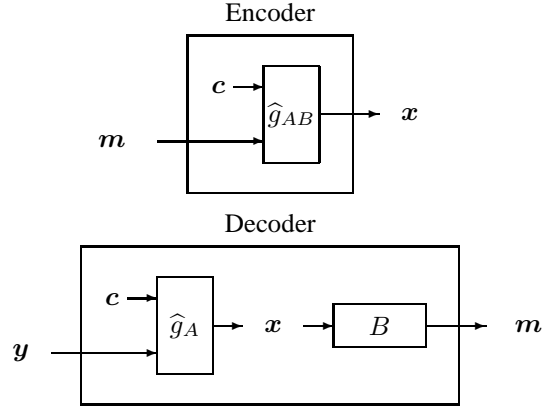

We prepare two functions $A:\X^n\to\im\A$, $B:\X^n\to\im\B$, and a
vector $\cc\in\im\A$, and assume that they are shared by an encoder and
a decoder. It should be noted that $|\im\A|$ (resp. $|\im\B|$) is the
number of bins specified by $A$ (resp. $B$). The function $A$ is
analogous to a parity check matrix in the context of linear codes. The
set $\im\B$ is the set of all messages and $|\im\B|$ is equal to the
number of messages.

The flow of vectors is is illustrated in Fig.~\ref{fig:channel-code}.
Let $\mm\in\im\B$ be a message, $\xx\in\X^n$ be a channel input, and
$\yy\in\Y^n$ be a channel output. For $\cc$ and a message $\mm$,
a function $\hg_{AB}$ generates a typical sequence $\xx\in\T_{X,\gamma}$
as a channel input, where $A\xx=\cc$ and $B\xx=\mm$ are satisfied.
The decoder  reproduces the channel input $\xx$ by using $\hg_{A}$ from
$\cc$ and a channel output $\yy$. Since $(\xx,\yy)$ is jointly typical
and $B\xx=\mm$, the decoding succeeds if the amount of information of
$\cc$ is greater than $H(X|Y)$ to satisfy the collision-resistance
property. In fact, there are about $2^{n H(X|Y)}$ conditional typical
sequences $\xx$ for given $\yy$ and it is sufficient to prepare bins
specified by $A$ more than $2^{nH(X|Y)}$ to distinguish conditional
typical sequences. Formally, this condition corresponds to
\[
  \frac{\log|\im\A|}n>H(X|Y).
\]
On the other hand, the total rate of $\cc$ and $\mm$ should be less than
$H(X)$ to satisfy the saturating property. Since there is at most
$2^{nH(X)}$ typical sequences, it is sufficient to prepare bins
specified by $A$ and $B$ less than  $2^{nH(X)}$. Formally, this
condition corresponds to
\[
  \frac{\log|\im\A||\im\B|}n<H(X).
\]
Since it is sufficient to satisfy these two inequalities, we have the
fact that there is a code when
\[
  \frac{\log|\im\B|}n<H(X)-H(X|Y).
\]
This implies that we can set the encoding rate of messages close to
$H(X)-H(X|Y)=I(X;Y)$.

In this paper, we extend this approach to construct a multiple access
channel code.

\subsection{Multiple Extension of Collision Resistance Property}
To prove the achievability of a multiple access channel code based on
the hash property, we extend the lemma related to the
collision-resistance property. The hash property is needed to prove the
following lemma, and this is another reason why the hash property is
introduced. We use the following notations:
\begin{align*}
  \C_{A_{\K}}(\ba_{\K})
  &\equiv
  \{\uu_{\K}: A_j\uu_j=\ba_j\ \text{for all}\ j\in\K\}.
  \\
  A_{\K}\uu_{\K}
  &\equiv\lrb{A_j\uu_j}_{j\in\K}.
\end{align*}
For $\G\subset\U^n\times\V^n$ and $\uu\in\U^n$, let $\G_{\U}$ and
$\G_{\V|\U}(\uu)$ be defined as
\begin{align*}
  \G_{\U}&\equiv\{\uu: (\uu,\vv)\in\G\ \text{for some}\ \vv\in\V^n\}
  \\
  \G_{\V|\U}(\uu)&\equiv\{\vv: (\uu,\vv)\in\G\}.
\end{align*}
Furthermore, to shorten the description of the following lemma, we use
the following abbreviation
\begin{equation}
  \lrbar{\G_{\J|\J^c}}
  \equiv
  \begin{cases}
    |\G|, &\text{if}\ \J=\K
    \\
    \displaystyle\max_{\uu_{\J^c}\in\G_{\U_{\J^c}}}
    \lrbar{\G_{\U_{\J}|\U_{\J^c}}\lrsb{\uu_{\J^c}}}
    &\text{otherwise}.
  \end{cases}
  \label{eq:maxT}
\end{equation}
for $\G\subset[\U_{\K}]^n$ and $\J\subset\K$. It should be noted that
the expression $\lrbar{\G_{\J|\J^c}}$ does not represent the cardinality
of the set $\G_{\J|\J^c}$.

\begin{lem}[{\cite[Lemma 7]{HASH-BC}}]
\label{lem:multi-CRP}
For each $j\in\K$, let $\A_j$ be a set of functions
$A_j:\U_j^n\to\im\A_j$ and $p_{\sfA_j}$ be the probability distribution
on $\A_j$, where $(\A_j,p_{\sfA_j})$ satisfies (\ref{eq:hash}). We
assume that random variables $\sfA_{\K}\equiv\{\sfA_j\}_{j\in\K}$ are
mutually independent. For each $\J\subset\K$, let $\aalpha_{\sfA_{\J}}$
and $\bbeta_{\sfA_{\J}}$ be defined as
\begin{align*}
  \alpha_{\sfA_{\J}}
  &\equiv
  \prod_{j\in\J}\alpha_{\sfA_j}
  \\
  \beta_{\sfA_{\J}}
  &\equiv
  \prod_{j\in\J}\lrB{1+\beta_{\sfA_j}}-1.
\end{align*}
Then
\begin{align*}
  p_{\sfA_{\K}}\lrsb{\lrb{
      A_{\K}: \lrB{\G\setminus\{\uu_{\K}\}}\cap\C_{A_{\K}}(A_{\K}\uu_{\K}
      )\neq \emptyset
  }}
  \leq
  \sum_{\substack{
      \J\subset\K:\\
      \J\neq\emptyset
  }}
  \frac{
    \lrbar{\G_{\J|\J^c}}
    \alpha_{\sfA_{\J}}\lrB{\beta_{\sfA_{\J^c}}+1}
  }
  {
    \prod_{j\in\J}\lrbar{\im\A_j}
  }
  +\beta_{\sfA_{\K}}
\end{align*}
for all $\G\subset\lrB{\U_{\K}}^n$ and $\uu_{\K}\in\lrB{\U_{\K}}^n$.
Furthermore, if $(\aalpha_{\sfA_j},\bbeta_{\sfA_j})$ satisfies
(\ref{eq:alpha}) and (\ref{eq:beta}) for all $j\in\K$, then
\begin{align}
  \limn \alpha_{\sfA_{\J}}(n)=1
  \label{eq:multi-alpha}
  \\
  \limn \beta_{\sfA_{\J}}(n)=0
  \label{eq:multi-beta}
\end{align}
for every $\J\subset\K$.
\end{lem}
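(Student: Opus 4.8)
The plan is to prove the inequality by a union bound over nonempty subsets $\J\subset\K$ classifying "which coordinates collide." For a fixed $\uu_{\K}\in[\U_{\K}]^n$, the event $[\G\setminus\{\uu_{\K}\}]\cap\C_{A_{\K}}(A_{\K}\uu_{\K})\neq\emptyset$ means there is some $\uu_{\K}'\in\G$, $\uu_{\K}'\neq\uu_{\K}$, with $A_j\uu_j'=A_j\uu_j$ for all $j\in\K$. For such a $\uu_{\K}'$ let $\J\equiv\{j\in\K:\uu_j'\neq\uu_j\}$; since $\uu_{\K}'\neq\uu_{\K}$ we have $\J\neq\emptyset$, and the collisions $A_j\uu_j'=A_j\uu_j$ are automatic for $j\notin\J$ but genuine for $j\in\J$. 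So the event is contained in $\bigcup_{\J\neq\emptyset}E_{\J}$, where $E_{\J}$ is the event that some $\uu_{\K}'\in\G$ agrees with $\uu_{\K}$ exactly off $\J$ and satisfies $A_j\uu_j'=A_j\uu_j$ for $j\in\J$. I would then bound $p_{\sfA_{\K}}(E_{\J})$ for each $\J$ separately and sum.

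To bound $p_{\sfA_{\K}}(E_{\J})$: fix the coordinates $\uu_{\J^c}$ of $\uu_{\K}$, so the relevant candidates $\uu_{\J}'$ range over $\G_{\U_{\J}|\U_{\J^c}}(\uu_{\J^c})$, a set of size at most $\lrbar{\G_{\J|\J^c}}$ by the definition (\ref{eq:maxT}) (the case $\J=\K$ giving simply $|\G|$). Using independence of $\sfA_{\K}$, the probability that a \emph{fixed} $\uu_{\J}'$ with $u_j'\neq u_j$ for all $j\in\J$ satisfies $A_j\uu_j'=A_j\uu_j$ for all $j\in\J$ factors as $\prod_{j\in\J}p_{\sfA_j}(\{A_j:A_j\uu_j'=A_j\uu_j\})$. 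Here is where the single-domain hash property (\ref{eq:hash}) enters, exactly as in the proof of Lemma~\ref{lem:CRP}: for each $j$, summing $p_{\sfA_j}(\{A_j:A_j\uu_j'=A_j\uu_j\})$ over $\uu_j'\neq\uu_j$ splits into the "small collision probability" terms (each $\le\alpha_{\sfA_j}/\lrbar{\im\A_j}$) and the exceptional set whose total probability is $\le\beta_{\sfA_j}$. Carrying out this split across the $|\J|$ independent coordinates and taking a union bound over the at most $\lrbar{\G_{\J|\J^c}}$ candidates $\uu_{\J}'$ yields, after collecting terms, a bound of the shape $\lrbar{\G_{\J|\J^c}}\alpha_{\sfA_{\J}}/\prod_{j\in\J}\lrbar{\im\A_j}$ plus a "bad event" contribution; the bad-event contributions, summed over all $\J$, telescope into $\prod_{j\in\K}(1+\beta_{\sfA_j})-1=\beta_{\sfA_{\K}}$, and the factor $\beta_{\sfA_{\J^c}}+1$ in front of the main term accounts for not excluding the bad events on the coordinates $\J^c$ (where no collision constraint is imposed but we still range over all $A_j$, $j\in\J^c$). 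This bookkeeping — organizing the product over $j\in\J$ of (good term $+$ bad term), and verifying the telescoping of the $\beta$'s — is the main obstacle; it is routine but must be done carefully to land exactly on the stated coefficients $\alpha_{\sfA_{\J}}[\beta_{\sfA_{\J^c}}+1]$ and residual $\beta_{\sfA_{\K}}$. (This is precisely the computation already done in \cite[Lemma 7]{HASH-BC}; I would follow it.)

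For the final claim, (\ref{eq:multi-alpha}) is immediate: $\alpha_{\sfA_{\J}}=\prod_{j\in\J}\alpha_{\sfA_j}$ is a finite product of sequences each tending to $1$ by (\ref{eq:alpha}), so it tends to $1$. For (\ref{eq:multi-beta}), $\beta_{\sfA_{\J}}=\prod_{j\in\J}(1+\beta_{\sfA_j})-1$; each $1+\beta_{\sfA_j}\to1$ by (\ref{eq:beta}), so the finite product tends to $1$ and hence $\beta_{\sfA_{\J}}\to0$. Both hold for every $\J\subset\K$ since $\K$ is finite. No obstacle here.
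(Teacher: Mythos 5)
The paper itself gives no proof of this lemma --- it is quoted from \cite[Lemma 7]{HASH-BC} --- so your sketch can only be judged against the standard argument behind that result. Your overall strategy is the right one: a union bound over candidate colliding tuples $\uu'_{\K}\in\G\setminus\{\uu_{\K}\}$, factorization of the joint collision probability over $j$ by independence of $\sfA_{\K}$, a per-coordinate good/bad split supplied by (\ref{eq:hash}), candidate counting via the quantities $\lrbar{\G_{\J|\J^c}}$, and the identities $\sum_{\J'\subset\J^c}\prod_{j\in\J'}\beta_{\sfA_j}=\beta_{\sfA_{\J^c}}+1$ and $\sum_{\J'\subset\K,\J'\neq\emptyset}\prod_{j\in\J'}\beta_{\sfA_j}=\beta_{\sfA_{\K}}$. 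Your proof of (\ref{eq:multi-alpha}) and (\ref{eq:multi-beta}) is complete and correct.

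However, the bookkeeping for the main inequality, as you describe it, does not produce the stated coefficients, and you explicitly defer the correct version to the cited reference, which is exactly the computation at issue. Concretely: with your indexing, $\J$ is the set of coordinates where $\uu'_j\neq\uu_j$, so on $\J^c$ the collision $A_j\uu'_j=A_j\uu_j$ holds with probability one and no $\beta$ can arise there; your claim that the factor $\beta_{\sfA_{\J^c}}+1$ ``accounts for not excluding the bad events on the coordinates $\J^c$'' misidentifies the mechanism. The exceptional ($\beta$) contributions come from \emph{differing} coordinates whose collision probability exceeds $\alpha_{\sfA_j}/\lrbar{\im\A_j}$. To land on the stated bound you must, for each candidate, split the differing set $\J$ into a good part $\I$ and a bad part $\J\setminus\I$; bound each good factor by $\alpha_{\sfA_j}/\lrbar{\im\A_j}$ and, for a fixed bad part, bound the number of good-part candidates by $\lrbar{\G_{\I|\I^c}}$ (not $\lrbar{\G_{\J|\J^c}}$); sum the bad factors over all of $\U_j^n$ using (\ref{eq:hash}) to get $\prod_{j\in\J\setminus\I}\beta_{\sfA_j}$; and then reindex by $\I$: summing $\prod_{j\in\J\setminus\I}\beta_{\sfA_j}$ over all $\J\supset\I$ yields the factor $\beta_{\sfA_{\I^c}}+1$ attached to the main term for $\I$, while the $\I=\emptyset$ patterns sum to $\beta_{\sfA_{\K}}$. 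Your sketch never treats these mixed good/bad patterns, and a literal ``main term plus separately bounded bad term'' for each fixed differing set $\J$ does not give $\alpha_{\sfA_{\J}}[\beta_{\sfA_{\J^c}}+1]$; so, as written, the central step has a genuine gap, although the approach is salvageable along the lines just indicated.
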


\section{Construction of Codes}
\label{sec:mac}

In this section, we construct codes for the scenarios introduced in
Section~\ref{sec:review}.

\subsection{Private Messages}
\label{sec:mac-ts}

In this section, we consider a scenario in which $k$ senders transmit
independent messages to a receiver and there is no common message to be
sent (Fig.\ref{fig:mac}).

First, we construct a code based on the coded time-sharing technique
introduced in \cite{HK81}. For a given $\mu_{Y|X_{\K}}$, $\mu_{U}$, and
$\{\mu_{X_j|U}\}_{j\in\K}$, assume that $R_{\K}$ satisfies
(\ref{eq:sumRi-ts}). Then there is $\{\e_j\}_{j\in\K}$ such that
\begin{equation}
  \sum_{j\in\J} [R_j+\e_j]
  < I(X_{\J};Y|U,X_{\J^c})-\e
  \quad\text{for all}\ \J\subset\K,
  \label{eq:RJ-ts}
\end{equation}
where $\e$ is defined as
\begin{equation}
  \e\equiv
  \eta_{\X_{\K}|\U\Y}\lrsb{
    \left.
      2\sum_{j\in\K}\e_j
    \right|
    2\sum_{j\in\K}\e_j
  },
  \label{eq:def-e-ts}
\end{equation}
where $\eta_{\X_{\K}|\U\Y}$ is defined by (\ref{eq:def-etac}). For each
$j\in\K$, let $r_j$ be defined as
\begin{equation}
  r_j\equiv H(X_j|U)-R_j-\e_j.
  \label{eq:rjRj-ts}
\end{equation}
From (\ref{eq:markov-ts}), (\ref{eq:RJ-ts}), and (\ref{eq:rjRj-ts}), we
have
\[
  r_j
  \geq I(X_j;Y|U,X_{\K\setminus\{j\}})-R_j-\e_j
  > 0.
\]
Let $(\bcA_j,\bpA{j})$ and $(\bcAp_j,\bpAp{j})$ be ensembles of
functions, and let $\A_j\in\bcA_j$ and $\A'_j\in\bcAp_j$. Let
$A_j\in\A_j$ and $A'_j\in\A'_j$ be functions
\begin{align*}
  A_j&:\X_j^n\to\im\A_j
  \\
  A'_j&:\X_j^n\to\im\A'_j,
\end{align*}
respectively. We assume that ensembles satisfy
\begin{align}
  r_j&=\frac{\log|\im\A_j|}n
  \label{eq:rj}
  \\
  R_j&=\frac{\log|\im\A'_j|}n.
  \label{eq:Rj}
\end{align}
For each $j\in\K$, let $\M_j$ be the set of messages defined as
\[
  \M_j\equiv\im\A'_j.
\]
Then $R_j$ represents the encoding rate of the $j$-th message.
We assume that the $j$-th encoder and a decoder share functions
$A_j\in\A_j$, $A'_j\in\A'_j$ and vectors $\ba_j\in\im\A_j$ and
$\uu\in\U^n$.

For each $j\in\K$, we define the $j$-th encoder as
\[
  \Encoder_j(\mm_j)
  \equiv
  \hg_{A_jA'_j}(\ba_j,\mm_j|\uu)
\]
for a message $\mm_j\in\M_j$, where $\C_{A_jA'_j}(\ba_j,\mm_j)$ is
defined as
\begin{equation}
  \C_{A_jA'_j}(\ba_j,\mm_j)
  \equiv\{\xx_j: A_j\xx_j = \ba_j\ \text{and}\ A'_j\xx_j=\mm_j\}.
  \label{eq:CAAp}
\end{equation}
and
\[
  \hg_{A_jA'_j}(\ba_j,\mm_j|\uu)
  \equiv
  \arg\min_{\xx_j'\in\C_{A_jA'_j}(\ba_j,\mm_j)}
  D(\nu_{\xx_j'|\uu}\|\mu_{X_j|U}|\nu_{\uu}).
\]
We define the decoder as
\[
  \Decoder(\yy)
  \equiv A'_{\K}\hg_{A_{\K}}(\ba_{\K}|\yy,\uu)
\]
for a channel output $\yy\in\Y^n$, where
\[
  \hg_{A_{\K}}(\ba_{\K}|\yy,\uu)
  \equiv
  \arg\min_{\substack{
      \xx_{\K}':\\
      \xx_j'\in\C_{A_j}(\ba_j)
      \\
      \text{for all}\ j\in\K
  }}
  D(\nu_{\uu\xx_{\K}'\yy}\|\mu_{UX_{\K}Y}).
\]
Figure \ref{fig:mac-code-ts} illustrates the code construction for $k=2$.
For given vectors $\ba_j$, $\uu$, and a message $\mm_j$, the function
$\hg_{A_jA'_j}$ finds a conditionally typical sequence $\xx_j$
satisfying $A_j\xx_j=\ba_j$ and $A'_j\xx_j=\mm_j$. The function $A_j$ is
analogous to the parity check matrix for the $j$-th message, and the
function $\hg_{A_{\K}}$ is a typical set decoder that guesses the
channel input $\xx_{\K}$ satisfying $A_j\xx_j=\ba_j$ for all $j\in\K$,
where vectors $\ba_{\K}$, $\uu$, and a channel output $\yy$ are given.

\begin{figure}[t]
  \begin{center}
    \unitlength 0.40mm
    \begin{picture}(130,99)(0,6)
      \put(65,89){\makebox(0,0){Encoders}}
      \put(45,63){\makebox(0,0){$\ba_1$}}
      \put(55,63){\vector(1,0){10}}
      \put(10,47){\makebox(0,0){$\mm_1$}}
      \put(20,48){\vector(1,0){45}}
      \put(65,22){\framebox(24,52){$\hg_{A_1A'_1}$}}
      \put(89,48){\vector(1,0){20}}
      \put(115,48){\makebox(0,0){$\xx_1$}}
      \put(45,33){\makebox(0,0){$\uu$}}
      \put(55,33){\vector(1,0){10}}
      \put(30,15){\framebox(70,66){}}
    \end{picture}
    \\
    \begin{picture}(130,61)(0,15)
      \put(45,63){\makebox(0,0){$\ba_2$}}
      \put(55,63){\vector(1,0){10}}
      \put(10,47){\makebox(0,0){$\mm_2$}}
      \put(20,48){\vector(1,0){45}}
      \put(65,22){\framebox(24,52){$\hg_{A_2A'_2}$}}
      \put(89,48){\vector(1,0){20}}
      \put(115,48){\makebox(0,0){$\xx_2$}}
      \put(45,33){\makebox(0,0){$\uu$}}
      \put(55,33){\vector(1,0){10}}
      \put(30,15){\framebox(70,66){}}
    \end{picture}
    \\
    \begin{picture}(188,99)(0,6)
      \put(94,89){\makebox(0,0){Decoder}}
      \put(30,66){\makebox(0,0){$\ba_1$}}
      \put(35,66){\vector(1,0){10}}
      \put(30,51){\makebox(0,0){$\ba_2$}}
      \put(35,51){\vector(1,0){10}}
      \put(30,21){\makebox(0,0){$\uu$}}
      \put(35,21){\vector(1,0){10}}
      \put(0,36){\makebox(0,0){$\yy$}}
      \put(10,36){\vector(1,0){35}}
      \put(45,13){\framebox(34,61){$\hg_{A_1A_2}$}}
      \put(79,60){\vector(1,0){10}}
      \put(100,60){\makebox(0,0){$\xx_1$}}
      \put(111,60){\vector(1,0){10}}
      \put(121,53){\framebox(30,14){$A'_1$}}
      \put(151,60){\vector(1,0){20}}
      \put(178,60){\makebox(0,0){$\mm_1$}}
      \put(79,27){\vector(1,0){10}}
      \put(100,27){\makebox(0,0){$\xx_2$}}
      \put(111,27){\vector(1,0){10}}
      \put(121,20){\framebox(30,14){$A'_2$}}
      \put(151,27){\vector(1,0){20}}
      \put(178,27){\makebox(0,0){$\mm_2$}}
      \put(20,6){\framebox(142,75){}}
    \end{picture}
  \end{center}
  \caption{Construction of Multiple Access Channel Code: Private Messages
    (Coded Time-sharing)}
  \label{fig:mac-code-ts}
\end{figure}

Here, let us remark on the relations (\ref{eq:RJ-ts}) and (\ref{eq:rjRj-ts}).
From these relations and (\ref{eq:markov-ts}), we have
\begin{equation}
  r_j+R_j=H(X_j|U)-\e_j
  \quad\text{for all}\  j\in\K
  \label{eq:Rjrj-SP-ts}
\end{equation}
and
\begin{align}
  \sum_{j\in\J}r_j
  &=
  \sum_{j\in\J}\lrB{H(X_j|U)-R_j-\e_j}
  \notag
  \\
  &=
  H(X_{\J}|U)-\sum_{j\in\J}\lrB{R_j+\e_j}
  \notag
  \\
  &>
  H(X_{\J}|U)-I(X_{\J};Y|U,X_{\J^c})+\e
  \notag
  \\
  &= H(X_{\J}|U,X_{\J^c},Y)+\e
  \label{eq:rj-CRP-ts}
\end{align}
for all $\J\subset\K$. Condition (\ref{eq:Rjrj-SP-ts}) is sufficient for
the saturation property, that is, for a given $\uu$ the $j$-th encoder
can find a conditionally typical sequence corresponding to the $j$-th
message $\mm_j$ when the number $2^{n[r_j+R_j]}$ of bins is smaller than
the number of typical sequences. Condition (\ref{eq:rj-CRP-ts}) is
sufficient for the collision-resistance property, that is, the decoding
error probability goes to zero if the rate vector $r_{\tK}$ of the
vector $\ba_{\K}$ is in the Slepian-Wolf region of the correlated source
coding. It should be noted that the decoder can recover messages
$\mm_{\K}$ when the channel input $\xx_{\K}$ is successfully decoded
by operating $A'_{\K}$ to $\xx_{\K}$ because the $j$-th message $\mm_j$
satisfies $A'_j\xx_j=\mm_j$.

For each $j\in\K$, let $M_j$ be a random variable corresponding to the
$j$-th message, where the probability distribution $p_{M_j}$ is uniform
on $\M_j$. Let $\Error(A_{\K},A'_{\K},\ba_{\K})$ be the decoding error
probability. We have the following theorem.
\begin{thm}
\label{thm:mac-ts}
Let $\mu_{Y|X_{\K}}$ be the conditional probability distribution
of a stationary memoryless channel and $\mu_{UX_{\K}Y}$ be defined by
(\ref{eq:markov-ts}) for given probability distributions $\mu_U$ and
$\{\mu_{X_j|U}\}_{j\in\K}$. For given
$R_{\K}\in\R(\mu_U,\{\mu_{X_j|U}\}_{j\in\K})$ and $\{\e_j\}_{j\in\K}$
satisfying (\ref{eq:rj})--(\ref{eq:def-e-ts}), assume that ensembles
$(\bcA_j,\bpA{j})$ and $(\bcAp_j,\bpAp{j})$ have a hash property for all
$j\in\K$. Then, for any $\delta>0$ and all sufficiently large $n$, there
are functions (sparse matrices) $\{A_j\}_{j\in\K}$, $\{A'_j\}_{j\in\K}$,
and vectors $\{\ba_j\}_{j\in\K}$, $\uu$ such that $A_j\in\A_j$,
$A'_j\in\A'_j$, $\ba_j\in\im\A_j$, $\uu\in\U^n$, and
\begin{equation}
  \Error(A_{\K},A'_{\K},\ba_{\K},\uu)<\delta.
  \label{eq:error-ts}
\end{equation}
\end{thm}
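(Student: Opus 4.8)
The plan is to fix a message tuple $\mm_{\K}$, show that the decoding-error probability — averaged over the random $\sfA_{\K},\sfA'_{\K}$, the uniformly distributed $\sfaa_{\K}$, and the channel — tends to $0$, and then extract a single good quadruple $(A_{\K},A'_{\K},\ba_{\K},\uu)$ by a routine averaging argument (averaging also over $\mm_{\K}$, which is uniform). First I would fix $\uu$ with $D(\nu_{\uu}\|\mu_U)=o(1)$; then, crucially, the conditional independence built into (\ref{eq:markov-ts}) makes each of $\T_{X_j|U,\gamma}(\uu)$ and $\T_{X_j|UX_{\J^c},\gamma}(\uu,\xx_{\J^c})$ have cardinality $2^{n[H(X_j|U)+o(1)]}$ for every $\xx_{\J^c}$. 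I would then split the decoding-error event into: (i) some encoder outputs $\xx_j\notin\T_{X_j|U,\gamma}(\uu)$; (ii) every $\xx_j\in\T_{X_j|U,\gamma}(\uu)$ but $(\uu,\xx_{\K})$ is not jointly typical for $\mu_{UX_{\K}}$; (iii) $(\uu,\xx_{\K})$ is jointly typical but $\yy$ is atypical given $\xx_{\K}$; (iv) $(\uu,\xx_{\K},\yy)$ is jointly typical yet the decoder returns a coset-mate whose $A'_{\K}$-image differs from $A'_{\K}\xx_{\K}=\mm_{\K}$.

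Piece (i) follows from Lemma~\ref{lem:SP}: by Lemma~\ref{lem:hash-AB} the product of $(\bcA_j,\bpA{j})$ and $(\bcAp_j,\bpAp{j})$ has a hash property, hence satisfies (\ref{eq:whash}), so Lemma~\ref{lem:SP} applied to the combined function $(A_j,A'_j)$ with $\T=\T_{X_j|U,\gamma}(\uu)$ yields a bound tending to $0$, because $|\im\A_j||\im\A'_j|=2^{n[H(X_j|U)-\e_j]}$ by (\ref{eq:Rjrj-SP-ts}) is exponentially smaller than $|\T_{X_j|U,\gamma}(\uu)|$ once $\gamma$ is a sufficiently small constant; the minimum-divergence encoder therefore produces a sequence of $\T_{X_j|U,\gamma}(\uu)$ with probability approaching $1$, which also absorbs the case of an empty coset. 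Piece (iii) is the standard conditional-typicality estimate for a stationary memoryless channel. For piece (iv), on the event that $(\uu,\xx_{\K},\yy)$ is jointly typical with parameter $\gamma^{*}$, any coset-mate $\xx'_{\K}\in\C_{A_{\K}}(A_{\K}\xx_{\K})$ that the decoder could prefer has $D(\nu_{\xx'_{\K}|\uu\yy}\|\mu_{X_{\K}|UY}|\nu_{\uu\yy})\leq D(\nu_{\xx_{\K}|\uu\yy}\|\mu_{X_{\K}|UY}|\nu_{\uu\yy})<\gamma^{*}$, hence lies in $\G\equiv\T_{X_{\K}|UY,\gamma^{*}}(\uu,\yy)$; invoking Lemma~\ref{lem:multi-CRP} with $\uu_{\K}=\xx_{\K}$ and this $\G$ bounds the probability of such a coset-mate by $\sum_{\emptyset\neq\J\subset\K}\frac{|\G_{\J|\J^c}|\,\alpha_{\sfA_{\J}}[\beta_{\sfA_{\J^c}}+1]}{\prod_{j\in\J}|\im\A_j|}+\beta_{\sfA_{\K}}$, and since $|\G_{\J|\J^c}|\leq 2^{n[H(X_{\J}|U,X_{\J^c},Y)+o(1)]}$ while $\prod_{j\in\J}|\im\A_j|=2^{n\sum_{j\in\J}r_j}$, the strict inequalities (\ref{eq:rj-CRP-ts}) — whose slack $\e$, defined by (\ref{eq:def-e-ts}), is exactly the method-of-types penalty appearing in the bound on $|\G_{\J|\J^c}|$ when $\gamma^{*}\leq 2\sum_{j}\e_j$ — drive every summand to $0$ via (\ref{eq:multi-alpha})--(\ref{eq:multi-beta}).

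The main obstacle is piece (ii): the $k$ encoders act independently and each minimizes divergence conditioned only on $\uu$, so a priori the per-coordinate conditional types $\nu_{\xx_j|\uu}$ may each match $\mu_{X_j|U}$ while the jointly produced $(\uu,\xx_{\K})$ still carries spurious correlation that would push $\gamma^{*}$ above $2\sum_j\e_j$. I would handle it by showing first — using the uniformity built into the hash property (\ref{eq:hash}) together with the saturation estimate — that over the random $(\sfA_j,\sfA'_j,\sfaa_j)$ the minimum-divergence output $\xx_j$ is, conditionally on $\uu$, essentially uniformly distributed over a conditional type class close to $\mu_{X_j|U}$; since the $k$ outputs are then independent and each conditionally ``generic'', and since $X_j$ is conditionally independent of $X_{\K\setminus\{j\}}$ given $U$ in (\ref{eq:markov-ts}), a conditional Markov-lemma argument — peeling the encoders off one at a time and checking at step $j$ that the coset meets the conditional typical set of $\xx_j$ given $\uu$ and the earlier outputs $\xx_1,\dots,\xx_{j-1}$, a set which by that conditional independence has the same exponential size as $\T_{X_j|U,\gamma}(\uu)$ and hence still exceeds $|\im\A_j||\im\A'_j|$ — gives joint conditional typicality of $(\uu,\xx_{\K})$ with a parameter at most $2\sum_{j}\e_j$, precisely the quantity feeding (\ref{eq:def-e-ts}). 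Once all four pieces vanish, the average of $\Error(A_{\K},A'_{\K},\ba_{\K},\uu)$ over $\mm_{\K}$ and $(\sfA_{\K},\sfA'_{\K},\sfaa_{\K})$ is below $\delta$ for all large $n$, so suitable $A_j\in\A_j$, $A'_j\in\A'_j$, $\ba_j\in\im\A_j$ exist; and since the sparse-matrix ensemble of \cite{HASH-BC} has a hash property, the $A_j,A'_j$ may be taken to be sparse matrices.
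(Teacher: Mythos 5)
You correctly identify the two nonstandard points (the spurious-correlation event and the use of the multi-domain collision-resistance lemma), but both of your resolutions have genuine gaps. First, your treatment of piece (ii) does not work as stated. You propose to show that the minimum-divergence output is ``essentially uniformly distributed over a conditional type class'' and then to peel off encoders with a conditional Markov-lemma argument, checking at step $j$ that the coset meets $\T_{X_j|UX_{1}\cdots X_{j-1},\gamma}(\uu,\xx_1,\ldots,\xx_{j-1})$. But the $j$-th encoder selects the element of its coset minimizing $D(\nu_{\xx_j'|\uu}\|\mu_{X_j|U}|\nu_{\uu})$, i.e.\ conditioned on $\uu$ only; the fact that the coset contains \emph{some} element jointly typical with the earlier outputs does not imply that the element actually selected is that one --- the coset may contain a lower-divergence-given-$\uu$ sequence that is badly correlated with $\xx_1,\ldots,\xx_{j-1}$. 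The near-uniformity claim itself is not a consequence of (\ref{eq:hash}) or of Lemmas~\ref{lem:SP}--\ref{lem:multi-CRP}, and nothing in your sketch supplies it quantitatively. The paper sidesteps any claim about the distribution of the encoder output: if the outputs are individually typical but $I(\xx_{\K}|\uu)$ exceeds the threshold in (\ref{eq:MACk+1}), then $H(\xx_{\K}|\uu)\le\sum_j[r_j+R_j]-\gamma$, so the output tuple lies in the small set $\G(\uu)$, and the probability that the random cosets contain \emph{any} element of $\G(\uu)$ is bounded by $|\G(\uu)|/\prod_j|\im\hcA_j|\le 2^{-n[\gamma-\lambda_{\X_{\K}}]}$ using only the uniformity of $\sfhaa_{\K}$ and Lemma~\ref{lem:typical-number}. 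Your argument would have to be rebuilt along these lines (a union bound over the bad set weighted by $1/\prod_j|\im\hcA_j|$), which is a different mechanism from the Markov-lemma route you describe.

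Second, in piece (iv) you invoke Lemma~\ref{lem:multi-CRP} ``with $\uu_{\K}=\xx_{\K}$'' while averaging over $\sfA_{\K}$, but $\xx_{\K}$ is produced by the same random functions $\sfA_{\K},\sfA'_{\K}$, so the lemma cannot be applied with $\xx_{\K}$ treated as fixed. The paper decouples via $\chi(\hg_{\sfA_j\sfA'_j}(\sfaa_j,M_j|\uu)=\xx_j)\le\chi(\sfA_j\xx_j=\sfaa_j)\chi(\sfA'_j\xx_j=M_j)$, takes the expectation over $(\sfaa_j,M_j)$ first (Lemma~\ref{lem:E}), and then sums over candidate codewords --- and this is exactly where the trimmed sets $\T_j(\uu)\subset\T_{X_j|U,\gamma}(\uu)$ of (\ref{eq:Tj}), with $|\T_j(\uu)|\le 2\kappa|\im\hcA_j|$ and $\kappa$ subexponential ((\ref{eq:k2})--(\ref{eq:k3})), are indispensable: the codeword sum then costs only $(2\kappa)^k$, which is absorbed because the exponent $\sum_{j\in\J}r_j-H(X_{\J}|U,X_{\J^c},Y)-\eta_{\X_{\K}|\U\Y}(\gamma'|\gamma')$ guaranteed by (\ref{eq:rj-CRP-ts}) is positive, and $\kappa^k\beta_{\sfA_{\K}}\to0$ handles the additive $\beta_{\sfA_{\K}}$ term. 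In your version the saturation step is run on the full sets $\T_{X_j|U,\gamma}(\uu)$, so the corresponding codeword sum would cost roughly $2^{n\sum_j\e_j}$, which the collision-resistance margin (only the unspecified strict-inequality slack beyond $\e$ in (\ref{eq:RJ-ts})) need not beat, and the constant $\beta_{\sfA_{\K}}$ term would not vanish after the sum at all. You also omit the companion observation that $\T_j(\uu)$ must consist of the lowest-divergence sequences so that ``output $\notin\T_j(\uu)$'' implies ``$\T_j(\uu)\cap\C_{\hA_j}(\haa_j)=\emptyset$,'' which is what lets Lemma~\ref{lem:SP} still control the encoder after trimming. Without the $\kappa$-trimming device (or an equivalent), pieces (i) and (iv) of your plan are not compatible, so the proof as proposed does not go through.
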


Next, we construct a code with $R_{\K}\in\R(\{\mu_{X_j}\}_{j\in\K})$ by
letting $U$ be a constant, that is, $|\U|=1$. Although the result is
straightforward, we describe the corollary which is used in the next
section. Condition (\ref{eq:sumRi-ts}) is replaced by (\ref{eq:sumRi}).
Condition (\ref{eq:RJ-ts}) is replaced by
\begin{equation}
  \sum_{j\in\J} [R_j+\e_j]
  < I(X_{\J};Y|X_{\J^c})-\e
  \quad\text{for all}\ \J\subset\K,
  \label{eq:RJ}
\end{equation}
where $\e$ is defined as
\begin{equation}
  \e\equiv
  \eta_{\X_{\K}|\Y}\lrsb{
    \left.
      2\sum_{j\in\K}\e_j
    \right|
    2\sum_{j\in\K}\e_j
  }.
  \label{eq:def-e}
\end{equation}
Definition (\ref{eq:rjRj-ts}) is replaced by
\begin{equation}
  r_j=H(X_j)-R_j-\e_j
  \quad\text{for all}\ j\in\K.
  \label{eq:rjRj}
\end{equation}
Functions $\hg_{A_jA'_j}$ and $\hg_{A_{\K}}$ can be replaced by
\begin{align*}
  \hg_{A_jA'_j}(\ba_j,\mm_j)
  &\equiv
  \arg\min_{\xx_j'\in\C_{A_jA'_j}(\ba_j,\mm_j)}
  D(\nu_{\xx_j'}\|\mu_{X_j})
  \\
  \hg_{A_{\K}}(\ba_{\K}|\yy)
  &\equiv
  \arg\min_{\substack{
      \xx_{\K}':\\
      \xx_j'\in\C_{A_j}(\ba_j)
      \\
      \text{for all}\ j\in\K
  }}
  D(\nu_{\xx_{\K}'\yy}\|\mu_{X_{\K}Y}),
\end{align*}
respectively, where $\C_{A_jA'_j}(\ba_j,\mm_j)$ is defined by
(\ref{eq:CAAp}).

Figure \ref{fig:mac-code} illustrates the code construction for $k=2$.
We have the following corollary.
\begin{cor}
\label{thm:mac}
Let $\mu_{Y|X_{\K}}$ be the conditional probability distribution of a
stationary memoryless channel and $\mu_{X_{\K}Y}$ be defined by
(\ref{eq:markov}) for a given $\{\mu_{X_j}\}_{j\in\K}$. For given
$R_{\K}\in\R(\{\mu_{X_j}\}_{j\in\K})$ and $\{\e_j\}_{j\in\K}$,
satisfying (\ref{eq:rj}), (\ref{eq:Rj}), and
(\ref{eq:rjRj})--(\ref{eq:def-e}), assume that ensembles
$(\bcA_j,\bpA{j})$ and $(\bcAp_j,\bpAp{j})$ have a hash property for all
$j\in\K$. Then, for any $\delta>0$ and all sufficiently large $n$, there
are functions $\{A_j\}_{j\in\K}$, $\{A'_j\}_{j\in\K}$, and vectors
$\{\ba_j\}_{j\in\K}$ such that $A_j\in\A_j$, $A'_j\in\A'_j$,
$\ba_j\in\im\A_j$, and $\Error(A_{\K},A'_{\K},\ba_{\K})<\delta$, where
$\Error(A_{\K},A'_{\K},\ba_{\K})$ denotes the error probability.
\end{cor}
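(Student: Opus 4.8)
The plan is to obtain Corollary~\ref{thm:mac} as the degenerate case $|\U|=1$ of Theorem~\ref{thm:mac-ts}, so that no new argument is needed beyond checking that every ingredient of the theorem collapses to the corresponding ingredient of the corollary. First I would fix $\U$ to be a one-point set and let $\uu\in\U^n$ be the unique (constant) sequence. Then $\mu_{X_j|U}(x_j|u)=\mu_{X_j}(x_j)$, so the joint law (\ref{eq:markov-ts}) reduces to (\ref{eq:markov}) and the region $\R(\mu_U,\{\mu_{X_j|U}\}_{j\in\K})$ reduces to $\R(\{\mu_{X_j}\}_{j\in\K})$; moreover the empirical divergences satisfy $D(\nu_{\xx_j'|\uu}\|\mu_{X_j|U}|\nu_{\uu})=D(\nu_{\xx_j'}\|\mu_{X_j})$ and $D(\nu_{\uu\xx_{\K}'\yy}\|\mu_{UX_{\K}Y})=D(\nu_{\xx_{\K}'\yy}\|\mu_{X_{\K}Y})$, so the encoders $\hg_{A_jA'_j}(\ba_j,\mm_j|\uu)$ and the decoder $\hg_{A_{\K}}(\ba_{\K}|\yy,\uu)$ become exactly the maps written just before the corollary. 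Next I would check the parameters: since $|\U|=1$, the definition (\ref{eq:def-etac}) gives $\eta_{\X_{\K}|\U\Y}(\cdot|\cdot)=\eta_{\X_{\K}|\Y}(\cdot|\cdot)$, hence (\ref{eq:def-e-ts}) coincides with (\ref{eq:def-e}); similarly (\ref{eq:RJ-ts}) becomes (\ref{eq:RJ}) and (\ref{eq:rjRj-ts}) becomes (\ref{eq:rjRj}). Thus the hypotheses of Corollary~\ref{thm:mac} are precisely those of Theorem~\ref{thm:mac-ts} specialized to a constant $U$, and the hash-property assumption on $(\bcA_j,\bpA{j})$ and $(\bcAp_j,\bpAp{j})$ is untouched.

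Applying Theorem~\ref{thm:mac-ts} then yields, for every $\delta>0$ and all sufficiently large $n$, functions $\{A_j\}_{j\in\K}$, $\{A'_j\}_{j\in\K}$ and vectors $\{\ba_j\}_{j\in\K},\uu$ with $\Error(A_{\K},A'_{\K},\ba_{\K},\uu)<\delta$; but in the degenerate case $\uu$ is forced to be the unique constant sequence and the error quantity no longer depends on it, so this is literally $\Error(A_{\K},A'_{\K},\ba_{\K})<\delta$, which is the claim. If instead one preferred a self-contained argument, one would mimic the proof of Theorem~\ref{thm:mac-ts}: the analog of (\ref{eq:Rjrj-SP-ts}), namely $r_j+R_j=H(X_j)-\e_j$, together with Lemma~\ref{lem:SP} controls the average encoding error, while the analog of (\ref{eq:rj-CRP-ts}), namely $\sum_{j\in\J}r_j>H(X_{\J}|X_{\J^c},Y)+\e$ for all $\J\subset\K$, together with the multiple collision-resistance bound of Lemma~\ref{lem:multi-CRP} controls the average decoding error; one then removes the averaging over the ensembles and over $\ba_{\K}$ by the usual union/expectation argument.

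The only point that requires care — the \emph{obstacle}, though a mild one — is verifying that the proof of Theorem~\ref{thm:mac-ts} nowhere secretly uses $|\U|\ge 2$: one must confirm that the coded time-sharing machinery (minimum conditional-divergence encoding, the typical-set decoder conditioned on $\uu$, and the type-counting estimates built from $\eta_{\X_{\K}|\U\Y}$) stays valid when $\U$ is a singleton. Inspection shows that it does, since the type lemmas invoked hold for arbitrary finite alphabets, including one-point sets; hence the reduction is legitimate and Corollary~\ref{thm:mac} follows immediately.
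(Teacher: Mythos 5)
Your reduction to Theorem~\ref{thm:mac-ts} with $|\U|=1$ (constant $U$, so that the conditional divergences, the region, and the quantities in (\ref{eq:RJ-ts}), (\ref{eq:def-e-ts}), (\ref{eq:rjRj-ts}) collapse to their unconditional counterparts) is exactly how the paper obtains Corollary~\ref{thm:mac}: it is stated there as the straightforward specialization of the time-sharing construction, with the encoder/decoder maps and conditions replaced precisely as you describe. Your argument is correct and essentially identical to the paper's treatment, including the observation that the error probability no longer depends on the degenerate $\uu$.
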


\begin{figure}[t]
  \begin{center}
    \unitlength 0.40mm
    \begin{picture}(130,99)(0,6)
      \put(65,89){\makebox(0,0){Encoders}}
      \put(45,60){\makebox(0,0){$\ba_1$}}
      \put(55,60){\vector(1,0){10}}
      \put(10,36){\makebox(0,0){$\mm_1$}}
      \put(20,36){\vector(1,0){45}}
      \put(65,22){\framebox(24,52){$\hg_{A_1A'_1}$}}
      \put(89,48){\vector(1,0){20}}
      \put(115,48){\makebox(0,0){$\xx_1$}}
      \put(30,15){\framebox(70,66){}}
    \end{picture}
    \\
    \begin{picture}(130,61)(0,15)
      \put(45,60){\makebox(0,0){$\ba_2$}}
      \put(55,60){\vector(1,0){10}}
      \put(10,36){\makebox(0,0){$\mm_2$}}
      \put(20,36){\vector(1,0){45}}
      \put(65,22){\framebox(24,52){$\hg_{A_2A'_2}$}}
      \put(89,48){\vector(1,0){20}}
      \put(115,48){\makebox(0,0){$\xx_2$}}
      \put(30,15){\framebox(70,66){}}
    \end{picture}
    \\
    \begin{picture}(188,99)(0,6)
      \put(94,89){\makebox(0,0){Decoder}}
      \put(30,60){\makebox(0,0){$\ba_1$}}
      \put(35,60){\vector(1,0){10}}
      \put(30,43.5){\makebox(0,0){$\ba_2$}}
      \put(35,43.5){\vector(1,0){10}}
      \put(0,27){\makebox(0,0){$\yy$}}
      \put(10,27){\vector(1,0){35}}
      \put(45,13){\framebox(34,61){$\hg_{A_1A_2}$}}
      \put(79,60){\vector(1,0){10}}
      \put(100,60){\makebox(0,0){$\xx_1$}}
      \put(111,60){\vector(1,0){10}}
      \put(121,53){\framebox(30,14){$A'_1$}}
      \put(151,60){\vector(1,0){20}}
      \put(178,60){\makebox(0,0){$\mm_1$}}
      \put(79,27){\vector(1,0){10}}
      \put(100,27){\makebox(0,0){$\xx_2$}}
      \put(111,27){\vector(1,0){10}}
      \put(121,20){\framebox(30,14){$A'_2$}}
      \put(151,27){\vector(1,0){20}}
      \put(178,27){\makebox(0,0){$\mm_2$}}
      \put(20,6){\framebox(142,75){}}
    \end{picture}
  \end{center}
  \caption{Construction of Multiple Access Channel Code: Private Messages}
  \label{fig:mac-code}
\end{figure}

\subsection{Multiple Common Messages}
\label{sec:han}

In the following, we consider the scenario (Fig.\ref{fig:han}) where
there are $\tk$ messages and $k$ senders transmit messages common to
some users.

In the following, we assume that for given $\mu_{Y|X_{\K}}$,
$\{\mu_{\tX_i}\}_{i\in\tK}$, and $\{f_j\}_{j\in\K}$, the rate vector
$R_{\tK}$ satisfies
$R_{\tK}\in\R_{\mathrm H}(\{\mu_{\tX_i}\}_{i\in\tK},\{f_j\}_{j\in\K})$.
For a given $k$-input multiple access channel $\mu_{Y|X_{\K}}$, let us
consider a $\tk$-input multiple access channel $\mu_{Y|\tX_{\tK}}$
defined as
\[
  \mu_{Y|\tX_{\tK}}(y|\tx_{\tK})
  \equiv
  \sum_{x_{\K}}
  \mu_{Y|X_{\K}}(y|x_{\K})
  \prod_{j\in\K}\chi(f_j(\tx_{\tK_j})=x_j).
\]
Then the scenario of multiple common messages for the channel
$\mu_{Y|\X_{\K}}$ can be reduced to the scenario of private messages
for the channel $\mu_{Y|\tX_{\tK}}$ in which the $i$-th input terminal
has access to its private message $M_i$ and there is no common message.
Then, by applying Corollary \ref{thm:mac} to the channel
$\mu_{Y|\tX_{\tK}}$, we have the fact that there is a code
$(\Encoder_{\tK},\Decoder)$ for this channel at $R_{\tK}$ satisfying
(\ref{eq:sumRi-han}). Figure \ref{fig:han-reduction} illustrates the
construction of the code for the channel $\mu_{Y|\tX_{\tK}}$. A code
$(\widetilde{\Encoder}_{\K},\widetilde{\Decoder})$ for the channel
$\mu_{Y|X_{\K}}$ is given as
\begin{align*}
  \widetilde{\Encoder}_j(\mm_{\tK_j})
  &\equiv \ff_j\lrsb{\{\Encoder_i(\mm_i)\}_{i\in\tK_j}}
  \\
  \widetilde{\Decoder}(\yy)
  &\equiv
  \Decoder(\yy)
\end{align*}
for a multiple message $\mm_{\tK}$, where
\[
  \ff_j(\txx_{\tK_j})
  \equiv(f_j(\tx_{\tK_j,1}),\ldots,f_j(\tx_{\tK_j,n}))
\]
for each  $j\in\K$ and  $\txx_{\tK_j}\equiv\{\txx_i\}_{i\in\tK_j}$.
Figure \ref{fig:han-code-j} illustrates the construction of the $j$-th
encoder, where we define $\tk_j\equiv|\tK_j|$.

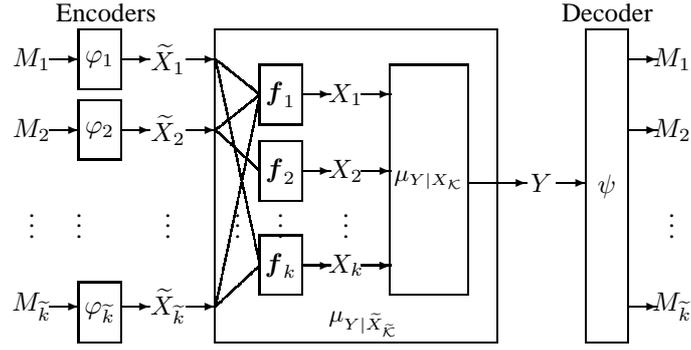
\begin{figure}
  \begin{center}
    \unitlength 0.39mm
    \begin{picture}(225,120)(-62,-6)
      \put(-32,106){\makebox(0,0){Encoders}}
      \put(-57,90){\makebox(0,0){$M_1$}}
      \put(-57,66){\makebox(0,0){$M_2$}}
      \put(-57,36){\makebox(0,0){$\vdots$}}
      \put(-57,6){\makebox(0,0){$M_{\tk}$}}
      \put(-51,90){\vector(1,0){10}}
      \put(-51,66){\vector(1,0){10}}
      \put(-51,6){\vector(1,0){10}}
      \put(-41,80){\framebox(14,20){$\Encoder_1$}}
      \put(-41,56){\framebox(14,20){$\Encoder_2$}}
      \put(-41,36){\makebox(0,0){$\vdots$}}
      \put(-41,-6){\framebox(14,20){$\Encoder_{\tk}$}}
      \put(-27,90){\vector(1,0){10}}
      \put(-27,66){\vector(1,0){10}}
      \put(-27,6){\vector(1,0){10}}
      \put(-11,90){\makebox(0,0){$\tX_1$}}
      \put(-11,66){\makebox(0,0){$\tX_2$}}
      \put(-11,36){\makebox(0,0){$\vdots$}}
      \put(-11,6){\makebox(0,0){$\tX_{\tk}$}}
      \put(-5,90){\vector(1,0){10}}
      \put(-5,66){\vector(1,0){10}}
      \put(-5,6){\vector(1,0){10}}
      \qbezier(5,90)(12.5,84)(20,78)
      \qbezier(5,90)(12.5,55)(20,20)
      \qbezier(5,66)(12.5,72)(20,78)
      \qbezier(5,66)(12.5,59)(20,52)
      \put(12.5,36){\makebox(0,0){$\vdots$}}
      \qbezier(5,6)(12.5,13)(20,20)
      \qbezier(5,6)(12.5,42)(20,78)
      \put(20,68){\framebox(14,20){$\ff_1$}}
      \put(20,42){\framebox(14,20){$\ff_2$}}
      \put(27,36){\makebox(0,0){$\vdots$}}
      \put(20,10){\framebox(14,20){$\ff_k$}}
      \put(54,20){\vector(1,0){10}}
      \put(34,78){\vector(1,0){10}}
      \put(49,78){\makebox(0,0){$X_1$}}
      \put(54,78){\vector(1,0){10}}
      \put(34,52){\vector(1,0){10}}
      \put(49,52){\makebox(0,0){$X_2$}}
      \put(54,52){\vector(1,0){10}}
      \put(49,36){\makebox(0,0){$\vdots$}}
      \put(34,20){\vector(1,0){10}}
      \put(49,20){\makebox(0,0){$X_k$}}
      \put(54,20){\vector(1,0){10}}
      \put(64,10){\framebox(26,78){\small $\mu_{Y|X_{\K}}$}}
      \put(90,48){\vector(1,0){20}}
      \put(5,-6){\framebox(95,106){}}
      \put(55,0){\makebox(0,0){\small $\mu_{Y|\tX_{\tK}}$}}
      \put(115,48){\makebox(0,0){$Y$}}
      \put(120,48){\vector(1,0){10}}
      \put(137,106){\makebox(0,0){Decoder}}
      \put(130,-6){\framebox(14,106){$\Decoder$}}
      \put(144,90){\vector(1,0){10}}
      \put(159,90){\makebox(0,0){$M_1$}}
      \put(144,66){\vector(1,0){10}}
      \put(159,66){\makebox(0,0){$M_2$}}
      \put(159,36){\makebox(0,0){$\vdots$}}
      \put(144,6){\vector(1,0){10}}
      \put(159,6){\makebox(0,0){$M_{\tk}$}}
    \end{picture}
  \end{center}
  \caption{Reduction of Multiple Common Messages to Private Messages}
  \label{fig:han-reduction}
\end{figure}

\begin{figure}
  \begin{center}
    \unitlength 0.6mm
    \begin{picture}(126,105)(0,0)
      \put(65,97){\makebox(0,0){Encoder}}
      \put(5,76){\makebox(0,0){$\mm_{i_1}$}}
      \put(5,52){\makebox(0,0){$\mm_{i_2}$}}
      \put(5,36){\makebox(0,0){$\vdots$}}
      \put(5,16){\makebox(0,0){$\mm_{i_{\tk_j}}$}}
      \put(14,76){\vector(1,0){20}}
      \put(14,52){\vector(1,0){20}}
      \put(14,16){\vector(1,0){20}}
      \put(34,66){\framebox(14,20){$\Encoder_{i_1}$}}
      \put(34,42){\framebox(14,20){$\Encoder_{i_2}$}}
      \put(44,36){\makebox(0,0){$\vdots$}}
      \put(34,6){\framebox(14,20){$\Encoder_{i_{\tk_j}}$}}
      \put(48,76){\vector(1,0){10}}
      \put(48,52){\vector(1,0){10}}
      \put(48,16){\vector(1,0){10}}
      \put(65,76){\makebox(0,0){$\txx_{i_1}$}}
      \put(65,52){\makebox(0,0){$\txx_{i_2}$}}
      \put(65,36){\makebox(0,0){$\vdots$}}
      \put(65,16){\makebox(0,0){$\txx_{i_{\tk_j}}$}}
      \put(72,76){\vector(1,0){10}}
      \put(72,52){\vector(1,0){10}}
      \put(72,16){\vector(1,0){10}}
      \put(82,6){\framebox(14,80){$\ff_j$}}
      \put(96,46){\vector(1,0){20}}
      \put(121,46){\makebox(0,0){$\xx_j$}}
      \put(24,0){\framebox(82,92){}}
    \end{picture}
  \end{center}
  \caption{Construction of $j$-th Encoder}
  \label{fig:han-code-j}
\end{figure}
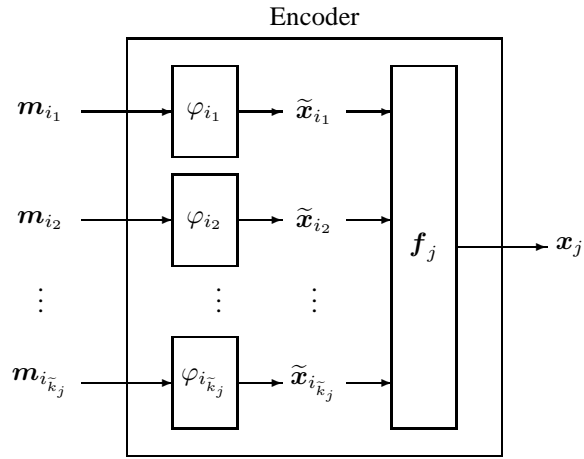

\subsection{Two-user Multiple Access Channel Coding: Private and Common
  Messages}
\label{sec:sw}

In this section we consider a scenario (Fig.\ref{fig:sw}) in which one
of two senders has access to messages $M_0$ and $M_1$ and another sender
has access to messages $M_0$ and $M_2$. We construct a code based on a
method that is analogous to a superposition coding.

For given $\mu_{Y|X_1X_2}$, $\mu_{X_1|X_0}$, $\mu_{X_2|X_0}$, and $\mu_{X_0}$,
assume that $(R_0,R_1,R_2)$ satisfies (\ref{eq:R0-sw})--(\ref{eq:R0R1R2-sw})
and the following three conditions
\begin{gather}
  R_0<I(X_0;X_1,X_2,Y)
  \label{eq:R0-sw-hash}
  \\
  R_0+R_1 < I(X_0,X_1;X_2,Y)
  \label{eq:R0R1-sw-hash}
  \\
  R_0+R_2 < I(X_0,X_2;X_1,Y),
  \label{eq:R0R2-sw-hash}
\end{gather}
which will be eliminated by using the rate-splitting technique
introduced later. Then there is $\e$ such that $\e>0$ and
\begin{gather}
  R_0+\e_0
  < I(X_0;X_1,X_2,Y)-\e
  \label{eq:R0e-sw}
  \\
  R_1+\e_1
  < I(X_1;Y|X_0,X_2)-\e
  \label{eq:R1e-sw}
  \\
  R_2+\e_1
  < I(X_2;Y|X_0,X_1)-\e
  \label{eq:R2e-sw}
  \\
  R_0+R_1+\e_0+\e_1
  < I(X_0,X_1;X_2,Y)-\e
  \label{eq:R0R1e-sw}
  \\
  R_0+R_2+\e_0+\e_2
  < I(X_0,X_2;X_1,Y)-\e
  \label{eq:R0R2e-sw}
  \\
  R_1+R_2+\e_1+\e_2
  < I(X_1,X_2;Y|X_0)-\e
  \label{eq:R1R2e-sw}
  \\
  R_0+R_1+R_2+\e_0+\e_1+\e_2
  < I(X_1,X_2;Y)-\e,
  \label{eq:R0R1R2e-sw}
\end{gather}
where $\e$ is defined by
\begin{equation}
  \e\equiv
  \eta_{\X_{\tK}|\Y}\lrsb{
    \left.
      2\sum_{j\in\tK}\e_j
    \right|
    2\sum_{j\in\tK}\e_j
  }.
  \label{eq:def-e-sw}
\end{equation}
For each $j\in\tK\equiv\{0,1,2\}$, let $r_j$ be defined as
\begin{align}
  r_0&\equiv H(X_0)-R_0-\e_0
  \label{eq:r0-sw}
  \\
  r_1&\equiv H(X_1|X_0)-R_1-\e_1
  \label{eq:r1-sw}
  \\
  r_2&\equiv H(X_2|X_0)-R_2-\e_2.
  \label{eq:r2-sw}
\end{align}
From (\ref{eq:markov-sw}), (\ref{eq:rjRj-ts}), and
(\ref{eq:R0e-sw})--(\ref{eq:R2e-sw}), we have
\begin{align*}
  r_0
  \geq I(X_0;X_1,X_2,Y)-R_0-\e_0
  >0
  \\
  r_1
  \geq I(X_1;Y|X_0,X_2)-R_1-\e_1
  >0
  \\
  r_2
  \geq I(X_2;Y|X_0,X_1)-R_2-\e_2
  >0.
\end{align*}
For $i\in\tK$, let $(\bcA_i,\bpA{i})$ and $(\bcAp_i,\bpAp{i})$ be
ensembles of functions, and let $\A_i\in\bcA_i$ and $\A'_i\in\bcAp_i$.
Let $A_i\in\A_i$ and $A'_i\in\A'_i$ be functions
\begin{align*}
  A_i&:\X_i^n\to\im\A_i
  \\
  A'_i&:\X_i^n\to\im\A'_i,
\end{align*}
respectively. We assume that ensembles satisfy
\begin{align}
  r_i&=\frac{\log|\im\A_i|}n
  \label{eq:rj-sw}
  \\
  R_i&=\frac{\log|\im\A'_i|}n.
  \label{eq:Rj-sw}
\end{align}
Let $\M_i$ be the set of messages defined as
\[
  \M_i\equiv\im\A'_i.
\]
Then $(R_0,R_1,R_2)$ represents the encoding rate of this code. We
assume that, for each $j\in\K\equiv\{1,2\}$, the $j$-th encoder and a
decoder share functions $A_0\in\A_0$, $A'_0\in\A'_0$, $A_j\in\A_j$,
$A'_j\in\A'_j$, and vectors $\ba_0\in\im\A_0$ and $\ba_j\in\im\A_j$.

Let $(\mm_0,\mm_1,\mm_2)\in\M_{\tK}$, be a multiple message.
For each $j\in\K$, we define the $j$-th encoder as
\[
  \Encoder_j(\mm_0,\mm_j)
  \equiv
  g_{A_jA'_j}(\ba_j,\mm_j|g_{A_0A'_0}(\ba_0,\mm_0)),
\]
where
\begin{align*}
  g_{A_0A'_0}(\ba_0,\mm_0)
  &\equiv
  \arg\min_{\xx_0'\in\C_{A_0A'_0}(\ba_0,\mm_0)}
  D(\nu_{\xx_0'}\|\mu_{X_0})
  \\
  g_{A_jA'_j}(\ba_j,\mm_j|\xx_0)
  &\equiv
  \arg\min_{\xx_j'\in\C_{A_jA'_j}(\ba_j,\mm_j)}
  D(\nu_{\xx_j'|\xx_0}\|\mu_{X_j|X_0}|\nu_{\xx_0}),
\end{align*}
where $\C_{A_jA'_j}(\ba_j,\mm_j)$ is defined by (\ref{eq:CAAp}). We
define the decoder as
\[
  \Decoder(\yy)
  \equiv(A'_0,A'_1,A'_2)g_{A_0A_1A_2}(\ba_0,\ba_1,\ba_2|\yy)
\]
for a channel output $\yy\in\Y^n$, where
\begin{align*}
  g_{A_0A_1A_2}(\ba_0,\ba_1,\ba_2|\yy)
  &\equiv
  \arg
  \min_{\substack{
      (\xx_0',\xx_1',\xx_2'):\\
      \xx_0'\in\C_{A_0}(\ba_0)\\
      \xx_1'\in\C_{A_1}(\ba_1)\\
      \xx_2'\in\C_{A_2}(\ba_2)
  }}
  D(\nu_{\xx_0'\xx_1'\xx_2'\yy}\|\mu_{X_0X_1X_2Y})
  \\
  (A'_0,A'_1,A'_2)(\xx_0,\xx_1,\xx_2)
  &\equiv
  (A'_0\xx_0,A'_1\xx_1,A'_2\xx_2).
\end{align*}
Figure \ref{fig:mac-code-sw} illustrates the code construction.
It should be noted that the construction is analogous to the
superposition coding introduced in \cite{C72}, where the function
$\hg_{A_0A_0'}$ finds a cloud center $\xx_0$ and the function
$\hg_{A_jA'_j}$ finds a satellite $\xx_j$ of the cloud center $\xx_0$.

\begin{figure}[t]
  \begin{center}
    \unitlength 0.40mm
    \begin{picture}(176,99)(0,6)
      \put(88,89){\makebox(0,0){Encoders}}
      \put(95,68){\makebox(0,0){$\ba_1$}}
      \put(105,68){\vector(1,0){10}}
      \put(6,47){\makebox(0,0){$\mm_1$}}
      \put(16,48){\vector(1,0){99}}
      \put(115,22){\framebox(24,52){$g_{A_1A'_1}$}}
      \put(139,48){\vector(1,0){20}}
      \put(165,48){\makebox(0,0){$\xx_1$}}
      
      \put(36,37){\makebox(0,0){$\ba_0$}}
      \put(41,37){\vector(1,0){10}}
      \put(6,18){\makebox(0,0){$\mm_0$}}
      \put(16,19){\vector(1,0){35}}
      \put(51,12){\framebox(24,32){$g_{A_0A'_0}$}}
      \put(75,28){\vector(1,0){10}}
      \put(95,28){\makebox(0,0){$\xx_0$}}
      \put(105,28){\vector(1,0){10}}
      \put(26,6){\framebox(124,75){}}
    \end{picture}
    \\
    \begin{picture}(176,81)(0,6)
      \put(95,68){\makebox(0,0){$\ba_2$}}
      \put(105,68){\vector(1,0){10}}
      \put(6,47){\makebox(0,0){$\mm_2$}}
      \put(16,48){\vector(1,0){99}}
      \put(115,22){\framebox(24,52){$g_{A_2A'_2}$}}
      \put(139,48){\vector(1,0){20}}
      \put(165,48){\makebox(0,0){$\xx_2$}}
      \put(36,37){\makebox(0,0){$\ba_0$}}
      \put(41,37){\vector(1,0){10}}
      \put(6,18){\makebox(0,0){$\mm_0$}}
      \put(16,19){\vector(1,0){35}}
      \put(51,12){\framebox(24,32){$g_{A_0A'_0}$}}
      \put(75,28){\vector(1,0){10}}
      \put(95,28){\makebox(0,0){$\xx_0$}}
      \put(105,28){\vector(1,0){10}}
      \put(26,6){\framebox(124,75){}}
    \end{picture}
    \\
    \begin{picture}(188,99)(0,6)
      \put(94,89){\makebox(0,0){Decoder}}
      \put(30,66){\makebox(0,0){$\ba_0$}}
      \put(35,66){\vector(1,0){10}}
      \put(30,51){\makebox(0,0){$\ba_1$}}
      \put(35,51){\vector(1,0){10}}
      \put(30,36){\makebox(0,0){$\ba_2$}}
      \put(35,36){\vector(1,0){10}}
      \put(0,21){\makebox(0,0){$\yy$}}
      \put(10,21){\vector(1,0){35}}
      \put(45,13){\framebox(34,61){$g_{A_0A_1A_2}$}}
      \put(79,64){\vector(1,0){10}}
      \put(100,64){\makebox(0,0){$\xx_0$}}
      \put(111,64){\vector(1,0){10}}
      \put(121,57){\framebox(30,14){$A'_0$}}
      \put(151,64){\vector(1,0){20}}
      \put(178,64){\makebox(0,0){$\mm_0$}}
      \put(79,43.5){\vector(1,0){10}}
      \put(100,43.5){\makebox(0,0){$\xx_1$}}
      \put(111,43.5){\vector(1,0){10}}
      \put(121,36.5){\framebox(30,14){$A'_1$}}
      \put(151,43.5){\vector(1,0){20}}
      \put(178,43.5){\makebox(0,0){$\mm_1$}}
      \put(79,23){\vector(1,0){10}}
      \put(100,23){\makebox(0,0){$\xx_2$}}
      \put(111,23){\vector(1,0){10}}
      \put(121,16){\framebox(30,14){$A'_2$}}
      \put(151,23){\vector(1,0){20}}
      \put(178,23){\makebox(0,0){$\mm_2$}}
      \put(20,6){\framebox(142,75){}}
    \end{picture}
  \end{center}
  \caption{Construction of Multiple Access Channel Code: Private and
    Common Messages}
  \label{fig:mac-code-sw}
\end{figure}
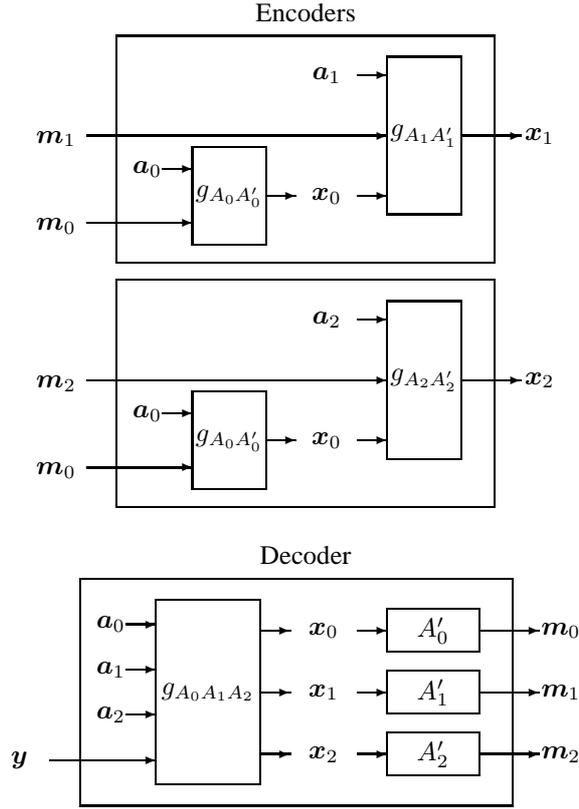

Here, we remark on relations (\ref{eq:rjRj-ts}) and
(\ref{eq:R0e-sw})--(\ref{eq:R0R1R2e-sw}). From these conditions and
(\ref{eq:markov-sw}), we have
\begin{align}
  r_0+R_0&=H(X_0)-\e_0
  \label{eq:R0r0-sw-SP}
  \\
  r_1+R_1&=H(X_1|X_0)-\e_1
  \label{eq:R1r1-sw-SP}
  \\
  r_2+R_2&=H(X_2|X_0)-\e_2
  \label{eq:R2r2-sw-SP}
\end{align}
and
\begin{align}
  r_0
  &>
  H(X_0|X_1,X_2,Y)+\e
  \label{eq:r0-sw-CRP}
  \\
  r_1
  &>
  H(X_1|X_0,X_2,Y)+\e
  \label{eq:r1-sw-CRP}
  \\
  r_2
  &>
  H(X_2|X_0,X_1,Y)+\e
  \label{eq:rj-sw-CRP}
  \\
  r_0+r_1
  &>
  H(X_0,X_1|X_2,Y)+\e
  \label{eq:r0r1-sw-CRP}
  \\
  r_0+r_2
  &>
  H(X_0,X_2|X_1,Y)+\e
  \label{eq:r0r2-sw-CRP}
  \\
  r_1+r_2
  &>
  H(X_1,X_2|X_0,Y)+\e
  \label{eq:r1r2-sw-CRP}
  \\
  r_0+r_1+r_2
  &>
  H(X_0,X_1,X_2|Y)+\e.
  \label{eq:r0r1r2-sw-CRP}
\end{align}
Conditions (\ref{eq:R0r0-sw-SP})--(\ref{eq:R2r2-sw-SP}) are sufficient
for the saturation property, that is, $\hg_{A_0A'_0}$ can find a typical
sequence corresponding to the message $\mm_0$, when the number
$2^{n[r_0+R_0]}$ of bins is smaller than the number of typical
sequences. Similarly, $\hg_{A_iA'_i}$ can find a conditionally typical
sequence for a given $\xx_0$ corresponding to the $i$-th message $\mm_i$
when the number $2^{n[r_i+R_i]}$ of bins is smaller than the number
of conditionally typical sequences. Conditions
(\ref{eq:r0-sw-CRP})--(\ref{eq:r0r1r2-sw-CRP}) are sufficient for the
collision-resistance property, that is, the decoding error probability
goes to zero if the rate $r_{\K}$ of the vector $\ba_{\K}$ is
in the Slepian-Wolf region of the correlated source coding.
It should be noted that when the channel input $(\xx_0,\xx_1,\xx_2)$ is
successfully decoded the decoder can recover the $i$-th message $\mm_i$
by operating $A'_i$ to $\xx_i$ because $\mm_i$ satisfies
$A'_i\xx_i=\mm_i$.

For each $i\in\tK$, let $M_i$ be a random variable corresponding to the
$i$-th message, where the probability distribution $p_{M_i}$ is uniform
on $\M_i$. Let $\Error(A_{\tK},A'_{\tK},\ba_{\tK})$ be the error
probability of this code. We have the following theorem.
\begin{thm}
\label{thm:sw}
Let $\mu_{Y|X_1X_2}$ be the conditional probability distribution of a
stationary memoryless channel and $\mu_{X_0X_1X_2Y}$ be defined by
(\ref{eq:markov-sw}) for given probability distributions $\mu_{X_0}$,
$\mu_{X_1|X_0}$, and $\mu_{X_2|X_0}$. For given $(r_0,r_1,r_2)$,
$(R_0,R_1,R_2)$, and $(\e_0,\e_1,\e_2)$ satisfying (\ref{eq:rjRj-ts})
and (\ref{eq:R0e-sw})--(\ref{eq:def-e-sw}), assume that ensembles
$(\bcA_j,\bpA{j})$ and $(\bcAp_j,\bpAp{j})$ have a hash property for all
$j\in\tK$. Then, for any $\delta>0$ and all sufficiently large $n$,
there are functions $\{A_j\}_{j\in\tK}$, $\{A'_j\}_{j\in\tK}$ and
vectors $\{\ba_j\}_{j\in\tK}$ such that $A_j\in\A_j$, $A'_j\in\A'_j$,
$\ba_j\in\im\A_j$, and
\begin{equation}
  \Error(A_{\tK},A'_{\tK},\ba_{\tK})<\delta.
  \label{eq:error-sw}
\end{equation}
\end{thm}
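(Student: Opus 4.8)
The plan is to prove the theorem by a random-coding argument over the given ensembles, following the same overall structure as the proof of Theorem~\ref{thm:mac-ts}. First I would draw the functions $\sfA_i$, $\sfA'_i$ at random according to $\bpA{i}$, $\bpAp{i}$ and the shared vectors $\sfaa_i$ uniformly on $\im\A_i$, for $i\in\tK$, all mutually independent; since $M_i$ is uniform on $\M_i=\im\A'_i$ and independent of the functions, the pair $(\sfaa_i,M_i)$ is uniform on $\im\A_i\times\im\A'_i=\im(\A_i\times\A'_i)$. It then suffices to show that $E\lrB{\Error(\sfA_{\tK},\sfA'_{\tK},\sfaa_{\tK})}\to0$: for all large $n$ some realization meets~(\ref{eq:error-sw}), and because an ensemble of sparse matrices has the hash property (Section~\ref{sec:hash}) the functions may then be taken sparse. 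To bound this average I would split the error event into (i) a cloud-center encoding failure, (ii) a satellite encoding failure, (iii) a channel-atypicality event, and (iv) a decoding failure, and control each term separately.

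For the encoding terms I would use the saturation property. By Lemma~\ref{lem:hash-AB} the product ensemble $\bcA_0\times\bcAp_0$ has a hash property, hence satisfies~(\ref{eq:whash}), so Lemma~\ref{lem:SP} applied to the typical set $\T_{X_0,\gamma}$ bounds the probability that $\C_{A_0A'_0}(\ba_0,\mm_0)\cap\T_{X_0,\gamma}=\emptyset$ by a quantity of the form $\alpha-1+|\im\A_0||\im\A'_0|(\beta+1)/|\T_{X_0,\gamma}|$ with $\alpha\to1$, $\beta\to0$; this vanishes because, by~(\ref{eq:R0r0-sw-SP}), $|\im\A_0||\im\A'_0|=2^{n(H(X_0)-\e_0)}$ while $|\T_{X_0,\gamma}|\geq2^{n(H(X_0)-o(1))}$. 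Conditioned on a successful, hence typical, cloud center $\xx_0$, the same argument applied to $\bcA_j\times\bcAp_j$ and to the conditional typical set $\T_{X_j|X_0,\gamma}(\xx_0)$ shows that the $j$-th satellite encoder fails with vanishing probability, since by~(\ref{eq:R1r1-sw-SP})--(\ref{eq:R2r2-sw-SP}) we have $|\im\A_j||\im\A'_j|=2^{n(H(X_j|X_0)-\e_j)}$ and $|\T_{X_j|X_0,\gamma}(\xx_0)|\geq2^{n(H(X_j|X_0)-o(1))}$ for typical $\xx_0$; here the independence of $(\sfA_j,\sfA'_j,\sfaa_j,M_j)$ from $\xx_0$ (which depends only on the index-$0$ data) is what lets me freeze $\xx_0$ and apply Lemma~\ref{lem:SP}. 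A Markov-lemma computation from the type-counting estimates recorded in Section~\ref{sec:def} (with $\gamma$ chosen small relative to the $\e_j$) then shows that $(\xx_0,\xx_1,\xx_2)$ is jointly typical for $\mu_{X_0X_1X_2}$ and that, after the stationary memoryless channel, $(\xx_0,\xx_1,\xx_2,\yy)\in\T_{X_0X_1X_2Y,\gamma}$ except on an event (iii) of vanishing probability.

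For the decoding term I would use the multiple collision-resistance property. A decoding error forces the decoder's minimum-divergence output to differ from the true $(\xx_0,\xx_1,\xx_2)$, since applying $A'_i$ to the true codeword returns $\mm_i$. On the good event of the previous step the true tuple lies in $\C_{A_0}(\ba_0)\times\C_{A_1}(\ba_1)\times\C_{A_2}(\ba_2)$ and has small divergence with $\yy$, so an error requires a competing $\xx'_{\tK}\neq\xx_{\tK}$ with $\xx'_i\in\C_{A_i}(A_i\xx_i)$ lying in $\G\equiv\lrb{\xx'_{\tK}:(\xx'_{\tK},\yy)\in\T_{X_0X_1X_2Y,\gamma}}$. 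Lemma~\ref{lem:multi-CRP} with $\K$ replaced by $\tK=\{0,1,2\}$ then bounds the probability of this over $\sfA_{\tK}$ by $\sum_{\emptyset\neq\J\subset\tK}\lrbar{\G_{\J|\J^c}}\alpha_{\sfA_{\J}}(\beta_{\sfA_{\J^c}}+1)/\prod_{j\in\J}|\im\A_j|+\beta_{\sfA_{\tK}}$. By the type counting of Section~\ref{sec:def}, $\lrbar{\G_{\J|\J^c}}\leq2^{n(H(X_{\J}|X_{\J^c},Y)+o(1))}$ while $\prod_{j\in\J}|\im\A_j|=2^{n\sum_{j\in\J}r_j}$; hence each term vanishes precisely because~(\ref{eq:r0-sw-CRP})--(\ref{eq:r0r1r2-sw-CRP}) assert $\sum_{j\in\J}r_j>H(X_{\J}|X_{\J^c},Y)$ for every nonempty $\J\subset\tK$ (the slack $\e$ absorbing the $o(1)$), together with $\alpha_{\sfA_{\J}}\to1$ and $\beta_{\sfA_{\J^c}},\beta_{\sfA_{\tK}}\to0$ from Lemma~\ref{lem:multi-CRP}.

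The hard part will be the bookkeeping that joins these steps: the true codeword $\xx_{\tK}$ is itself produced by the minimum-divergence encoders and is therefore a function of the very $\sfA_{\tK}$ that reappears in the decoding coset $\C_{A_{\tK}}(A_{\tK}\xx_{\tK})$, whereas Lemma~\ref{lem:multi-CRP} speaks of a fixed $\uu_{\tK}$ and a fixed $\G$. Disentangling this coupling (by conditioning on the realized value of the encoder output and of $\yy$, summing the decoding-failure bound over jointly typical pairs $(\xx_{\tK},\yy)$, and checking that this conditioning constrains $\sfA_{\tK}$ only through events under which the hash-property bounds still apply) is the technical core, and it is carried out exactly as in the proof of Theorem~\ref{thm:mac-ts}; the one genuinely new feature is the additional conditioning on the cloud center $\xx_0$ in step~(ii), which is harmless because the index-$j$ data are independent of the index-$0$ data. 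Once all four terms are shown to vanish, $E\lrB{\Error(\sfA_{\tK},\sfA'_{\tK},\sfaa_{\tK})}\to0$, and Markov's inequality produces functions $\{A_j\}_{j\in\tK}$, $\{A'_j\}_{j\in\tK}$ and vectors $\{\ba_j\}_{j\in\tK}$ satisfying~(\ref{eq:error-sw}).
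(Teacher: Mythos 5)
There is a genuine gap at your step (iii). You assert that once $\xx_0$ is typical and each satellite $\xx_j$ is conditionally typical given $\xx_0$, ``a Markov-lemma computation from the type-counting estimates'' yields joint typicality of $(\xx_0,\xx_1,\xx_2)$. No such implication exists, and none of the lemmas in Section~\ref{sec:def} or Appendix~\ref{sec:type-theory} supplies it: $\xx_1\in\T_{X_1|X_0,\gamma}(\xx_0)$ and $\xx_2\in\T_{X_2|X_0,\gamma}(\xx_0)$ do not force the empirical conditional mutual information $I(\xx_1;\xx_2|\xx_0)$ to be small, and the usual Markov/conditional-typicality lemma is unavailable because the satellites are deterministic minimum-divergence selections from random cosets, not sequences drawn i.i.d.\ from $\mu_{X_j|X_0}$ given $\xx_0$; conditional independence of the two selections given $\xx_0$ is not enough (each conditional law could, for instance, concentrate on a single conditionally typical sequence, making the pair far from jointly typical). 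This is exactly the point the paper isolates with the extra event (\ref{eq:MAC-SW3}): if $I(\xx_1;\xx_2|\xx_0)$ exceeds $\gamma+\sum_{j\in\tK}[\iota_j(\gamma)+\e_j]$, then $H(\xx_0,\xx_1,\xx_2)<\sum_{j\in\tK}[r_j+R_j]-\gamma$, so the triple lies in a set $\G$ of cardinality at most $2^{n[\sum_j[r_j+R_j]-\gamma+\lambda_{\X_{\tK}}]}$, and since the uniform coset vectors hit any fixed triple with probability $1/\prod_j|\im\hcA_j|$, a union bound gives probability at most $2^{-n[\gamma-\lambda_{\X_{\tK}}]}$. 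The paper explicitly remarks that this event was unnecessary in conventional superposition coding and is the new ingredient here; your decomposition has no analogue of it, and without it you cannot place the transmitted triple in the typicality class used to define the decoder's competitor set $\G(\yy)$ (the paper needs $\gamma'=2\sum_j\e_j$ via (\ref{eq:D012}) precisely because the empirical mutual information term enters the divergence decomposition).

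A secondary point: your saturation step is run on the full sets $\T_{X_0,\gamma}$ and $\T_{X_j|X_0,\gamma}(\xx_0)$, but the ``bookkeeping'' you defer to the proof of Theorem~\ref{thm:mac-ts} in fact requires replacing them by subsets $\T_0$ and $\T_j(\xx_0)$ consisting of the smallest-divergence elements, with sizes between $\kappa|\im\hcA_j|$ and $2\kappa|\im\hcA_j|$, where $\kappa\to\infty$, $\kappa^3\beta_{\sfA_{\tK}}\to0$ and $n^{-1}\log\kappa\to0$. The reason is that in the final decoding average each codeword triple carries weight $1/\prod_j|\im\hcA_j|$ (Lemma~\ref{lem:E}); summing over all conditionally typical triples would contribute a factor of order $2^{n\sum_j\e_j}$ multiplying $\beta_{\sfA_{\tK}}$, whose decay rate the hash property does not control, whereas with the restricted sets the factor is only $8\kappa^3$, which (\ref{eq:k2-sw}) kills. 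So your outline follows the paper's route (saturation for encoding, Lemma~\ref{lem:multi-CRP} for decoding, Markov's inequality at the end), but the two devices above --- the empirical-mutual-information event and the $\kappa$-sized codeword sets --- are the substance of the proof rather than routine bookkeeping, and the first is missing outright.
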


In the following, we employ a rate splitting technique to eliminate
conditions (\ref{eq:R0-sw-hash})--(\ref{eq:R0R2-sw-hash}). Assume that
$(R_0,R_1,R_2)$ satisfies conditions
(\ref{eq:R0-sw})--(\ref{eq:R0R1R2-sw}) and
(\ref{eq:R0-sw-hash})--(\ref{eq:R0R2-sw-hash}).
From Theorem~\ref{thm:sw}, we have the fact that there is a code with
encoding rate $(R_0,R_1,R_2)$. Let $\mm_0\in\X_0^{nR_0}$ be a common
message and $\mm_1\in\X_1^{nR_1}$ and $\mm_2\in\X_2^{nR_2}$ be the
private messages of two different encoders. We divide the private
messages into two parts
\begin{align*}
  \mm_1&=\lrsb{m_1^{n[R_1-R''_1]},m_1^{nR''_1}}
  \\
  \mm_2&=\lrsb{m_1^{n[R_2-R''_2]},m_2^{nR''_2}},
\end{align*}
where $(R''_1,R''_2)$ satisfies
\begin{gather}
  0\leq R''_1\leq R_1
  \label{eq:R1pp}
  \\
  0\leq R''_2\leq R_2.
  \label{eq:R2pp}
\end{gather}
Let us interpret $\lrsb{\mm_0,m_1^{nR''_1},m_2^{nR''_2}}$ as the common
message and $m_j^{n[R_j-R''_j]}$ as the private message of the $j$-th
encoder. Then we have the fact that rate $(R'_0,R'_1,R'_2)$ satisfying
\begin{align}
  R'_0&= R_0+R''_1+R''_2
  \label{eq:R0p}
  \\
  R'_1&=R_1-R''_1
  \label{eq:R1p}
  \\
  R'_2&= R_2-R''_2
  \label{eq:R2p}
\end{align}
is achievable by using the same code obtained from Theorem~\ref{thm:sw}.

Now we prove the fact that for all
$(R'_0,R'_1,R'_2)\in\R_{\mathrm SW}(\mu_{X_0},\mu_{X_1|X_0},\mu_{X_2|X_0})$
there is a pair $(R''_1,R''_2)$ such that $(R_0,R_1,R_2)$ satisfies
conditions (\ref{eq:R0-sw})--(\ref{eq:R0R1R2-sw}),
(\ref{eq:R0-sw-hash})--(\ref{eq:R0R2-sw-hash}),
(\ref{eq:R1pp})--(\ref{eq:R2p}). From (\ref{eq:R0p})--(\ref{eq:R2p}), we
have
\begin{align}
  R_0&=R'_0-R''_1-R''_2
  \label{eq:R0ppp}
  \\
  R_1&=R'_1+R''_1
  \label{eq:R1ppp}
  \\
  R_2&=R'_2+R''_2.
  \label{eq:R2ppp}
\end{align}
By substituting these inequalities into
(\ref{eq:R0-sw})--(\ref{eq:R0R1R2-sw}),
(\ref{eq:R0-sw-hash})--(\ref{eq:R0R2-sw-hash}), (\ref{eq:R1pp}), and
(\ref{eq:R2pp}), we have 
\begin{gather*}
  R'_0-R''_1-R''_2 \geq 0
  \\
  0\leq R'_1+R''_1 < I(X_1;Y|X_0,X_2)
  \\
  0\leq R'_2+R''_2 < I(X_2;Y|X_0,X_1)
  \\
  R'_1+R''_1+R'_2+R''_2 < I(X_1,X_2;Y|X_0)
  \\
  R'_0+R'_1+R'_2 < I(X_0,X_1,X_2;Y)
  \\
  R'_0-R''_1-R''_2< I(X_0;X_1,X_2,Y)
  \\
  R'_0+R'_1-R''_2 < I(X_0,X_1;X_2,Y)
  \\
  R'_0+R'_2-R''_1 < I(X_0,X_2;X_1,Y)
  \\
  0\leq R''_1\leq R'_1+R''_1
  \\
  0\leq R''_2\leq R'_2+R''_2,
\end{gather*}
where we use the relation $I(X_0,X_1,X_2;Y)=I(X_1,X_2;Y)$ obtained from
(\ref{eq:markov-sw}) in the fifth inequality. By eliminating $R''_1$ and
$R''_2$ from these inequalities by using the Fourier-Motzkin method
(see~\cite[Appendix~D]{EK10}\cite{Z06}) and the relation
$I(X_1;X_2|X_0)=0$ obtained from (\ref{eq:markov-sw}), we have the fact
that
$(R'_0,R'_1,R'_2)\in\R_{\mathrm SW}(\mu_{X_0},\mu_{X_1|X_0},\mu_{X_2|X_0})$.
This implies that for 
$(R'_0,R'_1,R'_2)\in\R_{\mathrm SW}(\mu_{X_0},\mu_{X_1|X_0},\mu_{X_2|X_0})$,
there is $(R''_1,R''_2)$ such that $(R_0,R_1,R_2)$ defined by
(\ref{eq:R0ppp})--(\ref{eq:R2ppp}) satisfies 
$(R_0,R_1,R_2)\in\R_{\mathrm SW}(\mu_{X_0},\mu_{X_1|X_0},\mu_{X_2|X_0})$,
(\ref{eq:R0-sw-hash})--(\ref{eq:R0R2-sw-hash}), (\ref{eq:R1pp}), and
(\ref{eq:R2pp}). This means that we can construct codes with
$(R'_0,R'_1,R'_2)\in\R_{\mathrm SW}(\mu_{X_0},\mu_{X_1|X_0},\mu_{X_2|X_0})$.
Thus, conditions (\ref{eq:R0-sw-hash})--(\ref{eq:R0R2-sw-hash}) are
eliminated.

\section{Proof of Theorems}
\label{sec:proof}

In this section, we prove Theorems~\ref{thm:mac-ts} and~\ref{thm:sw}.
Before describing the proof, we remark on the outline of the proof.
The proof is similar to the conventional random coding argument, where
a codebook is randomly generated and it is proved that the average error
probability tends to zero as $n$ goes to infinity. However, there is a
definite difference from the conventional random coding argument in the
following proof because our proof is based on random partitioning and
the probability distribution of a codebook is different. This will be
explained in detail later.

In the following proof, we omit the dependence of $X,Y,U$ on $n$ when
they appear in the subscript of $\mu$. For
$\uu\equiv(u_1,\ldots,u_n)\in\U^n$ and 
$(\xx_{\K},\yy)\equiv((\{x_{1,j}\}_{j\in\K},y_1),\ldots,(\{x_{n,j}\}_{j\in\K},y_n))\in[\X_{\K}]^n\times\Y^n$,
$\mu_U(\uu)$ and $\mu_{Y|X_{\K}}(\yy|\xx_{\K})$ are defined as
\begin{align*}
  \mu_U(\uu)
  &\equiv\prod_{i=1}^n \mu_U(u_i)
  \\
  \mu_{Y|X_{\K}}(\yy|\xx_{\K})
  &\equiv
  \prod_{i=1}^n \mu_{Y|X_{\K}}(y_i|\{x_{i,j}\}_{j\in\K}).
\end{align*}

\subsection{Proof of Theorem \ref{thm:mac-ts}}

In the following, we assume that ensembles $(\bcA_j,\bpA{j})$ and
$(\bcAp_j,\bpAp{j})$ have a hash property for all $j\in\K$. Then, from
Lemma \ref{lem:hash-AB}, ensemble $(\bhcA_j,\bphA{j})$ defined by
\[
  \hA_j\xx_j
  \equiv(A_j\xx_j,A_j'\xx_j)
\]
has a $(\aalphahA{j},\bbetahA{j})$-hash property, where
\begin{align*}
  p_{\hA_j}(\hA_j)&\equiv p_{A_j}(A_j)p_{A'_j}(A'_j)
  \\
  \alphahA{j}&\equiv \alphaA{j}\alphaAp{j}
  \\
  \betahA{j}&\equiv \betaA{j}+\betaAp{j}.
\end{align*}

Since
\begin{align}
  \limn\beta_{\sfA_{\K}}(n)
  &=\limn\lrB{\prod_{j\in\K}[\beta_{\sfA_j}+1]-1}
  \notag
  \\
  &=0,
\end{align}
there is a sequence $\kkappa\equiv\{\kappa(n)\}_{n=1}^{\infty}$ such
that
\begin{gather}
  \limn\kappa(n)=\infty
  \label{eq:k1}
  \\
  \limn [\kappa(n)]^{k}\beta_{\sfA_{\K}}(n)=0
  \label{eq:k2}
  \\
  \limn\frac{\log\kappa(n)}n=0,
  \label{eq:k3}
\end{gather}
where $k$ is the number of encoders. For example, we obtain such a
$\kkappa$ by letting
\begin{equation*}
  \kappa(n)
  \equiv
  \begin{cases}
    n^{\xi/k}
    &\text{if}\ \exists\xi>0\ \text{s.t.}
    \ \beta_{\sfA_{\K}}(n)=o\lrsb{n^{-\xi/k}}
    \\
    \lrB{\beta_{\sfA_{\K}}(n)}^{-1/[k+1]}
    &\text{otherwise}
  \end{cases}
\end{equation*}
for every $n$. If $\beta_{\sfA_{\K}}(n)$ is not $o\lrsb{n^{-\xi/k}}$,
there is a $\kappa'$ such that $\kappa'>0$,
$\beta_{\sfA_{\K}}(n)n^{\xi/k}>\kappa'$ and
\begin{align}
  \frac{\log\kappa(n)}n
  &=
  \frac{\log\frac 1{\beta_{\sfA_{\K}}(n)}}{[k+1]n}
  \notag
  \\
  &\leq \frac{\log\frac{n^{\xi/k}}{\kappa'}}{[k+1]n}
  \notag
  \\
  &=\frac{\xi\log n}{k[k+1]n}-\frac{\log\kappa'}{[k+1]n}
\end{align}
for all sufficiently large $n$. This implies that $\kkappa$ satisfies
(\ref{eq:k3}). In the following, $\kappa$ denotes $\kappa(n)$.

From (\ref{eq:k3}), we have the fact that there is a $\gamma$ such that
$\gamma>0$ and
\begin{gather}
  \eta_{\X_j|\U}(\gamma|\gamma)+\frac{\log\kappa}n\leq \e_j
  \label{eq:ekappa}
  \\
  [k+3]\gamma+\sum_{j\in\K}\iota_{\X_j|\U}(\gamma|\gamma)
  \leq\sum_{j\in\K}\e_j
  \label{eq:sum-e}
\end{gather}
for all $j\in\K$ and sufficiently large $n$.

When $\uu\in\T_{U,\gamma}$, we have
\begin{align}
  |\T_{X_j|U,\gamma}(\uu)|
  &\geq 2^{n[H(X_j|U)-\eta_{\X_j|\U}(\gamma|\gamma)]}
  \notag
  \\
  &\geq
  \kappa2^{n[H(X_j|U)-\e_j]}
  \notag
  \\
  &=
  \kappa2^{n[r_j+R_j]}
  \notag
  \\
  &=
  \kappa|\im\A_j||\im\A'_j|
  \notag
  \\
  &\geq
  \kappa|\im\hcA_j|
\end{align}
for all $j\in\K$ and sufficiently large $n$, where the first inequality
comes from Lemma \ref{lem:typical-number-bound}, the second inequality
comes from (\ref{eq:ekappa}), the first equality comes from
(\ref{eq:rj}) and (\ref{eq:Rj}), and the last inequality comes from the
fact that $\im\hcA_j\subset\im\A_j\times\im\A'_j$.

This implies that for all $j\in\K$ and sufficiently large $n$ there is
$\T_j(\uu)\subset\T_{X_j|U,\gamma}(\uu)$ such that
\begin{equation}
  \kappa
  \leq
  \frac{|\T_j(\uu)|}{|\im\hcA_j|}
  \leq 
  2\kappa.
  \label{eq:Tj}
\end{equation}
for all $\uu$. We assume that $\T_j(\uu)$ is constructed by selecting
$|\T_j(\uu)|$ elements in the ascending order regarding the value
$D(\nu_{\xx_j|\uu}\|\mu_{X_j|U}|\nu_{\uu})$.

Let $\mm_{\K}\in\M_{\K}$ be private messages. Let $\xx_{\K}$ be channel
inputs, where $\xx_j\in\X^n_j$ is defined as
\[
  \xx_j\equiv \hg_{A_jA'_j}(\ba_j,\mm_j|\uu)
  \quad\text{for each}\ j\in\K.
\]
Let $\yy\in\Y^n$ be a channel output. We define
\begin{align*}
  \cS_j
  &\equiv
  \lrb{(\mm_{\K},\yy):
    \xx_j\in\T_j(\uu)\ \text{and}\ \yy\in\Y^n
  }
  \\
  \cS_{k+1}
  &\equiv
  \lrb{
    (\mm_{\K},\yy):
    I(\xx_{\K}|\uu)
    <\gamma+\sum_{j\in\K}\lrB{\iota_{\X_j|\U}(\gamma|\gamma)+\e_j}
    \ \text{and}\ \yy\in\Y^n
  }
  \\
  \cS_{k+2}
  &\equiv
  \lrb{(\mm_{\K},\yy):
    \yy\in\T_{Y|UX_{\K},\gamma}(\uu,\xx_{\K})
  }
  \\
  \cS_{k+3}
  &\equiv
  \lrb{(\mm_{\K},\yy):
    \hg_{A_{\K}}(\ba_{\K}|\yy,\uu)=\xx_{\K}
  },
\end{align*}
where $j\in\K$ and
\[
  I(\xx_{\K}|\uu)\equiv \sum_{j\in\K}H(\xx_j|\uu)-H(\xx_{\K}|\uu).
\]
We assign equation numbers to the conditions
\begin{gather}
  \uu\in\T_{U,\gamma}
  \label{eq:MAC0}
  \\
  \xx_j\in\T_j(\uu)\subset\T_{X_j|U,\gamma}(\uu)\quad\text{for all}\ j\in\K
  \label{eq:MACj}
  \\
  I(\xx_{\K}|\uu)<\gamma+\sum_{j\in\K}\lrB{\iota_{\X_j|\U}(\gamma|\gamma)+\e_j}
  \label{eq:MACk+1}
  \\
  \yy\in\T_{Y|UX_{\K},\gamma}(\uu,\xx_{\K})
  \label{eq:MACk+2}
  \\
  \hg_{A_{\K}}(\ba_{\K}|\yy,\uu)\neq \xx_{\K},
  \label{eq:MACk+3}
\end{gather}
which are referred later. Since the $j$-th message $\mm_j$ satisfies
$A'_j\xx_j=\mm_j$, the decoder can recover message $\mm_{\K}$ when
decoding the channel input $\xx_{\K}$ is successful. This implies that
the decoding error probability is upper bounded by
\begin{align}
  &
  \Error(A_{\K},A'_{\K},\ba_{\K},\uu)
  \notag
  \\*
  &
  \leq
  \sum_{j\in\K}p_{M_{\K}Y}(\cS_j^c)
  +p_{M_{\K}Y}\lrsb{\lrB{\cap_{j=1}^k\cS_j}\cap\cS_{k+1}^c}
  +p_{M_{\K}Y}(\cS_{k+2}^c)
  +p_{M_{\K}Y}\lrsb{\lrB{\cap_{j=1}^{k+2}\cS_j}\cap\cS_{k+3}^c}.
  \label{eq:error-sum}
\end{align}
\begin{rem}
The condition (\ref{eq:MACk+1}) was unnecessary in the proof of the
conventional random coding argument because
$\xx_{\K}\in\T_{X_{\K|U},\gamma}(\uu)$ was naturally satisfied by
generating codewords independently at random for a given $\uu\in\U^n$.
On the other hand, (\ref{eq:MACk+1}) is necessary in our proof because
(\ref{eq:MAC0}) and (\ref{eq:MACj}) may not imply
$(\uu,\xx_{\K})\in\T_{UX_{\K},\gamma'}$ for an appropriate $\gamma'>0$.
This is the difference from the conventional proof. It should be noted
that (\ref{eq:MAC0})--(\ref{eq:MACk+2}) implies
$(\uu,\xx_{\K},\yy)\in\T_{UX_{\K}Y,\gamma'}$, where $\gamma'>0$ will be
specified later.
\end{rem}

In the following we evaluate the average error probability
\begin{align}
  &
  E_{\sfhA_{\K}\sfaa_{\K}U^n}\lrB{
    \Error(\sfA_{\K},\sfA'_{\K},\sfaa_{\K},U^n)
  }
  \notag
  \\*
  &\leq
  E_{\sfhA_{\K}\sfaa_{\K}}\lrB{
    \sum_{\uu\in\T_{U,\gamma}}\mu_U(\uu)
    \Error(\sfA_{\K},\sfA'_{\K},\sfaa_{\K},\uu)
  }
  +
  \mu_U\lrsb{\lrB{\T_{U,\gamma}}^c}
  \notag
  \\
  \begin{split}
    &\leq
    \sum_{j\in\K}
    E_{\sfhA_{\K}\sfaa_{\K}}\lrB{
      \sum_{\uu\in\T_{U,\gamma}}\mu_U(\uu)
      p_{M_{\K}Y}(\cS_j^c)
    }
    +
    E_{\sfhA_{\K}\sfaa_{\K}}\lrB{
      \sum_{\uu\in\T_{U,\gamma}}\mu_U(\uu)
      p_{M_{\K}Y}\lrsb{\lrB{\cap_{j=1}^k\cS_j}\cap\cS_{k+1}^c}
    }
    \\*
    &\quad
    +
    E_{\sfhA_{\K}\sfaa_{\K}}\lrB{
      \sum_{\uu\in\T_{U,\gamma}}\mu_U(\uu)
      p_{M_{\K}Y}(\cS_{k+2}^c)
    }
    +
    E_{\sfhA_{\K}\sfaa_{\K}}\lrB{
      \sum_{\uu\in\T_{U,\gamma}}\mu_U(\uu)
      p_{M_{\K}Y}\lrsb{\lrB{\cap_{j=1}^{k+2}\cS_j}\cap\cS_{k+3}^c}
    }
    \\*
    &\quad
    +
    \mu_U\lrsb{\lrB{\T_{U,\gamma}}^c},
  \end{split}
  \label{eq:error-ave}
\end{align}
over random variables $\sfhA_{\K}$, $\sfaa_{\K}$, and $U^n$. The last
term on the right hand side of (\ref{eq:error-ave}) is evaluated as
\begin{align}
  \mu_U\lrsb{\lrB{\T_{U,\gamma}}^c}
  &\leq
  2^{-n[\gamma-\lambda_{\U}]}
  \notag
  \\
  &
  \leq
  \frac {\delta}{k+4}
  \label{eq:error0}
\end{align}
for all $\delta>0$ and all sufficiently large $n$, where the first
inequality comes from Lemma~\ref{lem:typical-prob}. In the following,
let
\[
  \haa_j\equiv (\ba_j,\mm_j)
  \quad\text{for each}\  j\in\K.
\]
We assume that the distribution of $\haa_j$ is uniform on $\im\hcA_j$
for all $j\in\K$, and random variables
$\{\sfhA_j,\sfaa_j,M_j\}_{j\in\K}$ and $U^n$ are mutually independent.
In the following, we use the fact that
$\hg_{A_jA'_j}(\ba_j,\mm_j|\uu)\notin\T_j(\uu)$ implies
$\T_j(\uu)\cap\C_{\hA_j}(\haa_j)=\emptyset$, which is shown by
contradiction as follows. Let us assume that
$\xx_j\equiv\hg_{A_jA'_j}(\ba_j,\mm_j|\uu)\notin\T_j(\uu)$
and $\T_j(\uu)\cap\C_{\hA_j}(\haa_j)\neq\emptyset$.
Then there is $\xx'_j\in\T_j(\uu)\cap\C_{\hA_j}(\haa_j)$. From the
definition of $\hg_{A_jA'_j}$, we have
\begin{equation}
  D(\nu_{\xx_j|\uu}\|\mu_{X_j|U}|\nu_{\uu})
  \leq
  D(\nu_{\xx'_j|\uu}\|\mu_{X_j|U}|\nu_{\uu}).
  \label{eq:proof-mac-DxDxp}
\end{equation}
On the other hand, from the construction of $\T_j(\uu)$, we have the
fact that $\xx''_j\in\T_j(\uu)$ if $\xx'_j\in\T_j(\uu)$ and
\[
  D(\nu_{\xx''_j|\uu}\|\mu_{X_j|U}|\nu_{\uu})
  \leq
  D(\nu_{\xx'_j|\uu}\|\mu_{X_j|U}|\nu_{\uu}).
\]
From  this fact and (\ref{eq:proof-mac-DxDxp}), we have the fact that
$\xx_j\in\T_j(\uu)$, which contradicts the assumption
$\xx_j\equiv\hg_{A_jA'_j}(\ba_j,\mm_j|\uu)\notin\T_j(\uu)$.
First, we evaluate
$
E_{\sfhA_{\K}\sfaa_{\K}}\lrB{
  \sum_{\uu\in\T_{U,\gamma}}\mu_U(\uu)
  p_{M_{\K}Y}(\cS_j^c)
}.
$
From Lemma~\ref{lem:SP} and (\ref{eq:Tj}), we have
\begin{align}
  E_{\sfhA_{\K}\sfaa_{\K}}\lrB{
    \sum_{\uu\in\T_{U,\gamma}}\mu_U(\uu)
    p_{M_{\K}Y}(\cS_j^c)
  }
  &=
  \sum_{\uu\in\T_{U,\gamma}}\mu_U(\uu)
  p_{\sfhA_j\sfaa_jM_j}
  \lrsb{\lrb{
      (A_j,A'_j,\ba_j,\mm_j):
      \hg_{A_jA'_j}(\ba_j,\mm_j|\uu)\notin\T_j(\uu)
  }}
  \notag
  \\
  &\leq
  \sum_{\uu\in\T_{U,\gamma}}\mu_U(\uu)
  p_{\sfhA_j\sfhaa_j}
  \lrsb{\lrb{
      (\hA_j,\haa_j):
      \T_j(\uu)\cap\C_{\hA_j}(\haa_j)=\emptyset
  }}
  \notag
  \\
  &\leq
  \sum_{\uu\in\T_{U,\gamma}}\mu_U(\uu)
  \lrB{
    \alpha_{\sfhA_j}-1
    +\frac{|\im\hcA_j|\lrB{\beta_{\sfhA_j}+1}}
    {|\T_j(\uu)|}
  }
  \notag
  \\
  &\leq
  \sum_{\uu\in\T_{U,\gamma}}\mu_U(\uu)
  \lrB{
    \alpha_{\sfhA_j}-1
    +\frac{\beta_{\sfhA_j}+1}
    {\kappa}
  }
  \notag
  \\
  &\leq
  \frac {\delta}{k+4}
  \label{eq:errorj}
\end{align} 
for all $\delta>0$ and sufficiently large $n$, where the first inequality
comes from the fact that $\hg_{A_jA'_j}(\ba_j,\mm_j|\uu)\notin\T_j(\uu)$
implies $\T_j(\uu)\cap\C_{\hA_j}(\haa_j)=\emptyset$, and the last
inequality comes from (\ref{eq:k1}) and the fact that $\alphahA{j}\to1$
and $\betahA{j}\to0$ as $n\to\infty$.

Next, we evaluate the second term on the right hand side of
(\ref{eq:error-ave}). Assume that $\xx_{\K}$ satisfies
(\ref{eq:MAC0}), (\ref{eq:MACj}), and
\[
  I(\xx_{\K}|\uu)
  \geq
  \gamma+
  \sum_{j\in\K}\lrB{\iota_{\X_j|\U}(\gamma|\gamma)+\e_j}.
\]
Then, from Lemma~\ref{lem:diff-entropy}, we have
\[
  |H(\xx_j|\uu)-H(X_j|U)|
  <\iota_{\X_j|\U}(\gamma|\gamma)
  \quad\text{for all}\ j\in\K.
\]
We have
\begin{align}
  H(\xx_{\K}|\uu)
  &=\sum_{j\in\K}H(\xx_j|\uu)-I(\xx_{\K}|\uu)
  \notag
  \\
  &\leq
  \sum_{j\in\K}\lrB{H(X_j|U)+\iota_{\X_j|\U}(\gamma|\gamma)}
  -\lrB{\gamma+\sum_{j\in\K}\lrB{\iota_{\X_j|\U}(\gamma|\gamma)+\e_j}}
  \notag
  \\
  &=\sum_{j\in\K}[r_j+R_j]-\gamma,
\end{align}
where the last equality comes from (\ref{eq:rjRj-ts}). Since
$\xx_j\in\C_{A_jA'_j}(\ba_j,\mm_j)$ for all $j\in\K$, we have
\[
  \xx_{\K}
  \in
  \G(\uu)\cap\C_{\hA_{\K}}(\haa_{\K}),
\]
where $\G(\uu)\subset\Prod_{j\in\K}\X_j^n$ is defined as
\[
  \G(\uu)\equiv
  \lrb{
    \xx_\K:
    H(\xx_{\K}|\uu)<\sum_{j\in\K}[r_j+R_j]-\gamma
  }.
\]
This implies that
\[
  \G(\uu)\cap\C_{\hA_{\K}}(\haa_{\K})
  \neq\emptyset.
\]
Then we have
\begin{align}
  p_{\sfhA_{\K}\sfhaa_{\K}}
  \lrsb{\lrb{
      (\hA_{\K},\haa_{\K}):
      \G(\uu)\cap\C_{\hA_{\K}}(\haa_{\tK})\neq\emptyset
  }}
  &\leq
  \sum_{\xx_{\K}\in\G(\uu)}
  p_{\sfhA_{\K}\sfhaa_{\K}}
  \lrsb{\lrb{
      (\hA_{\K},\haa_{\K}):
      \hA_j\xx_j=\haa_j
      \text{for all}\ j\in\K
  }}
  \notag
  \\
  &=
  \sum_{\xx_{\K}\in\G(\uu)}
  \sum_{\hA_{\K},\haa_{\K}}
  \prod_{j\in\K}
  p_{\sfhA_j\sfhaa_j}(\hA_j,\haa_j)\chi\lrsb{\hA_j\xx_j=\haa_j}
  \notag
  \\
  &=
  \sum_{\xx_{\K}\in\G(\uu)}
  \prod_{j\in\K}
  \lrB{
    \sum_{\hA_j,\haa_j}
    p_{\sfhA_j\sfhaa_j}(\hA_j,\haa_j)\chi\lrsb{\hA_j\xx_j=\haa_j}
  }
  \notag
  \\
  &=
  \sum_{\xx_{\K}\in\G(\uu)}
  \frac 1{\prod_{j\in\K}|\im\hcA_j|}
  \notag
  \\
  &=
  \frac
  {|\G(\uu)|}
  {\prod_{j\in\K}|\im\hcA_j|}
  \notag
  \\
  &\leq
  \frac
  {2^{n\lrB{\sum_{j\in\K}[r_j+R_j]-\gamma+\lambda_{\X_{\K}}}}}
  {\prod_{j\in\K}|\im\hcA_j|}
  \notag
  \\
  &=
  2^{-n[\gamma-\lambda_{\X_{\K}}]},
\end{align}
where the second inequality comes from Lemma~\ref{lem:typical-number}.
This implies that
\begin{align}
  E_{\sfhA_{\K}\sfaa_{\K}}\lrB{
    \sum_{\uu\in\T_{U,\gamma}}\mu_U(\uu)
    p_{M_{\K}Y}([\cap_{j=1}^k\cS_j]\cap\cS_{k+1}^c)
  }
  &\leq
  \sum_{\uu\in\T_{U,\gamma}}\mu_U(\uu)
  p_{\sfhA_{\K}\sfhaa_{\K}}
  \lrsb{\lrb{
      (\hA_{\K},\haa_{\K}):
      \G(\uu)\cap\C_{\hA_{\K}}(\haa_{\tK})\neq\emptyset
  }}
  \notag
  \\
  &\leq
  \sum_{\uu\in\T_{U,\gamma}}
  \mu_U(\uu)
  2^{-n[\gamma-\lambda_{\X_{\K}}]}
  \notag
  \\
  &\leq\frac{\delta}{k+4}
  \label{eq:error[k+1]}
\end{align} 
for all $\delta>0$ and sufficiently large $n$, where the last inequality
comes from the fact that $\lambda_{\X_{\K}}\to 0$ as $n\to\infty$.

Next, we evaluate the third term on the right hand side of
(\ref{eq:error-ave}). Let
$\XX_{\K}\equiv \{\hg_{\sfA_j\sfA'_j}(\sfaa_j,M_j|\uu)\}_{j\in\K}$.
Then we have
\begin{align}
  \mu_{Y|X_{\K}}\lrsb{
    \lrB{\T_{Y|UX_{\K},\gamma}(\uu,\XX_{\K})}^c
    |\XX_{\K}
  }
  &=
  \mu_{Y|U^nX_{\K}}\lrsb{
    \lrB{\T_{Y|UX_{\K},\gamma}(\uu,\XX_{\K})}^c
    |\uu,\XX_{\K}
  }
  \notag
  \\*
  &\leq
  2^{-n[\gamma-\lambda_{\U\X_{\K}\Y}]}.
\end{align}
from Lemma~\ref{lem:typical-prob}. This implies that
\begin{align}
  E_{\sfhA_{\K}\sfaa_{\K}}\lrB{
    \sum_{\uu\in\T_{U,\gamma}}\mu_U(\uu)
    p_{M_{\K}Y}(\cS_{k+2}^c)
  }
  &
  =
  E_{\sfhA_{\K}\sfhaa_{\K}}\left[
    \sum_{\uu\in\T_{U,\gamma}}\mu_U(\uu)
    \mu_{Y|X_{\K}}\lrsb{
      \lrB{\T_{Y|UX_{\K},\gamma}(\uu,\XX_{\K})}^c
      |\XX_{\K}
    }
  \right]
  \notag
  \\
  &\leq
  2^{-n[\gamma-\lambda_{\U\X_{\K}\Y}]}
  \notag
  \\
  &
  \leq
  \frac {\delta}{k+4}
  \label{eq:error[k+2]}
\end{align}
for all $\delta>0$ and sufficiently large $n$, where the last inequality
comes from the fact that $\lambda_{\U\X_{\K}\Y}\to 0$ as $n\to\infty$.

Next, we evaluate the fourth term on the right hand side of
(\ref{eq:error-ave}). In the following, we assume that
(\ref{eq:MAC0})--(\ref{eq:MACk+2}) and
\[
  \hg_{A_{\K}}(\ba_{\K}|\yy,\uu)\neq \xx_{\K}.x
\]
Then, from (\ref{eq:markov-ts}), we have
\begin{align}
  D(\nu_{\uu\xx_{\K}\yy}\parallel\mu_{UX_{\K}Y})
  &=
  \sum_{u,x_{\K},y}\nu_{\uu\xx_{\K}\yy}(u,x_{\K},y)
  \log\frac{\nu_{\uu\xx_{\K}\yy}(u,x_{\K},y)}
  {\mu_{UX_{\K}Y}(u,x_{\K},y)}
  \notag
  \\
  &=
  \sum_{u,x_{\K},y}\nu_{\uu\xx_{\K}\yy}(u,x_{\K},y)
  \log\frac{\nu_{\yy|\uu\xx_{\K}}(y|u,x_{\K})}
  {\mu_{Y|UX_{\K}}(y|u,x_{\K})}
  +
  \sum_{j\in\K}
  \sum_{u,x_j}\nu_{\uu\xx_j}(u,x_j)
  \log\frac{\nu_{\xx_j|\uu}(x_j|u)}{\mu_{X_j|U}(x_j|u)}
  \notag
  \\*
  &\quad
  +
  \sum_{u}\nu_{\uu}(u)
  \log\frac{\nu_{\uu}(u)}{\mu_U(u)}
  +
  \sum_{x_{\K}}\nu_{\uu\xx_{\K}}(u,x_{\K})
  \log\frac{\nu_{\xx_{\K}|\uu}(x_{\K}|u)}
  {\prod_{j\in\K}\nu_{\xx_j|\uu}(x_j|u)}
  \notag
  \\
  &
  =
  D(\nu_{\yy|\uu\xx_{\K}}\parallel\mu_{Y|UX_{\K}}|\nu_{\uu\xx_{\K}})
  +\sum_{j\in\K}D(\nu_{\xx_j|\uu}\parallel\mu_{X_j|U}|\nu_{\uu})
  +D(\nu_{\uu}\|\mu_U)+I(\xx_{\K}|\uu)
  \notag
  \\
  &<[k+2]\gamma+\gamma+\sum_{j\in\K}\lrB{\iota_{\X_j|\U}(\gamma|\gamma)+\e_j}
  \notag
  \\
  &
  \leq 2\sum_{j\in\K}\e_j
  \label{eq:D}
\end{align}
where the last inequality comes from (\ref{eq:sum-e}). This implies that
\[
  (\uu,\xx_{\K},\yy)\in\T_{X_{\K}Y,\gamma'},
\]
where $\gamma'$ is defined as
\[
  \gamma'\equiv 2\sum_{j\in\K}\e_j.
\]
Since $\hg_{A_{\K}}(\ba_{\K}|\yy,\uu)\neq\xx_{\K}$, there is
$\xx_{\K}'\in\C_{A_{\K}}(\ba_{\K})$ such that $\xx_{\K}'\neq\xx_{\K}$
and $(\uu,\xx_{\K}',\yy)\in\T_{UX_{\K}Y,\gamma'}$.
This implies that
\[
  \lrB{\G(\uu,\yy)\setminus\{\xx_{\K}\}}
  \cap\C_{A_{\K}}(A_{\K}\xx_{\K})
  \neq\emptyset,
\]
where 
\[
  \G(\uu,\yy)\equiv
  \lrb{\xx_{\K}:(\uu,\xx_{\K},\yy)\in\T_{UX_{\K}Y,\gamma'}}.
\]
From Lemma~\ref{lem:typical-trans}, we have the fact that
\[
  \G(\uu,\yy)\subset\T_{X_{\K}|UY,\gamma'}(\uu,\yy)
\]
and $\xx_{\K}\in\G(\uu,\yy)$ implies $(\uu,\yy)\in\T_{UY,\gamma'}$.
Then, from Lemma~\ref{lem:typical-number-bound}, we have
\begin{align}
  |\G_{\K|\K^c}(\uu,\yy)|
  &\equiv
  |\G(\uu,\yy)|
  \notag
  \\*
  &\leq 
  |\T_{X_{\K}|UY,\gamma'}(\uu,\yy)|
  \notag
  \\
  &\leq
  2^{n[H(X_{\K}|UY)+\eta_{\X_{\K}|\U\Y}(\gamma'|\gamma')]}.
  \label{eq:typenumber-K}
\end{align}
For each non-empty set $\J\subsetneq\K$, let
\begin{align*}
  \G_{\X_{\J^c}}(\uu,\yy)
  &\equiv
  \lrb{
    \xx_{\J^c}: \xx_{\K}\in\G(\uu,\yy)
    \ \text{for some}\ \xx_{\J}\in\X_{\J}^n
  }
  \\
  \G_{\X_{\J}|\X_{\J^c}}(\uu,\xx_{\J^c},\yy)
  &\equiv
  \lrb{\xx_{\J}: \xx_{\K}\in\G(\uu,\yy)}.
\end{align*}
Then, from Lemma~\ref{lem:typical-trans}, we have the fact that
$\xx_{\J^c}\in\G_{\X_{\J^c}}(\uu,\yy)$ implies
$(\uu,\xx_{\J^c},\yy)\in\T_{UX_{\J^c}Y,\gamma'}$ and
\[
  \G_{\X_{\J}|\X_{\J^c}}(\uu,\xx_{\J^c},\yy)
  \subset\T_{X_{\J}|UX_{\J^c}Y,\gamma'}\lrsb{\uu,\xx_{\J^c},\yy}
\]
for every non-empty set $\J\subsetneq\K$. We have
\begin{align}
  |\G_{\J|\J^c}(\uu,\yy)|
  &\equiv
  \max_{\xx_{\J^c}\in\G_{\X_{\J^c}}(\uu,\yy)}
  \lrbar{\G_{\X_{\J}|\X_{\J^c}}\lrsb{\uu,\xx_{\J^c},\yy}}
  \notag
  \\*
  &\leq
  \max_{(\uu,\xx_{\J^c},\yy)\in\T_{UX_{\J^c}Y,\gamma'}}
  \lrbar{\T_{X_{\J}|UX_{\J^c}Y,\gamma'}\lrsb{\uu,\xx_{\J^c},\yy}}
  \notag
  \\
  &\leq
  2^{n[H(X_{\J}|U,X_{\J^c},Y)+\eta_{\X_{\J}|\U\X_{\J^c}\Y}(\gamma'|\gamma')]}
  \notag
  \\
  &\leq
  2^{n\lrB{H(X_{\J}|U,X_{\J^c},Y)+\eta_{\X_{\K}|\U\Y}(\gamma'|\gamma')}}
  \label{eq:typenumberj}
\end{align}
for every non-empty set $\J\subsetneq\K$, where the second inequality
comes from Lemma~\ref{lem:typical-number-bound}. Then, from
(\ref{eq:typenumber-K}), (\ref{eq:typenumberj}), and
Lemma~\ref{lem:multi-CRP}, we have
\begin{align}
  E_{\sfA_{\K}}\lrB{
    \chi(\hg_{\sfA_{\K}}(\sfA_{\K}\xx_{\K}|\yy,\uu)\neq\xx_{\K})
  }
  &\leq
  p_{\sfA_{\K}}\lrsb{\lrb{
      A_{\K}:
      \lrB{\G(\uu,\yy)\setminus\{\xx_{\K}\}}\cap\C_{A_{\K}}(A_{\K}\xx_{\K})
      \neq\emptyset
  }}
  \notag
  \\
  &\leq
  \sum_{\substack{
      \J\subset\K\\
      \J\neq\emptyset
  }}
  \frac{
    2^{n\lrB{H(X_{\J}|UX_{\J^c},Y)+\eta_{\X_{\K}|\U\Y}(\gamma'|\gamma')}}
    \alpha_{\sfA_{\J}}\lrB{\beta_{\sfA_{\J^c}}+1}
  }
  {\prod_{j\in\J}|\im\A_j|}
  +\beta_{\sfA_{\K}}
  \label{eq:CRP}
\end{align}
for all $(\uu,\xx_{\K},\yy)\in\T_{UX_{\K}Y,\gamma'}$. Then we have
\begin{align}
  &
  E_{\sfA_{\K}\sfA'_{\K}\sfaa_{\K}}\lrB{
    \sum_{\uu\in\T_{U,\gamma}}\mu_U(\uu)
    p_{M_{\K}Y}([\cap_{j=1}^{k+2}\cS_j]\cap\cS_{k+3}^c)
  }
  \notag
  \\*
  &\leq
  E_{\sfA_{\K}\sfaa_{\K}M_{\K}}
  \left[
    \sum_{\uu\in\T_{U,\gamma}}\mu_U(\uu)
    \sum_{\xx_{\K}\in\T_{\K}(\uu)}
    \lrB{\prod_{j\in\K}\chi(\hg_{\sfA_j\sfA'_j}(\sfaa_j,M_j|\uu)=\xx_j)}
    \vphantom{
      \sum_{\yy\in\T_{Y|X_{\K},\gamma}(\xx_{\K})}
      \mu_{Y|X_{\K}}(\yy|\xx_{\K})
      \chi(\hg_{\sfA_{\K}}(\sfaa_{\K}|\yy,\uu)\neq\xx_{\K})
    }
  \right.
  \notag
  \\*
  &\qquad\qquad\qquad\qquad\qquad\qquad\qquad\qquad\qquad\qquad\qquad
  \left.
    \vphantom{
      \sum_{\uu\in\T_{U,\gamma}}\mu_U(\uu)
      \sum_{\xx_{\K}\in\T_{\K}(\uu)}
      \lrB{\prod_{j\in\K}\chi(\hg_{\sfA_j\sfA'_j}(\sfaa_j,M_j|\uu)=\xx_j)}
    }
    \sum_{\yy\in\T_{Y|X_{\K},\gamma}(\xx_{\K})}
    \mu_{Y|X_{\K}}(\yy|\xx_{\K})
    \chi(\hg_{\sfA_{\K}}(\sfaa_{\K}|\yy,\uu)\neq\xx_{\K})
  \right]
  \notag
  \\*
  &\leq
  \sum_{\substack{
      \uu\in\T_{U,\gamma}\\
      \xx_{\K}\in\T_{\K}(\uu)\\
      \yy\in\T_{Y|X_{\K},\gamma}(\xx_{\K})
  }}
  \mu_U(\uu)\mu_{Y|X_{\K}}(\yy|\xx_{\K})
  \notag
  \\*
  &\qquad\qquad\qquad\qquad\qquad\cdot
  E_{\sfA_{\K}}\left[
    \chi(\hg_{\sfA_{\K}}(\sfA_{\K}\xx_{\K}|\yy,\uu)\neq\xx_{\K})
    \prod_{j\in\K}
    E_{\sfaa_j}\lrB{
      \chi(\sfA_j\xx_j=\sfaa_j)
    }
    E_{\sfA'_jM_j}\lrB{
      \chi(\sfA'_j\xx_j=M_j)
    }
  \right]
  \notag
  \\
  &=
  \frac 1{\prod_{j\in\K}|\im\hA_j|}
  \sum_{\substack{
      \uu\in\T_{U,\gamma}\\
      \xx_{\K}\in\T_{\K}(\uu)\\
      \yy\in\T_{Y|X_{\K},\gamma}(\xx_{\K})
  }}
  \mu_U(\uu)
  \mu_{Y|X_{\K}}(\yy|\xx_{\K})
  E_{\sfA_{\K}}\lrB{
    \chi(\hg_{\sfA_{\K}}(\sfA_{\K}\xx_{\K}|\yy,\uu)\neq\xx_{\K})
  }
  \notag
  \\
  &\leq
  \frac 1{\prod_{j\in\K}|\im\hA_j|}
  \sum_{\substack{
      \uu\in\T_{U,\gamma}\\
      \xx_{\K}\in\T_{\K}(\uu)\\
      \yy\in\T_{Y|X_{\K},\gamma}(\xx_{\K})
  }}
  \mu_U(\uu)
  \mu_{Y|X_{\K}}(\yy|\xx_{\K})
  \notag
  \\*
  &\qquad\qquad\qquad\qquad\qquad\qquad\qquad\qquad\qquad\cdot
  \lrB{
    \sum_{\substack{
	\J\subset\K\\
	\J\neq\emptyset
    }}
    \frac{
      2^{n\lrB{H(X_{\J}|U,X_{\J^c},Y)+\eta_{\X_{\K}|\U\Y}(\gamma'|\gamma')}}
      \alpha_{\sfA_{\J}}[\beta_{\sfA_{\J^c}}+1]
    }
    {\prod_{j\in\J}|\im\A_j|}
    +\beta_{\sfA_{\K}}
  }
  \notag
  \\
  &\leq
  2^k\kappa^k
  \lrB{
    \sum_{\substack{
	\J\subset\K\\
	\J\neq\emptyset
    }}
    2^{-n\lrB{\sum_{j\in\J}r_j-H(X_{\J}|U,X_{\J^c},Y)-\eta_{\X_{\K}|\U\Y}(\gamma'|\gamma')}}
    \alpha_{\sfA_{\J}}[\beta_{\sfA_{\J^c}}+1]
    +\beta_{\sfA_{\K}}
  }
  \notag
  \\
  &\leq
  \frac{\delta}{k+4}
  \label{eq:error[k+3]}
\end{align}
for all $\delta>0$ and all sufficiently large $n$, where $\T_{\K}(\uu)$
is defined as
\[
  \T_{\K}(\uu)\equiv\prod_{j\in\K}\T_j(\uu),
\]
the equality comes from Lemma~\ref{lem:E} that appears in
Appendix~\ref{sec:lemE}, the third inequality comes from (\ref{eq:CRP}),
the fourth inequality comes from (\ref{eq:rj}) and (\ref{eq:Tj}), and
the last inequality comes from (\ref{eq:multi-alpha}),
(\ref{eq:multi-beta}), (\ref{eq:rj-CRP-ts}), and (\ref{eq:k2}).

Finally, from (\ref{eq:error-ave})--(\ref{eq:errorj}),
(\ref{eq:error[k+1]}), (\ref{eq:error[k+2]}), and (\ref{eq:error[k+3]}),
we have the fact that for all $\delta>0$ and all sufficiently large $n$
there are $\{A_j,A'_j,\ba_j\}_{j\in\K}$, and $\uu$ satisfying
$A_j\in\A$, $A_j'\in\A_j'$, $\ba_j\in\im\A_j$, $\uu\in\U^n$ and
(\ref{eq:error-ts}).
\hfill\QED

\subsection{Proof of Theorem \ref{thm:sw}}

We can prove the theorem similarly to the proof of Theorem~\ref{thm:mac-ts}.

In the following, we assume that ensembles $(\bcA_j,\bpA{j})$ and
$(\bcAp_j,\bpAp{j})$ have a hash property for all $j\in\tK$. Similarly
to the proof of Theorem~\ref{thm:mac-ts}, we define an ensemble
$(\bhcA_j,\bphA{j})$ and $(\aalphahA{j},\bbetahA{j})$ for each
$j\in\tK$. Then we have the fact that $(\bhcA_j,\bphA{j})$ has a
$(\aalphahA{j},\bbetahA{j})$-hash property and there is a sequence
$\kkappa\equiv\{\kappa(n)\}_{n=1}^{\infty}$ such that
\begin{gather}
  \limn\kappa(n)=\infty
  \label{eq:k1-sw}
  \\
  \limn [\kappa(n)]^3\beta_{\sfA_{\tK}}(n)=0
  \label{eq:k2-sw}
  \\
  \limn\frac{\log\kappa(n)}n=0.
  \label{eq:k3-sw}
\end{gather}
From (\ref{eq:k3-sw}), we have the fact that there is a $\gamma$ such
that $\gamma>0$ and
\begin{gather}
  \eta_{\X_0}(\gamma)+\frac{\log\kappa}n
  \leq\e_0
  \label{eq:ekappa-sw}
  \\
  \eta_{\X_j|\X_0}(\gamma|\gamma)+\frac{\log\kappa}n
  \leq\e_j
  \label{eq:ekappaj-sw}
\end{gather}
for all $j\in\tK$ and all sufficiently large $n$ and
\begin{equation}
  5\gamma+\sum_{j\in\tK}\iota_j(\gamma)
  \leq\sum_{j\in\tK}\e_j,
  \label{eq:e0e1e2e-sw}
\end{equation}
where $\iota_j(\gamma)$ is defined by
\[
  \iota_j(\gamma)\equiv
  \begin{cases}
    \iota_{\X_0}(\gamma) &\text{if}\ j=0
    \\
    \iota_{\X_j|\X_0}(\gamma|\gamma) &\text{if}\ j\in\K.
  \end{cases}
\]
Similarly to the proof of (\ref{eq:Tj}), from (\ref{eq:ekappa-sw}),
we have the fact that there is a set $\T_0$ such that
$\T_0\subset\T_{X_0,\gamma}$  and
\begin{equation}
  \kappa
  \leq
  \frac{|\T_0|}{|\im\hcA_0|}
  \leq 
  2\kappa.
  \label{eq:T0-sw}
\end{equation}
We assume that $\T_0$ is constructed by selecting $|\T_0|$ elements in
the ascending order regarding the value $D(\nu_{\xx_0}\|\mu_{X_0})$.
Furthermore, from (\ref{eq:ekappaj-sw}), we have the fact that for all
$\xx_0\in\X_0^n$  and all $j\in\K$ there is a set $\T_j(\xx_0)$ such
that $\T_j(\xx_0)\subset\T_{X_j|X_0,\gamma}(\xx_0)$ and
\begin{equation}
  \kappa
  \leq
  \frac{|\T_j(\xx_0)|}{|\im\hcA_1|}
  \leq 
  2\kappa.
  \label{eq:Tj-sw}
\end{equation}
We assume that $\T_j(\xx_0)$ is constructed by selecting $|\T_j(\xx_0)|$
elements in the ascending order regarding the value
$D(\nu_{\xx_j|\xx_0}\|\mu_{X_j|X_0}|\nu_{\xx_0})$.

Now we prove the theorem. Let $\mm_0\in\M_0$ be a common message and
$\mm_1\in\M_1$ and $\mm_2\in\M_2$ be private messages. Let
$\xx_0\in\X_0^n$ be defined as
\[
  \xx_0\equiv\hg_{A_0A'_0}(\ba_0,\mm_0).
\]
Let $(\xx_1,\xx_2)$ be channel inputs, where $\xx_j\in\X_j^n$ is defined
by
\[
  \xx_j
  \equiv \hg_{A_jA'_j}(\ba_j,\mm_j|\xx_0)
  \quad\text{for each}\ j\in\K.
\]
Let $\yy\in\Y^n$ be the channel output. We define
\begin{align*}
  \cS_0
  &\equiv
  \lrb{(\mm_{\tK},\yy):
    \xx_0\in\T_0\subset\T_{X_0,\gamma}
    \ \text{and}\ \yy\in\Y^n
  }
  \\
  \cS_j
  &\equiv
  \lrb{(\mm_{\tK},\yy):
    \xx_j\in\T_j(\xx_0)\subset\T_{X_j|X_0,\gamma}(\xx_0)
    \ \text{and}\ \yy\in\Y^n
  }
  \\
  \cS_3
  &\equiv
  \lrb{(\mm_{\tK},\yy):
    I(\xx_1;\xx_2|\xx_0)<\gamma+\sum_{j\in\tK}\lrB{\iota_{j}(\gamma)+\e_j}
    \ \text{and}\ \yy\in\Y^n 
  }
  \\
  \cS_4
  &\equiv
  \lrb{(\mm_{\tK},\yy):
    \yy\in\T_{Y|X_{\tK},\gamma}(\xx_{\tK})
  }
  \\
  \cS_5
  &\equiv
  \lrb{(\mm_{\tK},\yy):
    \hg_{A_{\tK}}(\ba_{\tK}|\yy)=\xx_{\tK}
  },
\end{align*}
where $j\in\K$ and
\begin{align}
  I(\xx_1;\xx_2|\xx_0)
  &\equiv 
  \sum_{j\in\{1,2\}}H(\xx_j|\xx_0)-H(\xx_{\{1,2\}}|\xx_0)
  \notag
  \\
  &=
  H(\xx_1|\xx_0)+H(\xx_2|\xx_0)-H(\xx_1,\xx_2|\xx_0).
\end{align}
The error probability is upper bounded by
\begin{align}
  &
  \Error(A_{\tK},A'_{\tK},\ba_{\tK})
  \notag
  \\*
  &
  \leq
  p_{M_{\tK}Y}(\cS_0^c)+
  \sum_{j\in\K}p_{M_{\tK}Y}(\cS_0\cap\cS_j^c)
  +p_{M_{\tK}Y}\lrsb{\lrB{\cap_{j=0}^2\cS_j}\cap\cS_3^c}
  +p_{M_{\tK}Y}(\cS_4^c)
  +p_{M_{\tK}Y}\lrsb{\lrB{\cap_{j=0}^4\cS_j}\cap\cS_{5}^c},
  \label{eq:error-sum-sw}
\end{align}
We assign equation numbers to the conditions
\begin{gather}
  \xx_0\in\T_0\subset\T_{X_0,\gamma}
  \label{eq:MAC-SW0}
  \\
  \xx_j\in\T_j(\xx_0)\subset\T_{X_j|X_0,\gamma}(\xx_0)
  \quad\text{for all}\ j\in\K\equiv{1,2}
  \label{eq:MAC-SWj}
  \\
  \label{eq:MAC-SW3}
  I(\xx_1;\xx_2|\xx_0)<\gamma+\sum_{j\in\tK}\lrB{\iota_{j}(\gamma)+\e_j}
  \\
  \label{eq:MAC-SW4}
  \yy\in\T_{Y|X_{\tK},\gamma}(\xx_{\tK})
  \\
  \hg_{A_{\tK}}(\ba_{\tK}|\yy)=\xx_{\tK}
  \label{eq:MAC-SW5}
\end{gather}
which are referred later. In comparison with the conventional
superposition coding, the condition (\ref{eq:MAC-SW0}) corresponds to an
event where the function $\hg_{A_0A_0'}$ finds a `good' cloud center
$\xx_0$ and the condition (\ref{eq:MAC-SWj}) corresponds to an event
where the function $\hg_{A_jA'_j}$ finds a `good' satellite $\xx_j$
for all $j\in\K$, where `good' means that they are (conditionally)
typical sequences. When (\ref{eq:MAC-SW0})--(\ref{eq:MAC-SW4}) are
satisfied, we have the fact that $(\xx_0,\xx_1,\xx_2)$ is jointly
typical. It should be noted that (\ref{eq:MAC-SW3}) was unnecessary
in the proof of the conventional superposition coding because the joint
typicality of $(\xx_0,\xx_1,\xx_2)$ was naturally satisfied by
generating codewords at random.

In the following, let
\[
  \haa_j\equiv (\ba_j,\mm_j)
  \quad\text{for each}\  j\in\tK.
\]
We assume that the distribution of $\haa_j$ is uniform on $\im\hcA_j$
for all $j\in\tK$, and $\{\sfhA_j,\sfaa_j,M_j\}_{j\in\tK}$ are mutually
independent.

First, we evaluate $E_{\sfhA_{\tK}\sfaa_{\tK}}\lrB{p_{M_{\tK}Y}(\cS_0^c)}$.
From Lemma~\ref{lem:SP} and (\ref{eq:T0-sw}), we have
\begin{align}
  E_{\sfhA_{\tK}\sfaa_{\tK}}\lrB{p_{M_{\tK}Y}(\cS_0^c)}
  &=
  p_{\sfhA_0\sfaa_0M_0}
  \lrsb{\lrb{
      (A_0,A'_0,\ba_0,\mm_0):
      \hg_{A_0A'_0}(\ba_0,\mm_0)\notin\T_0
  }}
  \notag
  \\
  &\leq
  p_{\sfhA_0\sfhaa_0}
  \lrsb{\lrb{
      (\hA_0,\haa_0):
      \T_0\cap\C_{\hA_0}(\haa_0)=\emptyset
  }}
  \notag
  \\
  &\leq
  \alpha_{\sfhA_0}-1
  +\frac{|\im\hcA_0|\lrB{\beta_{\sfhA_0}+1}}
  {|\T_0|}
  \notag
  \\
  &\leq
  \alpha_{\sfhA_0}-1
  +\frac{\beta_{\sfhA_0}+1}
  {\kappa}
  \notag
  \\
  &\leq
  \frac {\delta}6
  \label{eq:error0-sw}
\end{align} 
for all $\delta>0$ and sufficiently large $n$, where the last inequality
comes from (\ref{eq:k1-sw}) and the fact that $\alphahA{0}\to1$ and
$\betahA{0}\to0$ as $n\to\infty$.

Next, we evaluate
$E_{\sfhA_{\tK}\sfaa_{\tK}}\lrB{p_{M_{\tK}Y}(\cS_0\cap\cS_j^c)}$.
From Lemma~\ref{lem:SP} and (\ref{eq:Tj-sw}), we have
\begin{align}
  &
  E_{\sfhA_{\tK}\sfaa_{\tK}}\lrB{p_{M_{\tK}Y}(\cS_0\cap\cS_j^c)}
  \notag
  \\*
  &=
  \sum_{\hA_0,\sfhaa_0}
  p_{\sfhA_0\sfhaa_0}(\hA_0,\haa_0)
  \sum_{\xx_0\in\T_0}
  \chi\lrsb{\hg_{\hA_0}(\haa_0)=\xx_0}
  p_{\sfhA_j\sfhaa_j}
  \lrsb{\lrb{
      (\hA_j,\haa_j):
      \hg_{\hA_j}(\haa_j|\xx_0)\notin\T_j(\xx_0)
  }}
  \notag
  \\
  &\leq
  \sum_{\hA_0,\sfhaa_0}
  p_{\sfhA_0\sfhaa_0}(\hA_0,\haa_0)
  \sum_{\xx_0\in\T_0}
  \chi\lrsb{\hg_{\hA_0}(\haa_0)=\xx_0}
  p_{\sfhA_j\sfhaa_j}
  \lrsb{\lrb{
      (\hA_j,\haa_j):
      \T_j(\xx_0)\cap\C_{\hA_j}(\haa_j)=\emptyset
  }}
  \notag
  \\
  &\leq
  \sum_{\hA_0,\sfhaa_0}
  p_{\sfhA_0\sfhaa_0}(\hA_0,\haa_0)
  \sum_{\xx_0\in\T_0}
  \chi\lrsb{\hg_{\hA_0}(\haa_0)=\xx_0}
  \lrB{
    \alpha_{\sfhA_j}-1
    +\frac{|\im\hcA_j|\lrB{\beta_{\sfhA_j}+1}}
    {|\T_j(\xx_0)|}
  }
  \notag
  \\
  &\leq
  \alpha_{\sfhA_j}-1
  +\frac{\beta_{\sfhA_j}+1}
  {\kappa}
  \notag
  \\
  &\leq
  \frac {\delta}6
  \label{eq:errorj-sw}
\end{align} 
for all $\delta>0$ and sufficiently large $n$, where the last inequality
comes from (\ref{eq:k1-sw}) and the fact that $\alphahA{j}\to1$ and
$\betahA{j}\to0$ as $n\to\infty$.

Next, we evaluate
$E_{\sfhA_{\tK}\sfaa_{\tK}}\lrB{
  p_{M_{\tK}Y}\lrsb{\lrB{\cap_{j=0}^2\cS_j}\cap\cS_3^c}}$.
Assume that $(\xx_0,\xx_1,\xx_2)$ satisfies (\ref{eq:MAC-SW0}),
(\ref{eq:MAC-SWj}) and
\[
  I(\xx_1;\xx_2|\xx_0)
  \geq
  \gamma+\sum_{j\in\tK}\lrB{\iota_{j}(\gamma)+\e_j}.
\]
Then, from Lemma~\ref{lem:diff-entropy}, we have
\begin{gather*}
  |H(\xx_0)-H(X_0)|
  <\iota_{\X_0}(\gamma)
  \\
  |H(\xx_1|\xx_0)-H(X_1|X_0)|
  <\iota_{\X_1|\X_0}(\gamma|\gamma)
  \\
  |H(\xx_2|\xx_0)-H(X_2|X_0)|
  <\iota_{\X_2|\X_0}(\gamma|\gamma).
\end{gather*}
Then we have
\begin{align}
  H(\xx_0,\xx_1,\xx_2)
  &=H(\xx_0)+H(\xx_1|\xx_0)+H(\xx_2|\xx_0)
  -I(\xx_1;\xx_2|\xx_0)
  \notag
  \\
  &\leq
  H(X_0)+H(X_1|X_0)+H(X_2|X_0)
  -\lrB{\gamma+\sum_{j\in\tK}\lrB{\iota_{j}(\gamma)+\e_j}}
  \notag
  \\
  &=\sum_{j\in\tK}\lrB{r_j+R_j}-\gamma,
\end{align}
where the last equality comes from (\ref{eq:r0-sw})--(\ref{eq:r2-sw}).
Since $\xx_j\in\C_{A_jB_j}(\ba_j,\mm_j)$ for all $j\in\tK$, we have
\[
  (\xx_0,\xx_1,\xx_2)
  \in
  \G\cap\C_{\hA_{\tK}}(\haa_{\tK}),
\]
where $\G\subset\X_0^n\times\X_1^n\times\X_2^n$ is defined as
\[
  \G\equiv
  \lrb{
    (\xx_0,\xx_1,\xx_2):
    H(\xx_0,\xx_1,\xx_2)
    <\sum_{j\in\tK}[r_j+R_j]-\gamma
  }.
\]
This implies that
\[
  \G\cap\C_{\hA_{\tK}}(\haa_{\tK})
  \neq\emptyset.
\]
Similarly to the proof of (\ref{eq:error[k+1]}), we have
\begin{align}
  E_{\sfhA_{\tK}\sfaa_{\tK}}\lrB{
    p_{M_{\tK}Y}\lrsb{\lrB{\cap_{j=0}^2\cS_j}\cap\cS_3^c}
  }
  &\leq
  p_{\sfhA_{\tK}\sfaa_{\tK}}
  \lrsb{\lrb{
      (\hA_{\tK},\haa_{\tK}):
      \G\cap\C_{\hA_{\tK}}(\haa_{\tK})\neq\emptyset
  }}
  \notag
  \\
  &\leq
  \sum_{\xx_{\tK}\in\G}
  p_{\sfhA_{\tK}\sfaa_{\tK}}
  \lrsb{\lrb{
      (\hA_{\tK},\haa_{\tK}):
      \hA_j\xx_j=\haa_j
      \ \text{for all}\ j\in\tK
  }}
  \notag
  \\
  &=
  \frac {|\G|}{\prod_{j\in\tK}|\im\hcA_j|}
  \notag
  \\
  &\leq
  \frac{2^{n\lrB{\sum_{j\in\tK}[r_j+R_j]-\gamma+\lambda_{\X_{\tK}}}}}
  {\prod_{j\in\tK}|\im\hcA_j|}
  \notag
  \\*
  &=
  2^{-n[\gamma-\lambda_{\X_{\tK}}]}
  \notag
  \\
  &\leq\frac{\delta}6
  \label{eq:error3-sw}
\end{align} 
for all $\delta>0$ and all sufficiently large $n$.

Next, we evaluate
$E_{\sfhA_{\tK}\sfaa_{\tK}}\lrB{p_{M_{\tK}Y}(\cS_4^c)}$.
Similarly to the proof of (\ref{eq:error[k+2]}), we have
\begin{align}
  E_{\sfhA_{\tK}\sfaa_{\tK}}\lrB{p_{M_{\tK}Y}(\cS_4^c)}
  &
  =
  E_{\sfhA_{\tK}\sfhaa_{\tK}}\left[
    \mu_{Y|X_1X_2}\lrsb{
      \left.
	\lrB{\T_{Y|X_{\tK},\gamma}(\XX_{\tK})}^c
      \right|\XX_1,\XX_2
    }
  \right]
  \notag
  \\*
  &
  =
  E_{\sfhA_{\tK}\sfhaa_{\tK}}\left[
    \mu_{Y|X_{\tK}}\lrsb{
      \left.
	\lrB{\T_{Y|X_{\tK},\gamma}(\XX_{\tK})}^c
      \right|\XX_{\tK}
    }
  \right]
  \notag
  \\
  &\leq
  2^{-n[\gamma-\lambda_{\X_{\tK}\Y}]}
  \notag
  \\
  &
  \leq
  \frac {\delta}6
  \label{eq:error4-sw}
\end{align}
for all $\delta>0$ and all sufficiently large $n$, where
$\XX_{\tK}\equiv\{\XX_i\}_{i\in\tK}$ is defined by
\begin{gather*}
  \XX_0\equiv \hg_{A_0A'_0}(\sfaa_0,M_0)
  \\
  \XX_j\equiv \hg_{A_jA'_j}(\sfaa_j,M_j|\XX_0)
  \quad\text{for each}\ j\in\K.
\end{gather*}

Next, we evaluate
$E_{\sfhA_{\tK}\sfaa_{\tK}}\lrB{
  p_{M_{\tK}Y}\lrsb{\lrB{\cap_{j=0}^4\cS_j}\cap\cS_5^c}
}$.
In the following, we assume (\ref{eq:MAC-SW0})--(\ref{eq:MAC-SW4}) and
\[
  g_{A_0A_1A_2}(\ba_0,\ba_1,\ba_2|\yy)\neq(\xx_0,\xx_1,\xx_2).
\]
Similarly to the proof of (\ref{eq:D}), we have
\begin{align}
  D(\nu_{\xx_{\tK}\yy}\parallel\mu_{X_{\tK}Y})
  &=
  D(\nu_{\yy|\xx_{\tK}}\parallel\mu_{Y|X_{\tK}}|\nu_{\xx_{\tK}})
  +\sum_{j\in\{0,1\}}
  D(\nu_{\xx_j|\xx_0}\parallel\mu_{X_j|X_0}|\nu_{\xx_0})
  +D(\nu_{\xx_0}\parallel\mu_{X_0})
  +I(\xx_1;\xx_2|\xx_0)
  \notag
  \\
  &<5\gamma+\sum_{j\in\tK}\lrB{\iota_{j}(\gamma)+\e_j}
  \notag
  \\
  &
  \leq
  2\sum_{j\in\tK}\e_j,
  \label{eq:D012}
\end{align}
where the last inequality comes from (\ref{eq:e0e1e2e-sw}). This implies
that
\[
  (\xx_0,\xx_1,\xx_2,\yy)\in\T_{X_{\tK}Y,\gamma'}.
\]
where $\gamma'$ is defined as
\[
  \gamma'
  \equiv 2\sum_{j\in\tK}\e_j.
\]
Since $\hg_{A_0A_1A_2}(\ba_0,\ba_1,\ba_2|\yy)\neq(\xx_0,\xx_1,\xx_2)$,
there is $(\xx_0',\xx_1',\xx_2')\in\C_{A_{\tK}}(\ba_{\tK})$ such that
$(\xx_0',\xx_1',\xx_2')\neq(\xx_0,\xx_1,\xx_2)$ and
$(\xx_0',\xx_1',\xx_2',\yy)\in\T_{X_{\tK}Y,\gamma'}$. This implies that
\[
  \lrB{\G(\yy)\setminus\{\xx_{\tK}\}}
  \cap\C_{A_{\tK}}(A_{\tK}\xx_{\tK})
  \neq\emptyset,
\]
where 
\[
  \G(\yy)\equiv
  \lrb{(\xx_0,\xx_1,\xx_2):(\xx_0,\xx_1,\xx_2,\yy)\in\T_{X_{\tK}Y,\gamma'}}.
\]
From Lemma~\ref{lem:typical-trans}, we have the fact that
\[
  \G(\yy)\subset\T_{X_{\tK}|Y,\gamma'}(\yy)
\]
and $(\xx_0,\xx_1,\xx_2)\in\G(\yy)$ implies $\yy\in\T_{Y,\gamma'}$.
Then, from Lemma~\ref{lem:typical-number-bound}, we have
\begin{align}
  |\G_{\tK|\tK^c}(\yy)|
  &\equiv
  |\G(\yy)|
  \notag
  \\
  &\leq 
  |\T_{X_{\tK}|Y,\gamma'}(\yy)|
  \notag
  \\
  &\leq
  2^{n[H(X_{\tK}|Y)+\eta_{\X_{\tK}|\Y}(\gamma'|\gamma')]}.
  \label{eq:typenumber-K-sw}
\end{align}
For each non-empty set $\J\subsetneq\tK$, let
\begin{align*}
  \G_{\X_{\J^c}}(\yy)
  &\equiv
  \lrb{
    \xx_{\J^c}:
    \xx_{\tK}\in\G(\yy)
    \ \text{for some}\ \xx_{\J}\in\X_{\J}^n
  }
  \\
  \G_{\X_{\J}|\X_{\J^c}}(\xx_{\J^c},\yy)
  &\equiv
  \lrb{\xx_{\J}:
    \xx_{\tK}\in\G(\yy)
  }.
\end{align*}
Then, from Lemma~\ref{lem:typical-trans}, we have the fact that
$\xx_{\J^c}\in\G_{\X_{\J^c}}(\yy)$ implies
$(\xx_{\J^c},\yy)\in\T_{X_{\J^c}Y,\gamma'}$ and
\[
  \G_{\X_{\J}|\X_{\J^c}}(\xx_{\J^c},\yy)
  \subset\T_{X_{\J}|X_{\J^c}Y,\gamma'}\lrsb{\xx_{\J^c},\yy}
\]
for every non-empty set $\J\subsetneq\tK$. We have
\begin{align}
  |\G_{\J|\J^c}(\yy)|
  &\equiv
  \max_{\xx_{\J^c}\in\G_{\X_{\J^c}}(\yy)}
  \lrbar{\G_{\X_{\J}|\X_{\J^c}}(\xx_{\J^c},\yy)}
  \notag
  \\
  &\leq
  \max_{(\xx_{\J^c},\yy)\in\T_{X_{\J^c}Y,\gamma'}}
  \lrbar{\T_{X_{\J}|X_{\J^c}Y,\gamma'}(\xx_{\J^c},\yy)}
  \notag
  \\
  &\leq
  2^{n[H(X_{\J}|X_{\J^c},Y)+\eta_{\X_{\J}|\X_{\J^c}\Y}(\gamma'|\gamma')]}
  \notag
  \\
  &\leq
  2^{n[H(X_{\J}|X_{\J^c},Y)+\eta_{\X_{\tK}|\Y}(\gamma'|\gamma')]}
  \label{eq:typenumber-sw}
\end{align}
for every non-empty set $\J\subsetneq\tK$, where the second inequality
comes from Lemma~\ref{lem:typical-number-bound}. Then, from 
(\ref{eq:typenumber-K-sw}), (\ref{eq:typenumber-sw}), and
Lemma~\ref{lem:multi-CRP}, we have
\begin{align}
  E_{\sfA_{\tK}}\lrB{
    \chi(\hg_{\sfA_{\tK}}(\sfA_{\tK}\xx_{\tK}|\yy)\neq\xx_{\tK})
  }
  &\leq
  p_{\sfA_{\tK}}\lrsb{\lrb{
      A_{\tK}:
      \lrB{\G(\yy)\setminus\{\xx_{\tK}\}}
      \cap\C_{A_{\tK}}(A_{\tK}\xx_{\tK})
      \neq\emptyset
  }}
  \notag
  \\
  &\leq
  \sum_{\substack{
      \J\subset\tK\\
      \J\neq\emptyset
  }}
  \frac{
    2^{n\lrB{H(X_{\J}|X_{\J^c},Y)+\eta_{\X_{\tK}|\Y}(\gamma'|\gamma')}}
    \alpha_{\sfA_{\J}}\lrB{\beta_{\sfA_{\J^c}}+1}
  }
  {\prod_{j\in\J}|\im\A_j|}
  +\beta_{\sfA_{\tK}}
  \label{eq:sw-CRP}
\end{align}
for all $(\xx_{\tK},\yy)\in\T_{X_{\tK}Y,\gamma'}$. Then we have
\begin{align}
  &
  E_{\sfhA_{\tK}\sfaa_{\tK}}\lrB{
    p_{M_{\tK}Y}\lrsb{\lrB{\cap_{j=0}^4\cS_j}\cap\cS_5^c}
  }
  \notag
  \\*
  &\leq
  E_{\sfhA_{\tK}\sfhaa_{\tK}}
  \left[
    \sum_{\xx_{\tK}\in\T}
    \chi(\hg_{\sfhA_0}(\sfhaa_0)=\xx_0)
    \lrB{\prod_{j\in\K}\chi(\hg_{\sfhA_j}(\sfhaa_j|\xx_0)=\xx_j)}
    \sum_{\yy\in\T_{Y|X_{\tK},\gamma}(\xx_{\tK})}
    \mu_{Y|X_{\tK}}(\yy|\xx_{\tK})
    \chi(\hg_{\sfA_{\tK}}(\sfaa_{\tK}|\yy)\neq\xx_{\tK})
  \right]
  \notag
  \\
  &\leq
  \sum_{\substack{
      \xx_{\tK}\in\T\\
      \yy\in\T_{Y|X_{\tK},\gamma}(\xx_{\tK})
  }}
  \mu_{Y|X_{\tK}}(\yy|\xx_{\tK})
  E_{\sfA_{\tK}}\lrB{
    \chi(\hg_{\sfA_{\tK}}(\sfA_{\tK}\xx_{\tK}|\yy)\neq\xx_{\tK})
    \prod_{j\in\tK}
    E_{\sfaa_j}\lrB{
      \chi(\sfA_j\xx_j=\sfaa_j)
    }
    E_{\sfA'_jM_j}
    \lrB{
      \chi(\sfA'_j\xx_j=M_j)
    }
  }
  \notag
  \\
  &=
  \frac 1{\prod_{j\in\tK}|\im\hA_j|}
  \sum_{\substack{
      \xx_{\tK}\in\T\\
      \yy\in\T_{Y|X_{\tK},\gamma}(\xx_{\tK})
  }}
  \mu_{Y|X_{\tK}}(\yy|\xx_{\tK})
  E_{\sfA_{\tK}}\lrB{
    \chi(\hg_{\sfA_{\tK}}(\sfA_{\tK}\xx_{\tK}|\yy)\neq\xx_{\tK})
  }
  \notag
  \\
  &\leq
  \frac 1{\prod_{j\in\tK}|\im\hA_j|}
  \sum_{\substack{
      \xx_{\tK}\in\T\\
      \yy\in\T_{Y|X_{\tK},\gamma}(\xx_{\tK})
  }}
  \mu_{Y|X_{\tK}}(\yy|\xx_{\tK})
  \lrB{
    \sum_{\substack{
	\J\subset\tK\\
	\J\neq\emptyset
    }}
    \frac{
      2^{n\lrB{H(X_{\J}|X_{\J^c},Y)+\eta_{\X_{\tK}|\Y}(\gamma'|\gamma')}}
      \alpha_{\sfA_{\J}}\lrB{\beta_{\sfA_{\J^c}}+1}
    }
    {\prod_{j\in\J}|\im\A_j|}
    +\beta_{\sfA_{\tK}}
  }
  \notag
  \\
  &\leq
  8\kappa^3
  \lrB{
    \sum_{\substack{
	\J\subset\tK\\
	\J\neq\emptyset
    }}
    2^{-n\lrB{\sum_{j\in\J}r_j-H(X_{\J}|X_{\J^c},Y)
	-\eta_{\X_{\tK}|\Y}(\gamma'|\gamma')}}
    \alpha_{\sfA_{\J}}\lrB{\beta_{\sfA_{\J^c}}+1}
    +\beta_{\sfA_{\tK}}
  }
  \notag
  \\
  &\leq
  \frac{\delta}6
  \label{eq:error5-sw}
\end{align}
for all $\delta>0$ and all sufficiently large $n$, where $\T$ is defined
as
\[
  \T\equiv\lrb{(\xx_0,\xx_1,\xx_2):
    \xx_0\in\T_0,
    \xx_1\in\T_1(\xx_0),
    \xx_2\in\T_2(\xx_0)
  },
\]
the equality comes from Lemma~\ref{lem:E}, which appears in
Appendix~\ref{sec:lemE}, the third inequality comes from
(\ref{eq:sw-CRP}), the fourth inequality comes from (\ref{eq:rj-sw}),
(\ref{eq:T0-sw}), and (\ref{eq:Tj-sw}), and the last inequality comes
from (\ref{eq:multi-alpha}), (\ref{eq:multi-beta}),
(\ref{eq:r0-sw-CRP})--(\ref{eq:r0r1r2-sw-CRP}), and (\ref{eq:k2-sw}).

Finally, from (\ref{eq:error-sum-sw})--(\ref{eq:errorj-sw}),
(\ref{eq:error3-sw}), (\ref{eq:error4-sw}), and (\ref{eq:error5-sw}),
we have the fact that for all $\delta>0$ and sufficiently large $n$
there are $\{A_j,A_j',\ba_j\}_{j\in\tK}$ satisfying $A_j\in\A$,
$A_j'\in\A_j'$, $\ba_j\in\im\A_j$, and (\ref{eq:error-sw}).
\hfill\QED

\appendix
\subsection{Basic Property of Ensemble}
\label{sec:lemE}

\begin{lem}[{\cite[Lemma 9]{HASH}}]
\label{lem:E}
Assume that random variables $\sfA$ and $\sfaa$ are independent and the
distribution of $\sfaa$ is uniform on $\im\A$. Then,
\[
  E_{\sfaa}\lrB{\chi(A\uu=\sfaa)}
  =\frac 1{|\im\A|}
\]
for any $A\in\A$ and $\uu\in\U^n$, and
\[
  E_{\sfA\sfaa}\lrB{\chi(\sfA\uu=\sfaa)}
  =\frac 1{|\im\A|}
\]
for any $\uu\in\U^n$.
\end{lem}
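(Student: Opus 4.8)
The plan is to evaluate both expectations directly from the definition of expectation, using only that $\sfaa$ is uniformly distributed on $\im\A$ and, for the second identity, that $\sfaa$ is independent of $\sfA$. This is essentially a bookkeeping fact and I do not anticipate any real obstacle; the one point deserving attention is that $A\uu$ always lies in $\im\A$, so that its probability under the law of $\sfaa$ is $1/|\im\A|$ for \emph{every} admissible $A$ and $\uu$.

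First I would treat the first identity. Fixing $A\in\A$ and $\uu\in\U^n$ and expanding the expectation over $\sfaa$ using the definition of $\chi$,
\[
  E_{\sfaa}\lrB{\chi(A\uu=\sfaa)}
  =\sum_{\ba\in\im\A}p_{\sfaa}(\ba)\chi(A\uu=\ba)
  =p_{\sfaa}(A\uu).
\]
Since $A\uu\in\{A\uu':\uu'\in\U^n\}\subset\im\A$ and $p_{\sfaa}$ is uniform on $\im\A$, the right-hand side equals $1/|\im\A|$, which is the first claim.

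Then I would obtain the second identity by conditioning on $\sfA$ and invoking the first. Using the independence of $\sfA$ and $\sfaa$,
\[
  E_{\sfA\sfaa}\lrB{\chi(\sfA\uu=\sfaa)}
  =\sum_{A\in\A}p_{\sfA}(A)\,E_{\sfaa}\lrB{\chi(A\uu=\sfaa)}
  =\sum_{A\in\A}p_{\sfA}(A)\cdot\frac1{|\im\A|}
  =\frac1{|\im\A|},
\]
where the middle equality applies the first identity (which holds for each $A\in\A$) and the last uses $\sum_{A\in\A}p_{\sfA}(A)=1$. The only ``difficulty,'' such as it is, is to keep the uniformity of $\sfaa$ tied to exactly the set $\im\A$ whose size appears in the denominator; everything else is immediate.
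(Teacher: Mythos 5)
Your computation is correct: since $A\uu\in\im\A$ for every $A\in\A$ and $\uu\in\U^n$, uniformity of $\sfaa$ on $\im\A$ gives the first identity, and independence plus averaging over $\sfA$ gives the second. The paper itself does not reproduce a proof but cites \cite[Lemma 9]{HASH}, and your direct expansion is the same elementary argument that lemma rests on, so nothing further is needed.
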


\subsection{Method of Types}
\label{sec:type-theory}

Let $\T_U\subset\U^n$ be a set of all sequences that has the same type
$\nu_U$, where type of $\uu\in\U^n$ is defined by the empirical
distribution $\nu_{\uu}$. Let $\T_{U,\gamma}$ be a set of typical
sequences and $\T_{U|V,\gamma}(\vv)$ be a set of conditionally typical
sequences defined in the beginning of Section~\ref{sec:def}.

\begin{lem}[{\cite[Lemma 2.2]{CK}}]
\label{lem:typebound}
The number of different types of sequences in $\U^n$ is fewer than
$[n+1]^{|\U|}$. The number of conditional types of sequences in
$\U^n\times\V^n$ is fewer than $[n+1]^{|\U||\V|}$.
\end{lem}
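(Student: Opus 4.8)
The plan is to reduce counting types to counting integer-valued symbol tallies, for which the bound is elementary. First I would note that the type $\nu_{\uu}$ of a sequence $\uu\in\U^n$ is completely determined by the tuple of counts $\lrsb{N_u}_{u\in\U}$ with $N_u\equiv\lrbar{\lrb{i\in\lrb{1,\ldots,n}:u_i=u}}$, since $\nu_{\uu}(u)=N_u/n$; conversely, two sequences share a type exactly when their count tuples coincide. Hence the number of types on $\U^n$ equals the number of count tuples that actually occur, which is at most the number of count tuples that could occur.

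Next I would bound that number. Each $N_u$ is an integer in $\lrb{0,1,\ldots,n}$, so a crude count already gives at most $(n+1)^{\lrbar{\U}}$ tuples. To obtain the strict inequality in the statement, I would invoke the constraint $\sum_{u\in\U}N_u=n$: fixing any ordering of $\U$, the final count is forced by the preceding $\lrbar{\U}-1$ counts, so the number of admissible tuples, and hence of types, is at most $(n+1)^{\lrbar{\U}-1}$, which is strictly smaller than $(n+1)^{\lrbar{\U}}$ for $n\geq 1$. This proves the first assertion. (Equivalently, one may simply observe that not every element of $\lrb{0,\ldots,n}^{\lrbar{\U}}$ is realizable as a count tuple, so the bound $(n+1)^{\lrbar{\U}}$ cannot be attained.)

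For the conditional statement I would run the same argument through joint types. A conditional type $\nu_{\uu|\vv}$ is determined by the joint type $\nu_{\uu\vv}$ on $\U\times\V$ via $\nu_{\uu|\vv}(u|v)\nu_{\vv}(v)=\nu_{\uu\vv}(u,v)$ (with the convention adopted in Section~\ref{sec:def} when $\nu_{\vv}(v)=0$), so the number of conditional types of pairs in $\U^n\times\V^n$ is at most the number of joint types. A joint type is in turn determined by the pair counts $N_{u,v}\equiv\lrbar{\lrb{i:u_i=u,\ v_i=v}}$, of which there are $\lrbar{\U}\lrbar{\V}$, each an integer in $\lrb{0,\ldots,n}$, summing to $n$; by the fixing-one-coordinate argument above their number is at most $(n+1)^{\lrbar{\U}\lrbar{\V}-1}<(n+1)^{\lrbar{\U}\lrbar{\V}}$. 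This gives the second assertion.

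There is essentially no obstacle in this proof; the only point that warrants a moment's attention is extracting the strict inequality rather than a non-strict one, and the summation constraint on the counts supplies exactly that.
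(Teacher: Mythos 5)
Your proof is correct and is essentially the standard type-counting argument: the paper itself gives no proof but cites \cite[Lemma 2.2]{CK}, whose proof is exactly this reduction of (joint or conditional) types to integer count tuples with entries in $\{0,\ldots,n\}$. Your extra observation that the constraint $\sum_u N_u=n$ yields the strict inequality is a fine way to justify the ``fewer than'' phrasing.
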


\begin{lem}[{\cite[Lemma 2.3 and 2.5]{CK}}]
\label{lem:typenumber}
Let $\lambda_{\U}$ be defined in (\ref{eq:lambda}). Then
\[
  2^{n[H(U)-\lambda_{\U}]}\leq |\T_U|\leq 2^{nH(U)}.
\]
\end{lem}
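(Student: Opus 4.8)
The plan is to run the classical product-measure argument of the method of types. Fix the common type $P\equiv\nu_U$ of the sequences in $\T_U$, so that $nP(u)$ is a nonnegative integer for each $u\in\U$ and $\sum_{u}P(u)=1$, and write $H(U)\equiv\sum_{u}P(u)\log\frac1{P(u)}$. Equip $\U^n$ with the i.i.d.\ product distribution $P^n$. The first step is the elementary observation that every $\uu\in\T_U$ receives exactly the same mass, $P^n(\uu)=\prod_{u}P(u)^{nP(u)}=2^{-nH(U)}$, so that $P^n(\T_U)=|\T_U|\,2^{-nH(U)}$. The upper bound then drops out immediately from $P^n(\T_U)\le 1$, giving $|\T_U|\le 2^{nH(U)}$.

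For the lower bound, the key — and essentially the only nontrivial — step is to show that, among all type classes of length-$n$ sequences over $\U$, the class $\T_U$ carries the largest $P^n$-probability: $P^n(\T_U)\ge P^n(\T_{U'})$ for every type $U'$. When $P^n(\T_{U'})=0$ this is trivial, so assume the support of $\nu_{U'}$ is contained in the support of $P$, write $|\T_U|=n!/\prod_u(nP(u))!$ and the analogous multinomial coefficient for $\T_{U'}$, and form the ratio $P^n(\T_U)/P^n(\T_{U'})=\prod_u\frac{(nP'(u))!}{(nP(u))!}P(u)^{n(P(u)-P'(u))}$. Bounding each factorial ratio from below by the elementary inequality $m!/k!\ge k^{m-k}$ (valid for all nonnegative integers $m,k$, with $0^0=1$), the factors of $P(u)$ cancel and the remaining product of powers of $n$ has exponent $n\sum_u(P'(u)-P(u))=0$, hence equals $1$. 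I expect this combinatorial estimate to be the main obstacle; everything surrounding it is bookkeeping.

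Finally, I would combine this with Lemma~\ref{lem:typebound}: since the at most $(n+1)^{|\U|}$ type classes of length-$n$ sequences partition $\U^n$, summing the maximality inequality gives $1=\sum_{U'}P^n(\T_{U'})\le (n+1)^{|\U|}P^n(\T_U)$, hence $P^n(\T_U)\ge (n+1)^{-|\U|}$. Substituting $P^n(\T_U)=|\T_U|\,2^{-nH(U)}$ and recalling $\lambda_{\U}=\frac{|\U|\log(n+1)}{n}$ from (\ref{eq:lambda}) yields $|\T_U|\ge (n+1)^{-|\U|}2^{nH(U)}=2^{n[H(U)-\lambda_{\U}]}$, which completes the proof.
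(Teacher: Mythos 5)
Your proof is correct, and it is essentially the argument behind the citation: the paper does not prove this lemma but simply quotes it from Csisz\'ar--K\"orner (Lemmas 2.3 and 2.5), and your derivation --- equiprobability of sequences within the type class giving the upper bound, the multinomial ratio estimate $m!/k!\ge k^{m-k}$ showing that $\T_U$ has maximal $P^n$-probability among type classes, and then dividing $1$ by the at most $(n+1)^{|\U|}$ types of Lemma~\ref{lem:typebound} --- is exactly the standard proof given there. The handling of the degenerate case $P^n(\T_{U'})=0$ and the identification $(n+1)^{-|\U|}=2^{-n\lambda_{\U}}$ are both correct, so nothing is missing.
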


\begin{lem}
\label{lem:typical-number}
For $H\geq 0$, 
\[
  |\lrb{\uu: H(\uu)\leq H}|
  \leq
  2^{n[H+\lambda_{\U}]}
\]
where $\lambda_{\U}$ is defined by (\ref{eq:lambda}).
\end{lem}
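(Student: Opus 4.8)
The plan is to decompose the set $\lrb{\uu: H(\uu)\leq H}$ into type classes and count. The key observation is that $H(\uu)=\sum_u\nu_{\uu}(u)\log\frac1{\nu_{\uu}(u)}$ depends on $\uu$ only through its type $\nu_{\uu}$; hence the condition $H(\uu)\leq H$ is really a condition on types, and $\lrb{\uu: H(\uu)\leq H}$ is a disjoint union of type classes $\T_U$ over those types $\nu_U$ satisfying $H(U)\leq H$.

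First I would bound the size of each such type class. By Lemma~\ref{lem:typenumber}, for any type $\nu_U$ we have $|\T_U|\leq 2^{nH(U)}$, so if $H(U)\leq H$ then $|\T_U|\leq 2^{nH}$. Next I would bound the number of such classes: by Lemma~\ref{lem:typebound} the total number of distinct types of sequences in $\U^n$ is at most $[n+1]^{|\U|}$, which is certainly an upper bound on the number of types with $H(U)\leq H$. Combining these two facts,
\[
  \lrbar{\lrb{\uu: H(\uu)\leq H}}
  \leq [n+1]^{|\U|}\cdot 2^{nH}
  = 2^{n\lambda_{\U}}\cdot 2^{nH}
  = 2^{n[H+\lambda_{\U}]},
\]
using the definition $\lambda_{\U}=\frac{|\U|\log(n+1)}n$ from (\ref{eq:lambda}) in the second equality. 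This gives the claim.

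There is essentially no obstacle here; the argument is a routine method-of-types counting step, and everything needed (the type-counting bound and the type-class size bound) is already available as Lemmas~\ref{lem:typebound} and~\ref{lem:typenumber}. The only point requiring the tiniest care is the remark that $H(\uu)$ is a function of the type alone, so that ``$H(\uu)\leq H$'' partitions cleanly along type classes rather than cutting through them.
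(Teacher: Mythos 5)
Your argument is correct and is essentially identical to the paper's own proof: both decompose the set into type classes, bound each class by $2^{nH(U)}\leq 2^{nH}$ via Lemma~\ref{lem:typenumber}, and bound the number of types by $[n+1]^{|\U|}$ via Lemma~\ref{lem:typebound}, which equals $2^{n\lambda_{\U}}$ by (\ref{eq:lambda}). Nothing further is needed.
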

\begin{proof}
The proof is similar to that of \cite[Lemma 6]{HASH-UNIV}. We have
\begin{align}
  |\lrb{\uu: H(\uu)\leq H}|
  &=
  \sum_{\substack{
      U:
      H(U)\leq H
  }}
  |\T_{U}|
  \notag
  \\
  &\leq
  \sum_{\substack{
      U:
      H(U)\leq H
  }}
  2^{nH(U)}
  \notag
  \\
  &\leq
  \sum_{\substack{
      U:
      H(U)\leq H
  }}
  2^{nH}
  \notag
  \\
  &\leq
  [n+1]^{|\U|}2^{nH}
  \notag
  \\
  &=
  2^{n[H+\lambda_{\U}]},
\end{align}
where the sum is taken over all random variables $U$ corresponding 
the type of a sequence in $\U^n$, the first inequality comes from
Lemma~\ref{lem:typenumber}, and the last inequality comes from
Lemma~\ref{lem:typebound}.
\end{proof}

\begin{lem}[{\cite[Lemma 22]{HASH}\cite[Theorem 2.5]{UYE}}]
\label{lem:typical-trans}
If $\vv\in\T_{V,\gamma}$ and $\uu\in\T_{U|V,\gamma'}(\vv)$, then
$(\uu,\vv)\in\T_{UV,\gamma+\gamma'}$.
If $(\uu,\vv)\in\T_{UV,\gamma}$, then $\uu\in\T_{U,\gamma}$ and
$\uu\in\T_{U|V,\gamma}(\vv)$.
\end{lem}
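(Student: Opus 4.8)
The plan is to derive both implications from the chain rule for divergence together with the non-negativity of (conditional) divergence; no probabilistic estimate is needed, only an algebraic identity and two sign facts.

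First I would record the chain-rule identity
\[
  D(\nu_{\uu\vv}\|\mu_{UV})
  = D(\nu_{\vv}\|\mu_V) + D(\nu_{\uu|\vv}\|\mu_{U|V}|\nu_{\vv})
\]
and, by the same computation with the roles of $U$ and $V$ interchanged,
\[
  D(\nu_{\uu\vv}\|\mu_{UV})
  = D(\nu_{\uu}\|\mu_U) + D(\nu_{\vv|\uu}\|\mu_{V|U}|\nu_{\uu}).
\]
Each identity is obtained by substituting $\nu_{\uu\vv}(u,v)=\nu_{\uu|\vv}(u|v)\nu_{\vv}(v)$ and $\mu_{UV}(u,v)=\mu_{U|V}(u|v)\mu_V(v)$ into the definition of $D(\cdot\|\cdot)$, splitting the logarithm of a product into a sum of two logarithms, and using $\sum_u\nu_{\uu|\vv}(u|v)=1$ for every $v$ with $\nu_{\vv}(v)>0$; the factorization conventions fixed in Section~\ref{sec:def} make every summand carrying a zero weight vanish, so the two sides agree termwise and all sums are well defined.

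Given the identities, the two assertions are immediate. For the first, $\vv\in\T_{V,\gamma}$ means $D(\nu_{\vv}\|\mu_V)<\gamma$ and $\uu\in\T_{U|V,\gamma'}(\vv)$ means $D(\nu_{\uu|\vv}\|\mu_{U|V}|\nu_{\vv})<\gamma'$; adding these through the first identity gives $D(\nu_{\uu\vv}\|\mu_{UV})<\gamma+\gamma'$, i.e. $(\uu,\vv)\in\T_{UV,\gamma+\gamma'}$. For the second, assume $(\uu,\vv)\in\T_{UV,\gamma}$, i.e. $D(\nu_{\uu\vv}\|\mu_{UV})<\gamma$. Since a conditional divergence is non-negative, the second identity gives $D(\nu_{\uu}\|\mu_U)\le D(\nu_{\uu\vv}\|\mu_{UV})<\gamma$, hence $\uu\in\T_{U,\gamma}$; and since $D(\nu_{\vv}\|\mu_V)\ge 0$, the first identity gives $D(\nu_{\uu|\vv}\|\mu_{U|V}|\nu_{\vv})\le D(\nu_{\uu\vv}\|\mu_{UV})<\gamma$, hence $\uu\in\T_{U|V,\gamma}(\vv)$.

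I do not expect a genuine obstacle here; the only point requiring a little care is the bookkeeping around zero probabilities when writing the chain rule — namely the summands with $\nu_{\vv}(v)=0$ (which drop out) and the degenerate case $\mu_{U|V}(u|v)=0$ while $\nu_{\uu|\vv}(u|v)>0$ (where the divergence is $+\infty$ and every inequality above holds trivially, consistently with no such sequence being typical). I would state these conventions once at the outset so that the three displayed inequalities then follow with no further computation.
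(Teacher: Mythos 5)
Your proof is correct. The paper itself gives no proof of this lemma, citing \cite[Lemma 22]{HASH} and \cite[Theorem 2.5]{UYE}, and your argument---the chain rule $D(\nu_{\uu\vv}\|\mu_{UV})=D(\nu_{\vv}\|\mu_V)+D(\nu_{\uu|\vv}\|\mu_{U|V}|\nu_{\vv})$ (and its counterpart with $U$ and $V$ interchanged) combined with non-negativity of the divergence terms, with the stated care about zero-probability summands---is exactly the standard argument behind the cited results.
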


\begin{lem}[{\cite[Theorem 2.6]{UYE}}]
\label{lem:type}
If $\uu\in\T_{U,\gamma}$, then
\[
  \sum_{u\in\U}|\nu_{\uu}(u)-\mu_U(u)|\leq \sqrt{2\gamma}
\]
\end{lem}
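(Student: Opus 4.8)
This lemma is the Csisz\'ar--Kullback--Pinsker inequality applied to $p\equiv\nu_{\uu}$ and $q\equiv\mu_U$: the hypothesis $\uu\in\T_{U,\gamma}$ means exactly $D(p\|q)<\gamma$ by the definition of $\T_{U,\gamma}$, so it is enough to show that for any pair of probability distributions $p,q$ on $\U$,
\[
  \lrsb{\sum_{u\in\U}|p(u)-q(u)|}^2\le 2\ln 2\cdot D(p\|q),
\]
with $D$ in base $2$ as throughout the paper; since $\ln 2<1$ this yields $\sum_{u\in\U}|p(u)-q(u)|\le\sqrt{2\ln 2\cdot\gamma}<\sqrt{2\gamma}$. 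The plan is the classical reduction of Pinsker's inequality to a binary alphabet.

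The first step is to pass to two points. Put $A\equiv\lrb{u\in\U:p(u)\ge q(u)}$, $\bar p\equiv\sum_{u\in A}p(u)$, and $\bar q\equiv\sum_{u\in A}q(u)$. Splitting the sum over $A$ and $A^c$ and using $\bar p\ge\bar q$ shows $\bar p-\bar q=\frac12\sum_{u\in\U}|p(u)-q(u)|$, i.e.\ the total variation distance is attained on $A$. Applying the log-sum inequality separately over $A$ and over $A^c$ gives
\[
  D(p\|q)\ge d_2(\bar p\|\bar q),\qquad
  d_2(\bar p\|\bar q)\equiv\bar p\log\frac{\bar p}{\bar q}+(1-\bar p)\log\frac{1-\bar p}{1-\bar q}.
\]
Hence it remains to prove $4(\bar p-\bar q)^2\le 2\ln 2\cdot d_2(\bar p\|\bar q)$, equivalently $2(\bar p-\bar q)^2\le d_{\mathrm{e}}(\bar p\|\bar q)$ where $d_{\mathrm{e}}\equiv(\ln 2)\,d_2$ is the binary divergence in nats.

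For this two-point estimate I would fix $\bar p\in[0,1]$ and study $h(x)\equiv d_{\mathrm{e}}(\bar p\|x)-2(\bar p-x)^2$ on $x\in(0,1)$. Then $h(\bar p)=0$, and a direct differentiation gives
\[
  h'(x)=\frac{x-\bar p}{x(1-x)}-4(x-\bar p)=(x-\bar p)\lrB{\frac{1}{x(1-x)}-4},
\]
where the bracket is nonnegative because $x(1-x)\le\frac14$. Thus $h'$ has the same sign as $x-\bar p$, so $h$ attains its minimum at $x=\bar p$ and $h\ge h(\bar p)=0$ on $(0,1)$ (the degenerate cases $\bar q\in\{0,1\}$ being immediate). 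Combining the three steps,
\[
  \lrsb{\sum_{u\in\U}|p(u)-q(u)|}^2=4(\bar p-\bar q)^2\le 2\,d_{\mathrm{e}}(\bar p\|\bar q)=2\ln 2\cdot d_2(\bar p\|\bar q)\le 2\ln 2\cdot D(p\|q)<2\gamma,
\]
and taking square roots completes the proof (in fact with strict inequality).

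I expect the only point requiring care to be the bookkeeping around the logarithm base: one must check that the factor $\ln 2$ produced by converting between natural and base-$2$ logarithms comfortably covers the difference between the sharp Pinsker bound $\sqrt{2\ln 2\,\gamma}$ and the looser bound $\sqrt{2\gamma}$ stated in the lemma. The remaining ingredients — identifying $\bar p-\bar q$ with the total variation distance, the log-sum step, and the one-variable calculus for $h$ — are routine.
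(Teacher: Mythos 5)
Your proposal is correct and follows essentially the same route as the paper: the paper's entire proof is the three-line chain $\sum_{u}|\nu_{\uu}(u)-\mu_U(u)|\leq\sqrt{2D(\nu_{\uu}\parallel\mu_U)/\log_2 e}\leq\sqrt{2\gamma/\log_2 e}\leq\sqrt{2\gamma}$, i.e.\ Pinsker's inequality (imported by citation as \cite[Lemma 12.6.1]{CT}) followed by exactly the logarithm-base slack $\ln 2<1$ that you exploit. The only difference is that you reprove Pinsker's inequality from scratch (the two-point reduction via the log-sum inequality and the one-variable calculus bound, all of which check out), whereas the paper simply cites it.
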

\begin{proof}
The statement is shown by
\begin{align}
  \sum_{u\in\U}|\nu_{\uu}(u)-\mu_U(u)|
  &\leq\sqrt{\frac{2D(\nu_{\uu}\parallel\mu_U)}{\log_2 e}}
  \notag
  \\
  &\leq \sqrt{\frac{2\gamma}{\log_2 e}}
  \notag
  \\
  &\leq \sqrt{2\gamma},
\end{align}
where $e$ is the base of the natural logarithm and the first inequality
comes from \cite[Lemma 12.6.1]{CT}.
\end{proof}

\begin{lem}
\label{lem:diff-entropy}
Let $0<\gamma\leq 1/8$. If $\vv\in\T_{V,\gamma}$, and
 $\uu\in\T_{U|V,\gamma'}(\vv)$, then
\begin{gather*}
  |H(\vv)-H(V)|\leq\iota_{\V}(\gamma)
  \\
  |H(\uu|\vv)-H(U|V)|\leq\iota_{\U|\V}(\gamma'|\gamma),
\end{gather*}
where $\iota_{\U}$ and $\iota_{\U|\V}$ are defined by (\ref{eq:iota})
and (\ref{eq:iotac}), respectively.
\end{lem}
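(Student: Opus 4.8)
The plan is to obtain both inequalities from the $L^1$ (total-variation) continuity of Shannon entropy, fed with the total-variation estimates that typicality supplies. For the unconditional statement, recall that $\vv\in\T_{V,\gamma}$ means $D(\nu_{\vv}\|\mu_V)<\gamma$, so Lemma~\ref{lem:type} gives $\sum_{v}\lrbar{\nu_{\vv}(v)-\mu_V(v)}\leq\sqrt{2\gamma}\leq 1/2$, the last step using the hypothesis $\gamma\leq 1/8$. I would then invoke the standard uniform-continuity estimate for entropy (see \cite{CK}): for probability distributions $p,q$ on a finite set $\cS$ with $\sum_{s}|p(s)-q(s)|=\theta\leq 1/2$ one has $|H(p)-H(q)|\leq -\theta\log\frac{\theta}{|\cS|}$. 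Since $t\mapsto -t\log\frac{t}{|\cS|}$ is increasing on $[0,|\cS|/e]\supseteq[0,1/2]$ (the case $|\cS|=1$ being trivial), enlarging the argument from $\sum_v\lrbar{\nu_{\vv}(v)-\mu_V(v)}$ to $\sqrt{2\gamma}$ only weakens the bound, which yields $\lrbar{H(\vv)-H(V)}\leq -\sqrt{2\gamma}\log\frac{\sqrt{2\gamma}}{|\V|}=\iota_{\V}(\gamma)$.

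For the conditional statement I would separate the drift of the conditional law from the drift of the $V$-marginal. Put $r_{\vv}(u,v)\equiv\nu_{\vv}(v)\mu_{U|V}(u|v)$, a probability distribution on $\U\times\V$ with $\V$-marginal $\nu_{\vv}$. A direct computation shows $D(\nu_{\uu\vv}\|r_{\vv})=D(\nu_{\uu|\vv}\|\mu_{U|V}|\nu_{\vv})<\gamma'$, so Pinsker's inequality (exactly as used in the proof of Lemma~\ref{lem:type}) gives $\sum_{u,v}\lrbar{\nu_{\uu\vv}(u,v)-r_{\vv}(u,v)}\leq\sqrt{2\gamma'}$. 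Applying the entropy-continuity estimate to $\nu_{\uu\vv}$ and $r_{\vv}$ on the set $\U\times\V$ (of cardinality $|\U||\V|$), and using the chain rules $H(\nu_{\uu\vv})=H(\uu|\vv)+H(\vv)$ and $H(r_{\vv})=H(\vv)+\sum_{v}\nu_{\vv}(v)H(\mu_{U|V}(\cdot|v))$, the common term $H(\vv)$ cancels and one gets
\[
  \lrbar{H(\uu|\vv)-\sum_{v}\nu_{\vv}(v)H(\mu_{U|V}(\cdot|v))}
  \leq -\sqrt{2\gamma'}\log\frac{\sqrt{2\gamma'}}{|\U||\V|}.
\]
It remains to replace $\nu_{\vv}$ by $\mu_V$ in the averaged conditional entropy: since $0\leq H(\mu_{U|V}(\cdot|v))\leq\log|\U|$, Lemma~\ref{lem:type} gives $\lrbar{\sum_{v}\nu_{\vv}(v)H(\mu_{U|V}(\cdot|v))-H(U|V)}\leq\sqrt{2\gamma}\log|\U|$, and the triangle inequality assembles the two pieces into $\iota_{\U|\V}(\gamma'|\gamma)$.

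The only genuinely nontrivial input is the uniform-continuity estimate for entropy in the precise shape $-\theta\log(\theta/|\cS|)$, whose usual proof needs the total-variation distance to be at most $1/2$; this is what the condition $\gamma\leq1/8$ secures for the unconditional bound, and one reads the hypothesis as also forcing $\sqrt{2\gamma'}\leq1/2$ for the conditional bound (in every use of this lemma $\gamma'=\gamma\leq1/8$). The remainder is bookkeeping — in particular, choosing the hybrid distribution $r_{\vv}$ so that the cardinality inside the logarithm comes out as $|\U||\V|$ rather than $|\U|$, and keeping the conditional-law drift ($\propto\sqrt{2\gamma'}$) and the marginal drift ($\propto\sqrt{2\gamma}$) separated so that their coefficients match the definition~(\ref{eq:iotac}) of $\iota_{\U|\V}$.
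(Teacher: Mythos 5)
Your proposal is correct and follows essentially the same route as the paper's proof: the same hybrid distribution $\mu_{U|V}(\cdot|v)\nu_{\vv}(v)$, the same reduction of the conditional-divergence bound to the joint-entropy continuity estimate on $\U\times\V$ (giving the $-\sqrt{2\gamma'}\log\frac{\sqrt{2\gamma'}}{|\U||\V|}$ term), the same $\sqrt{2\gamma}\log|\U|$ bound for swapping $\nu_{\vv}$ with $\mu_V$ in the averaged conditional entropy, and the same final triangle inequality. If anything, your marginal-swap step is stated for the average of $H(\mu_{U|V}(\cdot|v))$, which is exactly what the paper's concluding triangle inequality requires, so no gap remains.
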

\begin{proof}
From \cite[Lemma 2.7]{CK}, we have
\begin{equation}
  |H(p)-H(q)|\leq -\theta\log\frac{\theta}{|\U|}
  \label{eq:diff-H}
\end{equation}
for any $\theta$ and probability distributions $p$ and $q$ on $\V$
 satisfying
\[
  \sum_{v\in\V}|p(v)-q(v)|\leq \theta\leq \frac 12.
\]
Then the first inequality is shown by this fact and Lemma \ref{lem:type}.

Next we prove the second inequality. Let $\nu_{\uu|\vv}\nu_{\vv}$ and
$\mu_{U|V}\nu_{\vv}$ be defined as
\begin{align*}
  \nu_{\uu|\vv}\nu_{\vv}(u,v)
  &\equiv\nu_{\uu|\vv}(u|v)\nu_{\vv}(v)
  \\
  \mu_{U|V}\nu_{\vv}(u,v)
  &\equiv\mu_{U|V}(u|v)\nu_{\vv}(v),
\end{align*}
respectively. Since
\begin{align}
  D(\nu_{\uu|\vv}\nu_{\vv}\parallel\mu_{U|V}\nu_{\vv})
  &=
  \sum_{u,v}
  \nu_{\uu|\vv}(u|v)\nu_{\vv}(v)
  \log\frac{\nu_{\uu|\vv}(u|v)}{\mu_{U|V}(u|v)}
  \notag
  \\
  &=D(\nu_{\uu|\vv}\parallel\mu_{U|V}|\nu_{\vv})
  \notag
  \\
  &<\gamma'
\end{align}
we have
\begin{align}
  &|H(\nu_{\uu|\vv}|\nu_{\vv})-H(\mu_{U|V}|\nu_{\vv})|
  \notag
  \\*
  &=
  \left|
    \sum_{u,v}
    \nu_{\uu|\vv}(u|v)\nu_{\vv}(v)
    \log\frac1{\nu_{\uu|\vv}(u|v)}
    -\sum_{u,v}
    \mu_{U|V}(u|v)\nu_{\vv}(v)
    \log\frac1{\mu_{U|V}(u|v)}
  \right|
  \notag
  \\
  &=
  \left|
    \sum_{u,v}
    \nu_{\uu|\vv}(u|v)\nu_{\vv}(v)
    \log\frac{\nu_{\vv}(v)}{\nu_{\uu|\vv}(u|v)\nu_{\vv}(v)}
    -\sum_{u,v}
    \mu_{U|V}(u|v)\nu_{\vv}(v)
    \log\frac{\nu_{\vv}(v)}{\mu_{U|V}(u|v)\nu_{\vv}(v)}
  \right|
  \notag
  \\
  &=
  |H(\nu_{\uu|\vv}\nu_{\vv})-H(\mu_{U|V}\nu_{\vv})|
  \notag
  \\
  &\leq
  \iota_{\U\V}(\gamma'),
  \label{eq:diff-Hc}
\end{align}
where the last inequality comes from (\ref{eq:diff-H}). We have
\begin{align}
  |H(\nu_{\uu|\vv}|\nu_{\vv})-H(\nu_{\uu|\vv}|\mu_{V})|
  &=
  \left|
    \sum_{u,v}
    \nu_{\uu|\vv}(u|v)\nu_{\vv}(v)
    \log\frac1{\nu_{\uu|\vv}(u|v)}
    -\sum_{u,v}
    \nu_{\uu|\vv}(u|v)\mu_{V}(v)
    \frac1{\nu_{\uu|\vv}(u|v)}
  \right|
  \notag
  \\
  &\leq
  \sum_{v}
  |\nu_{\vv}(v)-\mu_{V}(v)|
  \sum_{u}
  \nu_{\uu|\vv}(u|v)
  \log\frac1{\nu_{\uu|\vv}(u|v)}
  \notag
  \\
  &=
  \sum_{v}
  |\nu_{\vv}(v)-\mu_{V}(v)|
  H(\nu_{\uu|\vv}(\cdot|v))
  \notag
  \\
  &\leq
  \sqrt{2\gamma}\log|\U|,
  \label{eq:diff-cH}
\end{align}
where the last inequality comes from Lemma \ref{lem:type} and the fact
that $H(\nu_{\uu|\vv}(\cdot|v))\leq \log|\U|$. From (\ref{eq:diff-Hc})
and (\ref{eq:diff-cH}), we have
\begin{align}
  |H(\uu|\vv)-H(U|V)|
  &\leq
  |H(\nu_{\uu|\vv}|\nu_{\vv})-H(\mu_{U|V}|\nu_{\vv})|
  +|H(\mu_{U|V}|\nu_{\vv})-H(U|V)|
  \notag
  \\
  &\leq \iota_{\U|\V}(\gamma'|\gamma).
\end{align}
\end{proof}

\begin{lem}[{\cite[Lemma 26]{HASH}\cite[Theorem 2.8]{UYE}}]
\label{lem:typical-prob}
For any $\gamma>0$, and $\vv\in\V^n$,
\begin{align*}
  \mu_U([\T_{U,\gamma}]^c)
  &\leq
  2^{-n[\gamma-\lambda_{\U}]}
  \\
  \mu_{U|V}([\T_{U|V,\gamma}(\vv)]^c|\vv)
  &\leq
  2^{-n[\gamma-\lambda_{\U\V}]},
\end{align*}
where $\lambda_{\U}$ and $\lambda_{\U\V}$ are defined in (\ref{eq:lambda}).
\end{lem}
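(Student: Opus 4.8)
The plan is to carry out the classical method-of-types argument, using only Lemmas~\ref{lem:typebound} and~\ref{lem:typenumber} established above. First I would prove the unconditional bound by partitioning $\U^n$ according to type. For a memoryless distribution every $\uu\in\U^n$ with $\mu_U(\uu)>0$ satisfies
\[
  \mu_U(\uu)=\prod_{i=1}^n\mu_U(u_i)
  =\prod_{u\in\U}\mu_U(u)^{n\nu_{\uu}(u)}
  =2^{-n[H(\nu_{\uu})+D(\nu_{\uu}\parallel\mu_U)]},
\]
which depends on $\uu$ only through its type. Hence, for any type $P$, the type class $\lrb{\uu:\nu_{\uu}=P}$ has probability $|\lrb{\uu:\nu_{\uu}=P}|\cdot2^{-n[H(P)+D(P\parallel\mu_U)]}$, and by the cardinality bound $|\lrb{\uu:\nu_{\uu}=P}|\leq 2^{nH(P)}$ of Lemma~\ref{lem:typenumber} this is at most $2^{-nD(P\parallel\mu_U)}$. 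Since $[\T_{U,\gamma}]^c$ is exactly the union of the type classes with $D(P\parallel\mu_U)\geq\gamma$, summing this estimate over all such $P$ and using Lemma~\ref{lem:typebound} to bound their number by $[n+1]^{|\U|}=2^{n\lambda_{\U}}$ gives
\[
  \mu_U([\T_{U,\gamma}]^c)\leq [n+1]^{|\U|}\,2^{-n\gamma}=2^{-n[\gamma-\lambda_{\U}]}.
\]

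Next I would prove the conditional bound by fixing $\vv\in\V^n$ and repeating the argument with the conditional type $\nu_{\uu|\vv}$ in place of the type. For any conditional type $Q$ one has, analogously,
\[
  \mu_{U|V}(\uu|\vv)=\prod_{i=1}^n\mu_{U|V}(u_i|v_i)
  =2^{-n[H(Q|\nu_{\vv})+D(Q\parallel\mu_{U|V}|\nu_{\vv})]}
\]
for every $\uu$ with $\nu_{\uu|\vv}=Q$ (and $\mu_{U|V}(\uu|\vv)>0$), where the entropy and divergence are the conditional quantities with respect to the empirical distribution $\nu_{\vv}$. The number of such $\uu$ is at most $2^{nH(Q|\nu_{\vv})}$, obtained by viewing $\uu$ as the concatenation of the blocks of coordinates on which $\vv$ takes each fixed value $v\in\V$ and applying the single-type cardinality bound of Lemma~\ref{lem:typenumber} to each block. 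Thus every conditional-type class has conditional probability at most $2^{-nD(Q\parallel\mu_{U|V}|\nu_{\vv})}$, and summing over all conditional types $Q$ with $D(Q\parallel\mu_{U|V}|\nu_{\vv})\geq\gamma$---whose number is at most $[n+1]^{|\U||\V|}=2^{n\lambda_{\U\V}}$ by Lemma~\ref{lem:typebound}---yields $\mu_{U|V}([\T_{U|V,\gamma}(\vv)]^c|\vv)\leq 2^{-n[\gamma-\lambda_{\U\V}]}$, uniformly in $\vv$.

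The argument is almost entirely bookkeeping, and the one place that needs a little care is the conditional-type counting in the second part: checking that there are at most $[n+1]^{|\U||\V|}$ conditional types for a fixed $\vv$ and that $|\lrb{\uu:\nu_{\uu|\vv}=Q}|\leq 2^{nH(Q|\nu_{\vv})}$. Both follow from the block-decomposition reduction to the unconditional statements of Lemmas~\ref{lem:typebound} and~\ref{lem:typenumber} indicated above. Everything else reduces to the identity $\mu_U(\uu)=2^{-n[H(\nu_{\uu})+D(\nu_{\uu}\parallel\mu_U)]}$ and its conditional analogue, which are immediate from the product form of the memoryless distributions together with the definitions of $H$ and $D$.
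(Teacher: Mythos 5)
Your proof is correct: the type-class probability bound $2^{-nD(P\parallel\mu_U)}$ combined with the type-counting bounds of Lemmas~\ref{lem:typebound} and~\ref{lem:typenumber} (and its conditional, block-decomposed analogue) is exactly the standard method-of-types argument. The paper itself does not prove this lemma but cites it from \cite[Lemma 26]{HASH} and \cite[Theorem 2.8]{UYE}, whose proofs proceed in essentially the same way as yours.
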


\begin{lem}[{\cite[Lemma 27]{HASH}\cite[Theorem 2.9]{UYE}}]
\label{lem:typical-number-bound}
For any $\gamma>0$, $\gamma'>0$, and $\vv\in\T_{V,\gamma}$,
\begin{align*}
  \left|
    \frac 1{n}\log |\T_{U,\gamma}| - H(U)
  \right|
  &\leq
  \eta_{\U}(\gamma)
  \\
  \left|
    \frac 1{n}\log |\T_{U|V,\gamma'}(\vv)| - H(U|V)
  \right|
  &\leq
  \eta_{\U|\V}(\gamma'|\gamma),
\end{align*}
where $\eta_{\U}(\gamma)$ and $\eta_{\U|\V}(\gamma'|\gamma)$ are defined
in (\ref{eq:def-eta}) and (\ref{eq:def-etac}), respectively.
\end{lem}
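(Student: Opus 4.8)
The plan is to prove the two displayed inequalities by treating the upper and lower bounds separately, and in each case by first establishing the unconditional estimate for $\T_{U,\gamma}$ and then obtaining the conditional estimate for $\T_{U|V,\gamma'}(\vv)$ by the same argument applied block by block over the positions at which $\vv$ is constant. All the ingredients are already available: the entropy-continuity estimates of Lemma~\ref{lem:diff-entropy}, the type-counting bounds of Lemmas~\ref{lem:typebound}, \ref{lem:typenumber}, and~\ref{lem:typical-number}, the probability estimate of Lemma~\ref{lem:typical-prob}, and the bookkeeping identities $\eta_{\U}(\gamma)=\iota_{\U}(\gamma)+\lambda_{\U}$ and $\eta_{\U|\V}(\gamma'|\gamma)=\iota_{\U|\V}(\gamma'|\gamma)+\lambda_{\U\V}$ that follow at once from (\ref{eq:def-eta}) and (\ref{eq:def-etac}).

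For the upper bounds I would argue by inclusion. Every $\uu\in\T_{U,\gamma}$ satisfies the hypothesis of the first part of Lemma~\ref{lem:diff-entropy} (with the roles of $V,\vv$ played by $U,\uu$), hence $H(\uu)\leq H(U)+\iota_{\U}(\gamma)$, so $\T_{U,\gamma}\subset\lrb{\uu:H(\uu)\leq H(U)+\iota_{\U}(\gamma)}$; Lemma~\ref{lem:typical-number} then gives $|\T_{U,\gamma}|\leq 2^{n[H(U)+\iota_{\U}(\gamma)+\lambda_{\U}]}=2^{n[H(U)+\eta_{\U}(\gamma)]}$. For the conditional set, the second part of Lemma~\ref{lem:diff-entropy} gives $H(\uu|\vv)\leq H(U|V)+\iota_{\U|\V}(\gamma'|\gamma)$ for every $\uu\in\T_{U|V,\gamma'}(\vv)$, and the conditional counterpart of Lemma~\ref{lem:typical-number}---for a fixed $\vv$ there are at most $[n+1]^{|\U||\V|}$ conditional types relative to $\vv$ (Lemma~\ref{lem:typebound}) and at most $2^{nH(\uu|\vv)}$ sequences $\uu$ of any one such type (this is just the bound $|\T_U|\leq 2^{nH(U)}$ from Lemma~\ref{lem:typenumber} applied to the block of positions on which $\vv$ equals each symbol, then multiplied over symbols)---yields $|\T_{U|V,\gamma'}(\vv)|\leq 2^{n[H(U|V)+\iota_{\U|\V}(\gamma'|\gamma)+\lambda_{\U\V}]}=2^{n[H(U|V)+\eta_{\U|\V}(\gamma'|\gamma)]}$.

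For the lower bounds I would instead count the probability mass sitting on the typical set. Using $\mu_U(\uu)=2^{-n[H(\uu)+D(\nu_{\uu}\|\mu_U)]}\leq 2^{-nH(\uu)}$ together with $H(\uu)\geq H(U)-\iota_{\U}(\gamma)$ from Lemma~\ref{lem:diff-entropy}, every $\uu\in\T_{U,\gamma}$ has $\mu_U(\uu)\leq 2^{-n[H(U)-\iota_{\U}(\gamma)]}$; on the other hand Lemma~\ref{lem:typical-prob} gives $\mu_U(\T_{U,\gamma})\geq 1-2^{-n[\gamma-\lambda_{\U}]}$, which for all sufficiently large $n$ exceeds $[n+1]^{-|\U|}=2^{-n\lambda_{\U}}$. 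Hence $2^{-n\lambda_{\U}}\leq\mu_U(\T_{U,\gamma})\leq|\T_{U,\gamma}|\cdot 2^{-n[H(U)-\iota_{\U}(\gamma)]}$, i.e.\ $|\T_{U,\gamma}|\geq 2^{n[H(U)-\eta_{\U}(\gamma)]}$. The conditional case is verbatim the same after replacing $\mu_U$, $H(\uu)$, $\nu_{\uu}$, $\lambda_{\U}$, $\iota_{\U}(\gamma)$ by $\mu_{U|V}(\cdot|\vv)$, $H(\uu|\vv)$, $\nu_{\uu|\vv}$, $\lambda_{\U\V}$, $\iota_{\U|\V}(\gamma'|\gamma)$, using $\mu_{U|V}(\uu|\vv)=2^{-n[H(\uu|\vv)+D(\nu_{\uu|\vv}\|\mu_{U|V}|\nu_{\vv})]}$ and the second parts of Lemmas~\ref{lem:typical-prob} and~\ref{lem:diff-entropy}.

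The step I expect to be the main obstacle is the lower bound: unlike the upper bound it cannot be obtained from a direct inclusion/counting argument, and one must route it through the total mass $\mu_U(\T_{U,\gamma})$ and then verify that this mass is large enough (at least $2^{-n\lambda_{\U}}$) to be absorbed by the $\lambda_{\U}$ slack already built into $\eta_{\U}$. This is exactly where largeness of $n$ enters (so that $\gamma>\lambda_{\U}$ and $\gamma'>\lambda_{\U\V}$), which incidentally also guarantees that $\T_{U,\gamma}$ and $\T_{U|V,\gamma'}(\vv)$ are nonempty so that $\frac1n\log|\cdot|$ is meaningful; and for the entropy-continuity inputs one may additionally need $\gamma$ (and $\gamma'$) small enough that Lemma~\ref{lem:diff-entropy} applies. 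After those points are settled, everything else is routine manipulation of the definitions.
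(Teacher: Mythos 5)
The paper contains no proof of this lemma to compare against: it is imported as is from \cite[Lemma 27]{HASH} and \cite[Theorem 2.9]{UYE}. Judged on its own, your argument is the standard method-of-types derivation and is essentially sound. The upper bounds follow, as you say, from the inclusion $\T_{U,\gamma}\subset\{\uu: H(\uu)\le H(U)+\iota_{\U}(\gamma)\}$ given by Lemma~\ref{lem:diff-entropy}, combined with Lemma~\ref{lem:typical-number} and its conditional analogue, which you correctly reconstruct from Lemmas~\ref{lem:typebound} and~\ref{lem:typenumber} by counting within each block of positions where $\vv$ takes a fixed symbol; the identities $\eta_{\U}(\gamma)=\iota_{\U}(\gamma)+\lambda_{\U}$ and $\eta_{\U|\V}(\gamma'|\gamma)=\iota_{\U|\V}(\gamma'|\gamma)+\lambda_{\U\V}$ are exactly right. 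The lower bounds via $\mu_U(\uu)=2^{-n[H(\uu)+D(\nu_{\uu}\|\mu_U)]}\le 2^{-n[H(U)-\iota_{\U}(\gamma)]}$ on the typical set, Lemma~\ref{lem:typical-prob}, and the observation that the total typical mass exceeds $2^{-n\lambda_{\U}}$ (resp. $2^{-n\lambda_{\U\V}}$) are also correct, and the conditional case does carry over verbatim since Lemma~\ref{lem:typical-prob} holds for every $\vv$. The two caveats you flag are genuine but benign: the entropy-continuity input requires $\gamma,\gamma'\le 1/8$ (the hypothesis of Lemma~\ref{lem:diff-entropy}), and the lower bound requires $n$ large enough that $\gamma>\lambda_{\U}$, $\gamma'>\lambda_{\U\V}$; some such restriction is unavoidable, since for small $n$ the set $\T_{U,\gamma}$ can be empty and the stated inequality could not hold literally, and the paper only ever invokes the lemma for sufficiently large $n$ and small $\gamma$, so your proof covers its actual use. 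A slightly more common alternative for the lower bound is to exhibit one whole type class inside $\T_{U,\gamma}$ (a type within $O(1/n)$ of $\mu_U$) and apply Lemma~\ref{lem:typenumber} to it, but that route needs large $n$ for the same reason, and your mass-counting argument is equally valid.
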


\section*{Acknowledgements}
We thank Prof.~T.S.~Han for helpful discussions. Constructive comments,
suggestions, and references by anonymous reviewers have significantly
improved the presentation of our results.

\end{document}